\newcommand\pst{\bgroup\markoverwith{\textcolor{purple}{\rule[0.5ex]{2pt}{0.4pt}}}\ULon}
\newcommand\bst{\bgroup\markoverwith{\textcolor{blue}{\rule[0.5ex]{2pt}{0.4pt}}}\ULon}
\newtheorem{theorem}{Theorem}
\newtheorem{corollary}[theorem]{Corollary}
\newtheorem{lemma}[theorem]{Lemma}
\newtheorem{proposition}[theorem]{Proposition}
\newtheorem{observation}[theorem]{Observation}
\newcommand{\newreptheorem}[2]{\newtheorem*{rep@#1}{\rep@title}\newenvironment{rep#1}[1]{\def\rep@title{#2 \ref*{##1}}\begin{rep@#1}}{\end{rep@#1}}}
\newcommand{\bra}[1]{\langle #1|}
\newcommand{\ket}[1]{|#1\rangle}
\newcommand{\be}{\begin{equation}}
\newcommand{\ee}{\end{equation}}
\newcommand{\bea}{\begin{eqnarray}}
\newcommand{\eea}{\end{eqnarray}}
\newcommand{\kommentar}[1]{}
\newcommand{\identity}{\mathbbm{1}}
\newcommand{\forget}[1]{}
\begin{document}

\title{Identifying families of multipartite states with non-trivial local entanglement transformations}

\author{Nicky Kai Hong Li}
\affiliation{Institute for Theoretical Physics, University of Innsbruck, Technikerstr. 21A, 6020 Innsbruck, Austria}\affiliation{Department of Physics, QAA, Technical University of Munich, James-Franck-Str.\;1, D-85748 Garching, Germany}\affiliation{Current address: Atominstitut, Technische Universit\"{a}t Wien, Stadionallee 2, 1020 Vienna, Austria}
\orcid{0000-0002-4087-4744}
\author{Cornelia Spee}
\affiliation{Institute for Theoretical Physics, University of Innsbruck, Technikerstr. 21A, 6020 Innsbruck, Austria}
\author{Martin Hebenstreit}
\affiliation{Institute for Theoretical Physics, University of Innsbruck, Technikerstr. 21A, 6020 Innsbruck, Austria}
\orcid{0000-0002-5841-7082}
\author{Julio I. de Vicente}
\affiliation{Departamento de Matemáticas, Universidad Carlos III de Madrid, Avda. de la Universidad 30, E-28911, Leganés (Madrid), Spain}\affiliation{Instituto de Ciencias Matemáticas (ICMAT), E-28049 Madrid, Spain}
\orcid{0000-0002-6508-5709}
\author{Barbara Kraus}
\affiliation{Institute for Theoretical Physics, University of Innsbruck, Technikerstr. 21A,  6020 Innsbruck, Austria}\affiliation{Department of Physics, QAA, Technical University of Munich, James-Franck-Str.\;1, D-85748 Garching, Germany}
\orcid{0000-0001-7246-6385}
\maketitle

\begin{abstract}
The study of state transformations by spatially separated parties with local operations assisted by classical communication (LOCC) plays a crucial role in entanglement theory and its applications in quantum information processing. Transformations of this type among pure bipartite states were characterized long ago and have a revealing theoretical structure. However, it turns out that generic fully entangled pure multipartite states cannot be obtained from nor transformed to any inequivalent fully entangled state under LOCC. States with this property are referred to as isolated. Nevertheless, multipartite states are classified into families, the so-called SLOCC classes, which possess very different properties. Thus, the above result does not forbid the existence of particular SLOCC classes that are free of isolation, and therefore, display a rich structure regarding LOCC convertibility. In fact, it is known that the celebrated $n$-qubit GHZ and W states give particular examples of such classes and in this work, we investigate this question in general. One of our main results is to show that the SLOCC class of the 3-qutrit totally antisymmetric state is isolation-free as well. Actually, all states in this class can be converted to inequivalent states by LOCC protocols with just one round of classical communication (as in the GHZ and W cases). Thus, we consider next whether there are other classes with this property and we find a large set of negative answers. Indeed, we prove weak isolation (i.e., states that cannot be obtained with finite-round LOCC nor transformed by one-round LOCC) for very general classes, including all SLOCC families with compact stabilizers and many with non-compact stabilizers, such as the classes corresponding to the $n$-qunit totally antisymmetric states for $n\geq4$. Finally, given the pleasant feature found in the family corresponding to the 3-qutrit totally antisymmetric state, we explore in more detail the structure induced by LOCC and the entanglement properties within this class.
\end{abstract}

\section{Introduction}

Many quantum technologies that demonstrate advantages over their classical counterparts rely on multipartite entanglement \cite{EntQKD,Gottesman,SecretSharing,MBC,MetrologyGHZ,ByzantineGHZ}. Moreover, in the last years, entanglement theory tools have proven to be instrumental in the description of interactions and correlations in many body systems \cite{reviewmps}. Thus, the resource theory of entanglement is one of the building blocks of quantum information theory \cite{Horodeckis, ResourceTheory}. It aims at characterizing, classifying, and quantifying entanglement, providing protocols to harness this resource, and studying how efficient these manipulations can be. The free operations in this theory are \textit{local operations assisted by classical communication} (LOCC), which naturally describe state manipulation protocols carried out by multiple spatially separated parties. Therefore, the study of state transformations under LOCC provides practical protocols for quantum information processing. More importantly, it plays a crucial role in the foundations of entanglement theory. Since LOCC operations are assumed to be freely implementable, a state cannot be less useful than any state that can be obtained from it by this kind of transformation. Hence, LOCC convertibility defines a meaningful partial order for the entanglement contained in quantum states and defines a basic monotonicity condition that every entanglement measure must satisfy. A simple example of an LOCC map is local unitary (LU) transformations. Given that this is actually an invertible map, all states related by LU transformations are regarded as equivalent in entanglement theory.

For bipartite pure states, LOCC transformability is completely determined by a majorization relation between the Schmidt coefficients of the two states \cite{Nielsen}. This is a seminal result in the field. Among other reasons, this singles out a maximally entangled state which must be the optimal resource for any task to be performed under the LOCC constraint. Unfortunately, the situation in the multipartite case is much more involved. First, it is known that the pure-state space is partitioned into different classes containing all states that can be interconverted with non-zero probability, i.e., by stochastic LOCC (SLOCC) \cite{3qubitSLOCC}. Therefore, no LOCC transformation can exist between any two fully entangled pure states of the same local dimensions in different SLOCC classes. To make things more complicated, except for 2-qudit \cite{Nielsen} and 3-qubit pure states \cite{3qubitSLOCC}, in general, there exist infinitely many different SLOCC classes for other number of parties and local dimensions \cite{4qubitSLOCC, footnote:infSLOCC}. In addition to this, SLOCC only defines an equivalence relation and one is still left with the question of characterizing LOCC convertibility within each SLOCC class in order to obtain a meaningful ordering of states. Interestingly, general tools based on the stabilizer of the SLOCC class (see definition below) have been provided \cite{GourWallach,SEPeqn, ProbStep1, ProbStep2}, which, in principle, allow us to characterize LOCC transformability \cite{MESpaper}. However, a second and more drastic difference between the multipartite and the bipartite cases arises. It turns out that for most values of the number of parties and local dimension, generic SLOCC classes have a trivial stabilizer \cite{Isolation1,Isolation2}. From this, it follows that almost every fully entangled pure multipartite state is isolated \cite{Isolation1,Isolation2}, i.e., it cannot be obtained from nor transformed to any LU-inequivalent fully entangled state of the same local dimensions by LOCC operations. Therefore, except for very specific configurations, entanglement is incomparable among generic states. In fact, this is the case in $(\mathbb{C}^d)^{\otimes n}$ for $n>4$ parties with local dimension $d=2$ and for $d>2$ and $n>3$ \cite{Isolation1,Isolation2}. However, it remains unknown whether this property extends to the unbalanced case, i.e., for states such that the local dimensions for each party are not all the same.

The above notwithstanding, in the perspective of identifying the most useful states for achieving certain tasks that require entanglement, it is still possible to find SLOCC classes with a non-trivial stabilizer that are of measure zero with respect to the whole Hilbert space and contain non-isolated states. Actually, the possible LOCC transformations among the states in the SLOCC classes of the $n$-qubit GHZ and W states were characterized respectively in Refs.\;\cite{turgutGHZ} and \cite{turgutW}. It turns out that these two classes are isolation-free (i.e., no state therein is isolated). In the last few years, the aforementioned techniques have enabled the study of LOCC convertibility in SLOCC classes with a non-trivial stabilizer such as the ones that contain generic \cite{MESpaper} and non-generic \cite{4qubitMES} 4-qubit states, generic 3-qutrit states \cite{locc3qutrit}, graph states and general stabilizer states \cite{MatthiasStab}, matrix product states \cite{loccmps1,loccmps2}, permutation-symmetric states of an arbitrary number of parties and local dimension \cite{OurSymmPaper}, and states relevant for quantum networks \cite{yamasaki, NonTrivial4round}. Remarkably, no further isolation-free SLOCC class has been found with the exception of the particular quantum networks studied in Ref.\;\cite{yamasaki}, which only occur in the unbalanced case. Moreover, the results of Refs.\;\cite{MESpaper} and \cite{locc3qutrit} show that generic 4-qubit and 3-qutrit states are isolated even though the stabilizer is not trivial in these cases. Thus, despite this active line of investigation, the GHZ and W SLOCC classes stand out as the unique instances with such feature for the case of all subsystems having equal dimension. Interestingly, this property seems to go hand in hand with relevant applications as these states appear in, e.g., quantum communication \cite{SecretSharing,ByzantineGHZ,QCommW}, quantum metrology \cite{MetrologyGHZ}, and entanglement robustness against loss \cite{RobustLossW}. Therefore, the underlying question that motivates this work is: being aware that this phenomenon cannot hold generically for almost all balanced configurations, are there more particular SLOCC classes that are isolation-free in addition to the few preexisting examples? Our main motivation is that, as explained above, the study of LOCC protocols and, in particular, finding families of states with non-trivial convertibility properties are questions of fundamental interest in entanglement theory. In addition to this, our hope is that, on the analogy of the aforementioned properties of the GHZ and W states, these theoretical insights could help identify multipartite states that are potentially relevant for new applications of quantum information theory. In this sense, it should be stressed that, wherever the LOCC ordering is not trivial, this gives rise to a notion of maximal entanglement \cite{MESpaper}. Such states must then be the most useful states within the family for any protocol to be conducted in the general scenario where parties are restricted to performing LOCC manipulations independently of the particular task to be accomplished (e.g., teleportation, metrology, etc.).

One of the main results in this paper is that there is indeed another isolation-free class, namely the SLOCC class of the 3-qutrit totally antisymmetric state $\ket{A_3}$. Totally antisymmetric states are fermionic states of total spin zero \cite{AdanAntiSymmAppl}. They have applications in solving certain multiparty communication problems such as the Byzantine agreement \cite{ByzantineAgree}, $N$ strangers, secret sharing, and liar detection problems \cite{AdanAntiSymmAppl}. In more general settings, these states allow the inversion of unitary \cite{InvertUnitary} and isometry operations \cite{InvertIsometry} due to their symmetry properties. Given that $\ket{A_3}$ has many applications and has an isolation-free SLOCC class, it is then well motivated to take a closer look into the entanglement properties of the whole class. While studying allowed LOCC transformations within the class, we also show that the \textit{maximally entangled set} (MES) \cite{MESpaper} of this SLOCC class, which is the minimal set of LU-inequivalent states such that every state in the whole SLOCC class can be reached by a state from the set with LOCC, is of zero measure. In other words, the most ``useful'' set of states constitutes only a vanishingly small portion of the SLOCC class.

One of the main reasons that makes it possible to characterize LOCC convertibility between bipartite pure states in a simple way is that it turns out to be sufficient to consider very simple protocols. Prior to the work of Ref.\;\cite{Nielsen}, it was shown in Ref.\;\cite{LP} that in this case, the study of these transformations can be restricted to LOCC protocols with a single round of classical communication. Most of the known LOCC protocols for the multipartite case have a similarly simple structure. Although more rounds of classical communication might be necessary, these transformations are achieved by concatenating single-round transformations, i.e., the initial state is transformed to the final state by mapping it deterministically to intermediate pure states after each round of classical communication. However, it was shown in Ref.\;\cite{ProbStep1} (further examples can be found in Refs.\;\cite{ProbStep2,OurSymmPaper}) that the study of LOCC transformations in the multipartite case cannot be boiled down to protocols of this simple form. There exist LOCC conversions such that the input state cannot be transformed into the output state by concatenating deterministic one-round protocols and intermediate probabilistic steps are required. This adds another difficulty to the formidable problem of characterizing LOCC transformations within multipartite SLOCC classes with a non-trivial stabilizer. Nevertheless, the work in Refs.\;\cite{ProbStep1, ProbStep2} (see Ref.\;\cite{OurSymmPaper} as well) characterizes, in terms of the stabilizer, which states are one-round convertible or reachable by a protocol with a finite (but otherwise arbitrary) number of rounds of classical communication. This is the main technical ingredient that allows us to prove that the 3-qutrit totally antisymmetric SLOCC class is isolation-free.

Next, we consider whether this result can be extended to other SLOCC classes and we provide a plethora of no-go results. Notice that the aforementioned techniques have limitations in proving a given state to be isolated. With them, one can only aim at showing that a state is neither one-round convertible nor finite-round reachable, a property that is referred to as \textit{weak isolation}. Therefore, weak isolation does not necessarily imply isolation. However, to our knowledge, there is no evidence of the existence of a state that is weakly isolated but not isolated \cite{footnote:NOT1round}. We remark that if weak isolation exists in an isolation-free SLOCC class, then all the weakly isolated states would need to be non-isolated under highly contrived LOCC protocols such as those that converge to the desired target state in the limit where the number of rounds goes to infinity. Notice that the very existence of a particular LOCC transformation between pure multipartite states that can only be accomplished with this latter type of protocols remains an open question (see Ref.\;\cite{SEPeqn}). In any case, although it does not prove isolation, weak isolation is a strong indication of SLOCC classes with limited LOCC convertibility properties. In this part of the manuscript, we prove weak isolation for very general SLOCC classes including all classes that have a compact stabilizer and also the ones with a non-compact stabilizer such as all 4-qubit classes and the classes of $n$-qunit totally antisymmetric states for all $n\geq4$. This extends the results of Ref.\;\cite{OurSymmPaper}, which also found weak isolation in the SLOCC classes corresponding to permutation-symmetric states (with the sole exception of the GHZ and W classes). All of these suggest that even among the non-generic SLOCC classes that have a non-trivial stabilizer, the absence of isolation remains an elusive phenomenon.

The outline of this paper is as follows. In Sec.\;\ref{sec:preliminaries}, we first introduce some definitions and notations that we will use throughout the paper. Then, we recall some important results that can be used to determine convertibility under LOCC and whether a state is weakly isolated. Moreover, we review the reason why the $n$-qubit GHZ and W classes are isolation-free and some key properties of the totally antisymmetric states. In Sec.\;\ref{sec:results} where we present our main results, we prove that every SLOCC class that contains a fully entangled state with a finite or compact symmetry group contains weakly isolated states. The next thing we consider are SLOCC classes with non-compact infinite symmetry groups. In particular, we show that weak isolation exists in every SLOCC class that contains an $n$-qudit fully entangled state with a symmetry group $\{S^{\otimes n} \mbox{ with } S\in GL(d,\mathbb{C})\}$ for all $n\geq 4$. Examples for such classes would be the ones of $n$-qunit totally antisymmetric states with $d=n\geq 4$. However, remarkably, we prove that SLOCC classes that contain a $3$-qudit fully entangled state with a symmetry group $\{S^{\otimes 3} \mbox{ with } S\in SL(d,\mathbb{C})\}$ are isolation-free. The class of $\ket{A_3}$ is an example of such a class in the case $d=3$. Finally, we study the entanglement properties of the $\ket{A_3}$ class in Sec.\;\ref{sec:MES_A3}. We characterize the set of states in the $\ket{A_3}$ class that cannot be obtained from other LU-inequivalent states with $\text{LOCC}_\mathbb{N}$ ($M_{A_3}$) and rule out certain LOCC transformations within $M_{A_3}$. We also present a way to decompose and prepare all the states in $M_{A_3}$ with a qutrit, a 2-qubit entangled state embedded in a 2-qutrit Hilbert space, and two 2-qutrit entangling unitaries.

\section{Preliminaries}
\label{sec:preliminaries}

Before reviewing some key concepts in entanglement theory we introduce here our notation. Throughout this paper, the set $\{1,\ldots,n\}$ is denoted by $[n]$ for any $n\in\mathbb{N}$, $\sigma_x$ and $\sigma_z$ denote the Pauli $x$ and $z$ matrices, $GL(d,\mathbb{C})$ and $SL(d,\mathbb{C})$ denote the sets of $d\times d$ complex invertible and special linear matrices, respectively. As a shorthand notation, we denote a positive definite matrix $X$ by $X>0$.

In this work, we consider only $n$-qudit fully entangled pure states $\ket{\psi}\in\bigotimes_{j=1}^n\mathbb{C}^{d_j}$ where $d_j$ is the local dimension of the $j$-th qudit. All reduced density matrices of fully entangled states have full rank, i.e., $rk(\text{Tr}_{[n]\setminus\{j\}}(|\psi\rangle\langle\psi|))=d_j$ for all $j\in[n]$. Thus, we do not consider transformations to states supported on lower dimensions nor transformations where some subset of the parties is disentangled from the rest as this will clearly result in less entangled states (according to the Schmidt-rank measure \cite{SchmidtMeas}) which belong to entanglement classes inequivalent to the one containing the initial state under the framework \cite{GourWallach} that we use here. Note that Refs.~\cite{GourWallach,SEPeqn,OurSymmPaper} also focused on fully entangled states and their techniques, which we build our results on, require this hypothesis. The SLOCC class of a state $\ket{\psi}$ is defined to be the set $\{\bigotimes_{k=1}^n g_k\ket{\psi}:g_k\in GL(d_k,\mathbb{C})\;\forall \;k\in[n]\}$. Two states are SLOCC-equivalent if they belong to the same SLOCC class. The above restriction implies that we only study LOCC transformations within a given SLOCC class. In this case it is useful to choose a representative $\ket{\Psi_s}$ (also called a \textit{seed state}) from an SLOCC class and characterize its stabilizer. The \textit{stabilizer} (also called the \textit{symmetry group}) of a state $\ket{\psi}$ is the set of local operators $\mathcal{S}_{\psi}=\{S=\bigotimes_{k=1}^n S^{(k)}:S\ket{\psi}=\ket{\psi} \text{ with } S^{(k)}\in GL(d_k,\mathbb{C})\;\forall \;k\in[n]\}$. We say $\ket{\psi}$ is stabilized by $S$ if $S\in\mathcal{S}_{\psi}$. We also define the \textit{local symmetry group} of the $j$-th qudit of $\ket{\psi}$, $\mathcal{S}_{\psi}^{(j)}=\{S^{(j)}\in GL(d_j,\mathbb{C}):\exists\;S^{(i)}\;\forall\;i\in[n]\setminus\{j\} \text{ so that } \bigotimes_{k=1}^n S^{(k)}\in \mathcal{S}_{\psi}\}$ which can, for all but one site, be assumed to be a subset of $SL(d_j,\mathbb{C})$ \cite{footnote:LocalSymm_SL}.

Within the SLOCC class of $\ket{\Psi_s}$, we denote the initial and final states of a transformation as $\ket{\Psi}=g\ket{\Psi_s}$ and $\ket{\Phi}=h\ket{\Psi_s}$, respectively, where $g=\bigotimes_{k=1}^n g_k$ and $h=\bigotimes_{k=1}^n h_k$. We will predominantly work with unnormalized states unless specified otherwise. Since entanglement is invariant under LU transformations $\bigotimes_{i=1}^n U(d_i,\mathbb{C})$, the study of state transformations can be restricted to between LU-inequivalent states. Hence, it is convenient to consider the operators $G=\bigotimes_{i=1}^n G_i=g^\dagger g$ and $H=\bigotimes_{i=1}^n H_i = h^\dagger h$ which are invariant under the action of any (local) unitary on the respective states $g\ket{\Psi_s}$ and $h\ket{\Psi_s}$. Their LU-invariant property is the reason why they play an important role in characterizing non-LU state transformations as one can see from Eq.\;(\ref{eq:SEP}). It is also useful to define the set of local singular operators of a state $\ket{\Psi}$ as $\mathcal{N}_{\Psi}=\{N=\bigotimes_{k=1}^n N^{(k)}:N\ket{\Psi}=0\}$.

\subsection{Necessary condition for LOCC transformability}\label{sec:SEPconditions}

We will recall here some results regarding LOCC transfromations among pure multipartite states. Readers who are familiar with results regarding LOCC transformations can skip Sec.\;\ref{sec:SEPconditions}.

An LOCC transformation is described by a sequence of local completely positive trace-preserving maps whose outcomes can be postselected according to a particular Kraus decomposition of the map (i.e., a local measurement) and broadcast to the other parties by a classical communication channel. Any subsequent action in this sequence can be made dependent on the previously received classical information. Thus, based on this, each round specifies which local measurement is to be performed by which party and describes which LU operation is to be applied by the other parties, given the particular outcome of this round. The class of LOCC transformations that terminates after a given number $n\in\mathbb{N}$ of rounds will be denoted by LOCC$_\mathbb{N}$ (i.e., finite-round protocols). In particular, LOCC$_1$ will stand for the subclass of transformations in LOCC$_\mathbb{N}$ that consist of a single round. For more detailed definitions of LOCC, we refer the reader to Refs.\;\cite{SEP>LOCCRef1,DonaldHorodecki,LOCChardEric,LOCC_Everything,SEPeqn,OurSymmPaper}. The mathematical description and analysis of LOCC transformations is highly involved and one way to circumvent this difficulty is to study state transformations with an enlarged set of operations. A particularly convenient example is the set of  \textit{separable maps} (SEP), that strictly contains LOCC \cite{SEP>LOCCRef1,LOCC_Everything} and has a much simpler mathematical characterization despite not having any physical interpretation. A SEP is a completely positive trace-preserving map for which there exists a Kraus decomposition in which all the Kraus operators are local. The necessary and sufficient condition for the existence of a SEP transformation among pure states is stated in the following theorem which was proven in Refs.\;\cite{GourWallach,SEPeqn}.

\begin{theorem}[\cite{GourWallach,SEPeqn}]
\label{thm:SEP}
The state $g\ket{\Psi_s}$ can be transformed to $h\ket{\Psi_s}$ via SEP if and only if there exists a finite set of probabilities $\{p_k\}$, symmetries $\{S_k\} \subseteq \mathcal{S}_{\Psi_s}$, and $N_q\in\mathcal{N}_{g\Psi_s}$ such that
\begin{equation}
    \frac{1}{r}\sum_k p_k S_k^\dagger HS_k + g^\dagger\sum_q N_q^\dagger N_q g = G,\label{eq:SEP}
\end{equation}
where $r=||h\ket{\Psi_s}||^2/||g\ket{\Psi_s}||^2$.
\end{theorem}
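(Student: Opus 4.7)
The plan is to prove both directions separately, exploiting the fact that the target $h\ket{\Psi_s}$ is a pure fully entangled state: hence every Kraus operator in any SEP realization of the conversion must send $g\ket{\Psi_s}$ either to a scalar multiple of $h\ket{\Psi_s}$ or to the zero vector.

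For the forward direction, take a separable trace-preserving map $\Lambda(\rho)=\sum_\alpha M_\alpha\rho M_\alpha^\dagger$, with each $M_\alpha=\bigotimes_j M_\alpha^{(j)}$, that deterministically converts the normalized $g\ket{\Psi_s}$ to the normalized $h\ket{\Psi_s}$. Purity of the output forces $M_\alpha g\ket{\Psi_s}=c_\alpha U_\alpha h\ket{\Psi_s}$ for some local unitary $U_\alpha$ and scalar $c_\alpha$ (possibly zero); absorbing $U_\alpha^\dagger$ into $M_\alpha$ preserves both locality and $M_\alpha^\dagger M_\alpha$. Split the index set into $\{k\}$ with $c_k\neq0$ and $\{q\}$ with $c_q=0$, set $p_k=|c_k|^2 r$, and let $S_k:=\sqrt{r/p_k}\,h^{-1}M_kg$. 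Then $S_k\ket{\Psi_s}=\ket{\Psi_s}$, and since $\ket{\Psi_s}$ is fully entangled any local operator stabilizing it is invertible, so $S_k\in\mathcal{S}_{\Psi_s}$ and $M_k=\sqrt{p_k/r}\,hS_kg^{-1}$. The remaining operators $N_q:=M_q$ satisfy $N_qg\ket{\Psi_s}=0$, i.e.\ $N_q\in\mathcal{N}_{g\Psi_s}$. Substituting into the trace-preserving identity $\sum_\alpha M_\alpha^\dagger M_\alpha=\identity$ and sandwiching by $g^\dagger$ on the left and $g$ on the right yields Eq.\;(\ref{eq:SEP}).

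For the converse, assume Eq.\;(\ref{eq:SEP}) holds and define Kraus operators $M_k:=\sqrt{p_k/r}\,hS_kg^{-1}$ together with the given $N_q$. Each of these is a tensor product, so the map $\Lambda(\rho)=\sum_k M_k\rho M_k^\dagger+\sum_q N_q\rho N_q^\dagger$ is separable; conjugating Eq.\;(\ref{eq:SEP}) by $(g^{-1})^\dagger$ on the left and $g^{-1}$ on the right verifies $\sum_k M_k^\dagger M_k+\sum_q N_q^\dagger N_q=\identity$. The direct computations $M_kg\ket{\Psi_s}=\sqrt{p_k/r}\,h\ket{\Psi_s}$ and $N_qg\ket{\Psi_s}=0$ then show that $\Lambda$ sends $g\ket{\Psi_s}$ deterministically onto a vector proportional to $h\ket{\Psi_s}$ with the norm dictated by $r$.

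The main conceptual obstacle is the reduction performed in the first paragraph: one must arrange a Kraus form in which every nonzero output is \emph{exactly} proportional to $h\ket{\Psi_s}$, not merely to some LU-equivalent vector. Purity of the target decomposes each rank-one term $M_\alpha g\ket{\Psi_s}\bra{\Psi_s}g^\dagger M_\alpha^\dagger$ in $\Lambda(g\ket{\Psi_s}\bra{\Psi_s}g^\dagger)$ as a nonnegative multiple of $h\ket{\Psi_s}\bra{\Psi_s}h^\dagger$, forcing collinearity of the corresponding vectors up to a local unitary $U_\alpha$; the locality of $U_\alpha$ is precisely what permits its absorption into $M_\alpha$ without destroying the tensor-product structure. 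Once this step is in place, the remainder of both directions reduces to the single algebraic identity $M_kg=\sqrt{p_k/r}\,hS_k$ combined with trace preservation.
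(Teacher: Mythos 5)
The paper itself contains no proof of Theorem~\ref{thm:SEP}; it is imported from Refs.~\cite{GourWallach,SEPeqn}, and your argument reproduces the standard proof given there: purity of the target forces every local Kraus operator to either annihilate $g\ket{\Psi_s}$ or map it to a scalar multiple of $h\ket{\Psi_s}$, whence $h^{-1}M_k g$ is (after normalization) a symmetry of the seed state, invertibility of its local factors follows from full entanglement, and Eq.~(\ref{eq:SEP}) is the completeness relation $\sum_\alpha M_\alpha^\dagger M_\alpha=\identity$ sandwiched between $g^\dagger$ and $g$; the converse is the direct Kraus construction you give. Two cosmetic points: with $S_k=\sqrt{r/p_k}\,h^{-1}M_k g=|c_k|^{-1}h^{-1}M_k g$ one gets $S_k\ket{\Psi_s}=e^{i\arg c_k}\ket{\Psi_s}$, so you should divide by $c_k$ rather than $|c_k|$ (equivalently, rephase $M_k$) to land in $\mathcal{S}_{\Psi_s}$; and the local unitary $U_\alpha$ in your purity step is superfluous, since the requirement that the output be exactly the pure state $h\ket{\Psi_s}$ already forces $M_\alpha g\ket{\Psi_s}=c_\alpha h\ket{\Psi_s}$ with no residual unitary freedom beyond a phase absorbable into $c_\alpha$.
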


From Theorem \ref{thm:SEP}, we see that the stabilizer plays a crucial role in determining whether two states from the same SLOCC class are related via SEP. If one can prove that one state cannot reach another state with SEP, then it also implies that the transformation is impossible with LOCC. For example, since generic states only have trivial symmetries, they are isolated under SEP \cite{Isolation1,Isolation2}, and hence isolated under LOCC transformations, which is the reason why we study non-trivial LOCC transformations only among zero-measure sets of states. Determining SEP transformability with Theorem \ref{thm:SEP} is often a non-trivial task since it is hard to fully characterize both the symmetry group $\mathcal{S}_{\Psi_s}$ and the set of operators $\mathcal{N}_{g\Psi_s}$, which annihilate the state, in general. Moreover, even if a SEP transformation is proven to exist between two states, it does not imply that an LOCC protocol exists. Therefore, Theorem \ref{thm:SEP} serves only as a necessary condition for the existence of LOCC transformations.

\subsection{$\text{LOCC}_\mathbb{N}$-reachability and weak isolation}\label{sec:LOCCN_reachability}

A state is isolated if it is not \textit{reachable} nor \textit{convertible} via LOCC (excluding LU transformations). That is, it cannot be obtained by transforming any SLOCC-equivalent state, nor can it be transformed into any other SLOCC-equivalent state with LOCC. As pointed out in Sec.\;\ref{sec:SEPconditions}, only a necessary condition for multipartite LOCC transformability is known, so there is no clear way of determining whether a state is isolated except for generic states that only have trivial symmetries \cite{Isolation1,Isolation2}. However, one could prove a weaker form of isolation by considering a restricted set of transformations--$\text{LOCC}_{\mathbb{N}}$, which are, from a practical point of view, more relevant and characterized better. The necessary and sufficient condition for a state to be reachable from an SLOCC-equivalent state via $\text{LOCC}_{\mathbb{N}}$ is stated in Theorem \ref{thm:reachability}. Its proof in Ref.\;\cite{OurSymmPaper} relies on the fact that there exists a final round in any $\text{LOCC}_{\mathbb{N}}$ protocol.

\begin{theorem}[\cite{OurSymmPaper}]
\label{thm:reachability}
A state $\ket{\Phi}\propto h \ket{\Psi_s}$ is reachable via $\text{LOCC}_{\mathbb{N}}$, iff there exists $S \in \mathcal{S}_{\Psi_s}$ such that the following conditions hold up to permutations of the particles:
\begin{itemize}
 \item[(i)] For any $i\geq 2$, $ (S^{(i)})^\dagger H_i S^{(i)} \propto H_i$ and
\item[(ii)] $  (S^{(1)})^\dagger H_1 S^{(1)} \not\propto H_1$.
\end{itemize}
\end{theorem}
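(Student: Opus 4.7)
The plan is to prove both directions by leveraging the observation that any $\text{LOCC}_{\mathbb{N}}$ protocol has a well-defined last round during which, after a suitable permutation of particles, only party 1 performs a non-trivial measurement while the other parties apply outcome-conditional local unitaries. For the necessity direction, let $g\ket{\Psi_s}$ denote the branch-pure state just before this last round. Each Kraus operator $M_k$ on party 1 together with unitary corrections $U_k^{(i)}$ on parties $i\geq 2$ must map $g\ket{\Psi_s}$ to a scalar multiple of $h\ket{\Psi_s}$; inverting the resulting identity yields, for every outcome $k$, a symmetry $S_k\in\mathcal{S}_{\Psi_s}$ whose first component encodes $M_k$ and whose remaining components encode the $U_k^{(i)}$. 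Unitarity of these corrections translates into $(S_k^{(i)})^\dagger H_i S_k^{(i)}=G_i$ for every $i\geq 2$, and consequently $(S_k S_{k'}^{-1})^{(i)\dagger} H_i (S_k S_{k'}^{-1})^{(i)}=H_i$ for any pair of outcomes, establishing condition (i) for the symmetry $S=S_k S_{k'}^{-1}$.

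To secure condition (ii) I argue by contradiction: were $(S_k S_{k'}^{-1})^{(1)\dagger} H_1 (S_k S_{k'}^{-1})^{(1)}\propto H_1$ for every pair of outcomes, then all the operators $(S_k^{(1)})^\dagger H_1 S_k^{(1)}$ would be mutually proportional. Combined with the trace-preserving identity $\sum_k|\lambda_k|^2 (S_k^{(1)})^\dagger H_1 S_k^{(1)}=G_1$, this would force $G\propto S_1^\dagger H S_1$ globally; invoking $S_1\ket{\Psi_s}=\ket{\Psi_s}$ and the polar decomposition of local invertible operators then yields a local unitary relating $g\ket{\Psi_s}$ and $h\ket{\Psi_s}$, contradicting the non-triviality required for reachability.

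For the sufficiency direction I construct an explicit $\text{LOCC}_1$ protocol from a symmetry $S$ satisfying (i) and (ii). Choose $G_i$ proportional to $H_i$ for $i\geq 2$ (with the proportionality constants dictated by condition (i)) and $G_1=p H_1+q\,(S^{(1)})^\dagger H_1 S^{(1)}$ with positive weights $p,q$; condition (ii) ensures $G_1\not\propto H_1$, hence $\ket{\Psi}=g\ket{\Psi_s}$ is LU-inequivalent to $\ket{\Phi}$. Party 1 performs a two-outcome measurement with Kraus operators proportional to $h_1 g_1^{-1}$ and $h_1 S^{(1)} g_1^{-1}$, and for $i\geq 2$ the other parties apply $h_i g_i^{-1}$ or $h_i S^{(i)} g_i^{-1}$ according to the outcome. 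Condition (i) guarantees these corrections are unitary up to scalar factors that can be absorbed into the measurement, and both branches land on $h\ket{\Psi_s}=\ket{\Phi}$ after applying $S\ket{\Psi_s}=\ket{\Psi_s}$. The main obstacle is the passage from the \emph{family} of symmetries $\{S_k\}$ arising in the last round to a \emph{single} symmetry $S$ obeying both properties; this hinges on combining the trace-preserving identity with the group structure of $\mathcal{S}_{\Psi_s}$, and on a careful bookkeeping of scalar proportionality constants when redistributing them across local components.
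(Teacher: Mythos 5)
First, note that the paper does not actually prove this theorem: it is imported from Ref.~\cite{OurSymmPaper}, and the only guidance given here is that the proof ``relies on the fact that there exists a final round in any $\text{LOCC}_{\mathbb{N}}$ protocol.'' Your argument follows exactly that strategy --- extracting symmetries $S_k\propto h^{-1}\bigl(M_k\otimes\bigotimes_{i\geq2}U_k^{(i)}\bigr)g$ from the final round, obtaining (i) from unitarity of the corrections and (ii) from the completeness relation together with non-triviality, and reversing the construction for sufficiency --- so the route is the intended one. In the necessity half, the one step you only gesture at is the reduction to a non-trivial final round: your contradiction shows merely that the \emph{last} round maps a state LU-equivalent to $\ket{\Phi}$ onto $\ket{\Phi}$, so you still have to truncate the protocol and induct on the number of rounds until you reach a round whose input branch is \emph{not} LU-equivalent to $\ket{\Phi}$ (such a round exists because the initial state is not LU-equivalent to $\ket{\Phi}$, and the argument is robust to the truncated branches ending at states that are only LU-equivalent, rather than equal, to $\ket{\Phi}$). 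This should be spelled out, but it is routine.

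The genuine gap is in the sufficiency direction, at ``condition (ii) ensures $G_1\not\propto H_1$, hence $\ket{\Psi}=g\ket{\Psi_s}$ is LU-inequivalent to $\ket{\Phi}$.'' As you yourself use (correctly) in the necessity half, LU-equivalence of $g\ket{\Psi_s}$ and $h\ket{\Psi_s}$ within the SLOCC class is equivalent to the existence of \emph{some} $\tilde S\in\mathcal{S}_{\Psi_s}$ with $\tilde S^\dagger H\tilde S\propto G$, not to $G\propto H$; so $G_1\not\propto H_1$ only excludes the trivial symmetry. Concretely, in the $\mathrm{GHZ}_3$ class with $H_2,H_3$ diagonal, the state with $G_1=P_z^\dagger H_1 P_z$ [where $P_z=\mathrm{diag}(z,z^{-1})$, $|z|=1$, $z^2\notin\mathbb{R}$] and $G_2=H_2$, $G_3=H_3$ satisfies $G_1\not\propto H_1$ and yet is LU-equivalent to $h\ket{\mathrm{GHZ}_3}$ via the symmetry $P_{z}\otimes P_{z_2}\otimes P_{z_3}$. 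Hence your construction still requires an argument that no $\tilde S\in\mathcal{S}_{\Psi_s}$ satisfies $\tilde S^\dagger H\tilde S\propto G$ for the specific choice $G_1=pH_1+q\,(S^{(1)})^\dagger H_1 S^{(1)}$ --- for instance by exploiting the remaining freedom in $p,q$ or by a closer analysis of which symmetries are compatible with condition (i). This does not follow from condition (ii) alone, and it is precisely the bookkeeping the cited proof has to carry out.
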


A simple example for an $\text{LOCC}_{\mathbb{N}}$-reachable state would be a 3-qubit state $\ket{\psi}\propto h_1\otimes\identity^{\otimes2}\ket{\Psi_s}$, where $h_1=\frac{1}{2}\identity+\sigma_x$ and $\ket{\Psi_s}$ is any 3-qubit state with a symmetry $\sigma_z^{\otimes3}$, as $\sigma_z$ cannot \textit{quasi-commute} \cite{footnote:QuasiComm} with $H_1=h_1^\dagger h_1=(\frac{1}{2}\identity+\sigma_x)^2=\frac{5}{4}\identity+\sigma_x$ (i.e., $\sigma_z^\dagger H_1\sigma_z\not\propto H_1$). In this case, $\ket{\psi}$ can be reached from $\ket{\Psi_s}$ with a one-round LOCC protocol where the first party measures with the operators $\{\sqrt{\frac{2}{5}}h_1\sigma_z^m\}_{m=0,1}$ and parties 2 and 3 each apply LU $\sigma_z^m$ depending on the measurement outcome $m$. For more examples of states that are (not) $\text{LOCC}_{\mathbb{N}}$-reachable, see e.g., Ref.\;\cite{OurSymmPaper} and Secs.\;\ref{sec:results} and \ref{sec:MES_A3}.

However, to determine whether a state is $\text{LOCC}_{\mathbb{N}}$-convertible is generally much harder due to, among other reasons, the existence of protocols that require intermediate probabilistic steps \cite{ProbStep1,ProbStep2,OurSymmPaper}. Instead, if one asks only for LOCC$_1$ convertibility, then one can use the necessary and sufficient conditions given by Lemma 1 in Ref.\;\cite{OurSymmPaper}. We can define a weaker form of isolation by calling a state \textit{weakly isolated} if it is not $\text{LOCC}_{\mathbb{N}}$-reachable and not $\text{LOCC}_1$-convertible. The necessary and sufficient condition for weak isolation is given by the following lemma~\cite{OurSymmPaper}.

\begin{lemma}[\cite{OurSymmPaper}]
\label{lemma:isolation}
A state $\ket{\Phi}\propto g \ket{\Psi_s}$ is weakly isolated if and only if there exists no $S \in \mathcal{S}_{\Psi_s} \setminus\{\identity\}$ such that $S^{(i)}$ \textit{quasi-commutes} with $G_i$ [i.e., $ (S^{(i)})^\dagger G_i S^{(i)} \propto G_i$] for at least $n-1$ sites $i$.
\end{lemma}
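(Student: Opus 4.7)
The plan is to derive this lemma by combining the characterization of $\text{LOCC}_\mathbb{N}$-reachability given in Theorem \ref{thm:reachability} with the characterization of $\text{LOCC}_1$-convertibility (Lemma 1 of Ref.\;\cite{OurSymmPaper}) via their contrapositives. By the definition recalled just before the statement, $\ket{\Phi}$ is weakly isolated iff it is simultaneously not $\text{LOCC}_\mathbb{N}$-reachable and not $\text{LOCC}_1$-convertible, so it is enough to show that the stated stabilizer condition is equivalent to the conjunction of these two negations.

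For the ``only if'' direction, I would assume that $\ket{\Phi}\propto g\ket{\Psi_s}$ is weakly isolated and argue by contradiction. Suppose some $S\in\mathcal{S}_{\Psi_s}\setminus\{\identity\}$ quasi-commutes with $G_i$ at at least $n-1$ sites. Up to a permutation of the parties, say the sites are $i\geq 2$. Either (a)~the remaining site also quasi-commutes, i.e., $(S^{(1)})^\dagger G_1 S^{(1)} \propto G_1$, in which case Lemma 1 of Ref.\;\cite{OurSymmPaper} yields a one-round LOCC conversion to an LU-inequivalent state (the non-triviality of $S$ is what rules out the trivial case and guarantees the output is LU-inequivalent); or (b)~$(S^{(1)})^\dagger G_1 S^{(1)} \not\propto G_1$, in which case Theorem \ref{thm:reachability}, read with the roles of initial and final state reversed (so that the conditions on $H_i$ there become conditions on $G_i$ here), shows that $\ket{\Phi}$ is $\text{LOCC}_\mathbb{N}$-reachable. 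Either alternative contradicts weak isolation.

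For the ``if'' direction, I would assume that no such $S$ exists and deduce both non-reachability and non-convertibility. Non-convertibility under $\text{LOCC}_1$ is immediate by taking the contrapositive of Lemma 1 in Ref.\;\cite{OurSymmPaper}. For non-reachability, Theorem \ref{thm:reachability} (again with $H_i\to G_i$) implies that if $\ket{\Phi}$ were reachable there would exist $S\in\mathcal{S}_{\Psi_s}$ quasi-commuting with $G_i$ at $n-1$ sites and failing to quasi-commute at the remaining one; such an $S$ is necessarily non-identity because $\identity$ quasi-commutes everywhere, and it witnesses quasi-commutation at (at least) $n-1$ sites, contradicting the hypothesis.

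The main bookkeeping obstacle, rather than a conceptual one, is ensuring that the two characterizations are applied consistently with the state of interest $\ket{\Phi}=g\ket{\Psi_s}$: Theorem \ref{thm:reachability} is stated for a final state $h\ket{\Psi_s}$ with conditions on $H_i$, so one must remember that for reachability of $\ket{\Phi}$ the relevant object is $G_i$. Once this translation is made and the ``up to permutations'' clause is tracked in both sources (so that ``at least $n-1$ sites'' covers both the exactly $n-1$ and the full $n$ cases), the equivalence follows directly. No delicate argument is required beyond identifying the two contrapositive hypotheses with the single unified condition stated in the lemma.
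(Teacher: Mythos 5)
Your proposal is correct and matches the intended justification: the paper does not reprove this lemma (it is imported from Ref.\;\cite{OurSymmPaper}), and its content is precisely the conjunction of the negations of the two cited characterizations --- Theorem \ref{thm:reachability} for $\text{LOCC}_\mathbb{N}$-reachability (with $H_i$ replaced by $G_i$) and Lemma 1 of Ref.\;\cite{OurSymmPaper} for $\text{LOCC}_1$-convertibility --- split exactly as you do into the ``quasi-commutes at exactly $n-1$ sites'' and ``quasi-commutes at all $n$ sites'' cases. The only point you gloss over (as does the paper) is that in the all-$n$-sites case one must still exhibit an $H_j\not\propto G_j$ solving the completeness relation, possibly using several group elements generated by $S$; this is the content of Lemma 1 of the reference, to which you correctly defer.
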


To give an example, a 3-qubit state $\ket{\psi}$ with a symmetry $\sigma_z^{\otimes 3}$, which (quasi-)commutes with $G_i=\identity$ for all 3 sites, is not (weakly) isolated. It can be transformed into another state $h_1\otimes\identity^{\otimes2}\ket{\psi}$ with $h_1\in GL(2,\mathbb{C})$ such that $H_1$ is not diagonal and $\text{diag}(H_1)=\identity$. The transformation is achieved by having party 1 measure the first qubit of $\ket{\psi}$ with operators $\{\frac{1}{\sqrt{2}}h_1\sigma_z^m\}_{m=0,1}$ and parties 2 and 3 apply a local unitary $\sigma_z^m$ based on the measurement outcome $m$. More examples of (non-)weakly isolated states can be found in Ref.\;\cite{OurSymmPaper} and in Secs.\;\ref{sec:results} and \ref{sec:MES_A3}.

With Lemma \ref{lemma:isolation}, one has a relatively simple mathematical criterion to show that an SLOCC class is isolation-free by proving that every state it contains is not weakly isolated since any isolated state is also weakly isolated by definition. In fact, this is exactly how we identify a new isolation-free class, the SLOCC class of the 3-qutrit totally antisymmetric state (Sec.\;\ref{sec:noncompact}). As already mentioned in the introduction, we should note that weak isolation is only an indicator but not an implication for full LOCC isolation.

\subsection{Known isolation-free SLOCC classes}\label{sec:KnownIsolationFreeClasses}

Before starting the search for new isolation-free SLOCC classes, let us revisit the key properties of the $n$-qubit GHZ and W states, $\ket{\text{GHZ}_n}$ and $\ket{\text{W}_n}$, whose SLOCC classes do not contain any isolated states, i.e., the SLOCC classes are isolation-free. The symmetry groups of the two states are given by \cite{WGHZsymm}
\begin{align}
    &\mathcal{S}_{\text{GHZ}} = \left\{\bigotimes_{i=1}^n P_{z_i}\sigma_x^m: m\in\{0,1\}, z_i\in\mathbb{C}\setminus\{0\}, \prod_{i=1}^n z_i = 1\right\},\label{eq:S_GHZ}\\
    &\mathcal{S}_{\text{W}} = \left\{\frac{1}{x}\bigotimes_{i=1}^n T_{x,y_i}: x\in\mathbb{C}\setminus\{0\}, y_i\in\mathbb{C}, \sum_{i=1}^n y_i = 0\right\}\label{eq:S_W}
\end{align}
where $P_{z_i}=diag(z_i, z_i^{-1})$ and $T_{x,y_i}=\begin{pmatrix} 1 & y_i \\ 0 & x \end{pmatrix}$. One can verify that these two classes do not have any weakly isolated states using the following observations. Let $H_i=\begin{pmatrix} a_i & b_i \\ b_i^* & c_i \end{pmatrix}>0$. It holds that $\sigma_xP_{z_i}^\dagger H_i P_{z_i}\sigma_x\propto H_i$ if and only if $z_i=\pm\sqrt{\frac{c_i b_i}{a_i b_i^*}}$ and $T_{x,y_i}^\dagger H_i T_{x,y_i}\propto H_i$ if and only if $y_i=\frac{b_i}{a_i}(1-x)$ and $|x|=1$. Since $n-1$ parameters $z_i\in\mathbb{C}\setminus\{0\}$ ($y_i\in\mathbb{C}$) are independent in $\mathcal{S}_{\text{GHZ}}$ ($\mathcal{S}_{\text{W}}$), for any $H=\otimes_{i=1}^n H_i$, there always exists a non-trivial symmetry from $\mathcal{S}_{\text{GHZ}}$ ($\mathcal{S}_{\text{W}}$) that quasi-commutes with $H$ for $n-1$ sites, respectively. Therefore, due to Lemma \ref{lemma:isolation}, the two SLOCC classes do not have any state that is (weakly) isolated.

\subsection{Totally antisymmetric states}\label{sec:An_Properties}
We now introduce some properties of totally antisymmetric states, which we will need when we prove (non-)existence of weakly isolated states in their SLOCC classes in Sec.\;\ref{sec:noncompact} and when we investigate entanglement properties of the SLOCC class of the 3-qutrit totally antisymmetric state in Sec.\;\ref{sec:MES_A3}.

The $n$-qunit totally antisymmetric state is given by
\begin{align}
    \ket{A_n} &=\frac{1}{\sqrt{n!}}\sum_\pi sgn(\pi)P_\pi(\ket{1,2,\ldots,n}) \label{eq:An_sign}\\
    &= \frac{1}{\sqrt{n!}}\sum_{i_1,\ldots,i_n=1}^n \epsilon_{i_1,\ldots,i_n} |i_1,\ldots,i_n\rangle\label{eq:An_epsilon}
\end{align}
where the sum in Eq.\;(\ref{eq:An_sign}) is over all permutations of $n$ elements $\pi\in S_n$, $sgn(\pi)$ is the sign of $\pi$, $P_\pi(\ket{1,2,\ldots,n})=\ket{\pi(1),\pi(2),\ldots,\pi(n)}$ and $\epsilon_{i_1,\ldots,i_n}$ is the Levi-Civita symbol. The key property of $\ket{A_n}$ that we will use later is its invariance under permutation up to a sign. More precisely, it is easy to see that 
\begin{equation}
    P_{\tilde{\pi}}\ket{A_n} = sgn(\tilde{\pi})\ket{A_n}\label{eq:An_perm}
\end{equation}
using the properties of $sgn$: $sgn(\pi)=sgn(\tilde{\pi}^{-1}\tilde{\pi}\pi)=sgn(\tilde{\pi}^{-1})sgn(\tilde{\pi}\pi)$ and $sgn(\tilde{\pi}^{-1})=sgn(\tilde{\pi})$.

As we saw in Secs.\;\ref{sec:SEPconditions}--\ref{sec:KnownIsolationFreeClasses}, in order to determine LOCC transformability and whether weak isolation exists in an SLOCC class, it is essential to have a full characterization of the stabilizer of a seed state from that class. Hence, we need to characterize the symmetry group of $\ket{A_n}$, which is stated in the following observation. Its proof can be found in Appendix \ref{app:An_symm}.

\begin{observation}\label{obs:An_symm_group}
The symmetry group of the $n$-qunit totally antisymmetric state $\ket{A_n}$ is $\{S^{\otimes n} \mbox{ with } S\in SL(n,\mathbb{C})\}$.
\end{observation}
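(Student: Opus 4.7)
The plan is to prove the two inclusions separately. The inclusion $\{S^{\otimes n}:S\in SL(n,\mathbb{C})\}\subseteq \mathcal{S}_{A_n}$ follows from a direct computation: expanding $S^{\otimes n}\ket{A_n}$ via Eq.\;(\ref{eq:An_epsilon}) and applying the standard determinant identity $\sum_{i_1,\ldots,i_n}\epsilon_{i_1,\ldots,i_n}S_{j_1,i_1}\cdots S_{j_n,i_n}=\det(S)\,\epsilon_{j_1,\ldots,j_n}$ immediately yields $S^{\otimes n}\ket{A_n}=\det(S)\,\ket{A_n}$, which equals $\ket{A_n}$ precisely when $\det(S)=1$.

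For the reverse inclusion, suppose $\bigotimes_{k=1}^n S^{(k)}\ket{A_n}=\ket{A_n}$ with $S^{(k)}\in GL(n,\mathbb{C})$. Expanding both sides in the product basis, the coefficient of $\ket{j_1,\ldots,j_n}$ on the left is $\det M(j_1,\ldots,j_n)/\sqrt{n!}$, where $M(j_1,\ldots,j_n)$ denotes the $n\times n$ matrix whose $k$-th row is the $j_k$-th row of $S^{(k)}$. Matching coefficients produces two families of constraints: (a) $\det M=0$ whenever $j_a=j_b$ for some $a\ne b$, and (b) $\det M=\mathrm{sgn}(\pi)$ whenever $(j_1,\ldots,j_n)=(\pi(1),\ldots,\pi(n))$ for some $\pi\in S_n$. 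For $n\ge 3$, I would fix $j_a=j_b=j$ and let the remaining $j_k$ range over $[n]$. Since the rows of each invertible $S^{(k)}$ form a basis of $\mathbb{C}^n$, multilinearity of the determinant together with (a) forces the $j$-th rows of $S^{(a)}$ and $S^{(b)}$ to be proportional for every $j$. Hence $S^{(k)}=D_k S^{(1)}$ with $D_k=\mathrm{diag}(\lambda_{1,k},\ldots,\lambda_{n,k})$ and $D_1=\identity$.

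Substituting this factorization into (b) reduces the permutation conditions to the transversal identity $\prod_{k=2}^n\lambda_{\pi(k),k}=c$ for a fixed constant $c$ and all $\pi\in S_n$. Evaluating on the transpositions $(1,j)$ gives $\lambda_{1,j}=\lambda_{j,j}$ for every $j\ge 2$; evaluating on the 3-cycles $(1,i,j)$ with $i,j\ge 2$ distinct and using the previous identity then yields $\lambda_{j,i}=\lambda_{i,i}$ for every $j$ and every $i\ge 2$. Therefore each $D_k$ with $k\ge 2$ is a scalar multiple of $\identity$, so $\bigotimes_k S^{(k)}=\mu\,(S^{(1)})^{\otimes n}$ with $\mu=\prod_{k=2}^n\lambda_{k,k}$; setting $S:=S^{(1)}/(\det S^{(1)})^{1/n}$ (for any fixed $n$-th root) produces $S\in SL(n,\mathbb{C})$ with $\bigotimes_k S^{(k)}=S^{\otimes n}$. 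The case $n=2$ requires a separate argument since (a) is vacuous: via the matrix-state isomorphism $\mathbb{C}^2\otimes\mathbb{C}^2\cong M_2(\mathbb{C})$ the stabilizer condition becomes $S^{(1)}\epsilon (S^{(2)})^T=\epsilon$ with $\epsilon=\bigl(\begin{smallmatrix}0&1\\-1&0\end{smallmatrix}\bigr)$, and combined with the $2\times 2$ identity $T\epsilon T^T=\det(T)\,\epsilon$ this forces $S^{(1)}=S^{(2)}/\det S^{(2)}$, yielding again $S^{(1)}\otimes S^{(2)}=S^{\otimes 2}$ with $S\in SL(2,\mathbb{C})$. I expect the most delicate step to be the passage from (b) to the scalar-matrix structure of each $D_k$, as it cannot be deduced from transpositions alone and genuinely requires the 3-cycle relations.
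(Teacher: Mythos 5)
Your proof is correct, but for the hard direction it takes a genuinely different route from the paper. The paper exploits the permutation antisymmetry $P_{\tilde\pi}\ket{A_n}=sgn(\tilde\pi)\ket{A_n}$ to reduce the problem to a two-site statement: any symmetry not of product-power form would yield a symmetry $B\otimes B^{-1}\otimes\identity^{\otimes n-2}$, and projecting $(B\otimes\identity-\identity\otimes B)\otimes\identity^{\otimes n-2}\ket{A_n}=0$ onto basis vectors forces $B\propto\identity$ in two lines. You instead match coefficients directly, reading off the constraints $\det M(j_1,\ldots,j_n)=\epsilon_{j_1,\ldots,j_n}$ and extracting the row-proportionality from the vanishing conditions via multilinearity, then pinning down the diagonal factors $D_k$ from the permutation conditions. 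Your route is more self-contained (it does not invoke the $B\otimes B^{-1}$ reduction borrowed from the permutation-symmetric setting) and the transversal identity $\prod_{k\geq2}\lambda_{\pi(k),k}=c$ is a nice way to see exactly how the determinant normalization emerges; the price is a longer case analysis, and your remark that the 3-cycles are genuinely needed (transpositions $(i,j)$ with $i,j\geq2$ only give $\lambda_{j,i}\lambda_{i,j}=\lambda_{i,i}\lambda_{j,j}$) is accurate. One small inaccuracy: condition (a) is not vacuous for $n=2$ — taking $j_1=j_2=j$ it directly states that the $j$-th rows of $S^{(1)}$ and $S^{(2)}$ are proportional, so $S^{(2)}=D_2S^{(1)}$ still follows, and conditions (b) on the identity and the transposition then force $D_2$ scalar; what actually fails at $n=2$ is only the 3-cycle step. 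Your separate $\epsilon$-matrix argument for $n=2$ is nevertheless valid, so this does not affect correctness. The easy inclusion is proved identically to the paper.
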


\section{Identifying isolation-free SLOCC classes}
\label{sec:results}

In this section, we will first focus on SLOCC classes that contain a fully entangled state with a compact symmetry group and prove that all these classes contain weakly isolated states. Then, we move on to consider classes with non-compact stabilizers and prove that for any local dimension $d$, all 3-qudit SLOCC classes with a state stabilized by $S^{\otimes3}$ for all $S\in SL(d,\mathbb{C})$ are isolation-free. One such example is the SLOCC class of the 3-qutrit totally antisymmetric state $\ket{A_3}$. Finally, we show that for $n\geq4$, if a state in an $n$-qudit SLOCC class has all symmetries of the form $S^{\otimes n}$, then the class contains weak isolation. Examples of these classes are the ones containing $n$-qunit totally antisymmetric states $\ket{A_n}$. In a slight abuse of notation, here and in the rest of the paper, we use $\bigotimes_{i=1}^n U(d_i,\mathbb{C})$ to denote the group with elements $\bigotimes_{i=1}^n U_i$ where each $U_i$ is an arbitrary unitary matrix in $U(d_i,\mathbb{C})$ for all $i$ (and similarly for $\bigotimes_{i=1}^n GL(d_i,\mathbb{C})$). 

\subsection{SLOCC classes with compact stabilizers}\label{sec:compact}

We consider here SLOCC classes whose representatives possess a compact symmetry group. We call a set of $d\times d$ complex matrices \textit{closed} (\textit{compact}) if it is closed (compact) in the Euclidean topology on $\mathbb{C}^{d^2}$ unless specified otherwise. As stated in the following theorem, we prove that all SLOCC classes containing a fully entangled state which is stabilized by a compact stabilizer contains weakly isolated states.

\begin{theorem}\label{thm:WI_CompactSymm}
Every $n$-qudit SLOCC class that contains a fully entangled state $\ket{\psi}\in\bigotimes_{i=1}^n\mathbb{C}^{d_i}$ which has a compact symmetry group $\mathcal{S}_\psi \subset \bigotimes_{i=1}^n GL(d_i,\mathbb{C})$ contains weakly isolated states.
\end{theorem}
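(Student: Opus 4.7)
By Lemma~\ref{lemma:isolation}, the goal is to exhibit some $g=\bigotimes_{i=1}^n g_i\in\bigotimes_i GL(d_i,\mathbb{C})$ for which $G_i=g_i^\dagger g_i$ admits no $S\in\mathcal{S}_\psi\setminus\{\text{scalars}\}$ with $(S^{(i)})^\dagger G_iS^{(i)}\propto G_i$ at $n-1$ or more sites. My plan is to combine a structural property of stabilizers of fully entangled states with a genericity argument backed by the compactness hypothesis.

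The structural preliminary I would establish first is that every non-scalar $S\in\mathcal{S}_\psi$ has at least two genuinely non-scalar local factors. Indeed, if $S^{(i)}\propto\identity$ for all $i\neq j$, writing $\ket\psi$ in Schmidt form across the bipartition $\{j\}\mid[n]\setminus\{j\}$ gives $d_j$ linearly independent Schmidt vectors on site $j$ that $S^{(j)}$ must rescale by a common constant, forcing $S^{(j)}\propto\identity$ as well. Next, I would use compactness to reduce to the unitary setting: each local projection $\mathcal{S}_\psi^{(i)}$ modulo scalars is a compact subgroup of $PGL(d_i,\mathbb{C})$ and is therefore conjugate to a subgroup of $PU(d_i)$. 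Applying this sitewise supplies a local $\bigotimes_i P_i$ transformation that stays in the SLOCC class (and preserves full entanglement), after which every $S\in\mathcal{S}_\psi$ takes the form $\lambda\bigotimes_iU_i$ with $U_i\in U(d_i)$. For $S^{(i)}=\lambda_i U_i$ and $G_i>0$, tracing both sides of $(S^{(i)})^\dagger G_iS^{(i)}=\mu G_i$ pins $\mu$ down, so the quasi-commutation condition collapses to the honest commutation $U_iG_i=G_iU_i$.

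The genericity input is then to pick each $G_i$ positive definite with pairwise distinct eigenvalues and with an eigenbasis in general position; the unitaries commuting with $G_i$ are precisely the maximal torus $T_i\subset U(d_i)$ diagonal in that basis. The weak-isolation condition becomes: no non-scalar $S\in\mathcal{S}_\psi$ has $U_i(S)\in T_i$ (modulo scalars) at $n-1$ or more sites. By the structural preliminary, at least two of the $U_i(S)$ are non-scalar, so whichever site $i_0$ is ``free'', at least one $i\neq i_0$ forces a genuine codimension-$\geq 1$ restriction on the torus $T_i$. Hence for each fixed non-scalar $S$ the bad configurations $(T_1,\ldots,T_n)$ form a proper closed subvariety of $\prod_iU(d_i)/N(T_i)$, and a transversality/dimension count over the compact manifold $\mathcal{S}_\psi$, using that $\Phi_i:S\mapsto U_i(S)$ are group homomorphisms into $PU(d_i)$, shows that the union over $S$ of bad configurations remains closed of positive codimension. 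A generic choice of eigenbases---equivalently a generic $G$---then yields a weakly isolated state.

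\emph{Main obstacle.} The delicate part is the last genericity step when $\mathcal{S}_\psi$ is positive-dimensional: \emph{a priori} the image of a high-dimensional compact Lie group under $(\Phi_1,\ldots,\Phi_n)$ could swamp the configuration space. For finite $\mathcal{S}_\psi$ the conclusion is immediate because a finite union of proper closed subsets is still proper. For the general compact case the technical work would be to control this either by differentiating at the identity of $\mathcal{S}_\psi$---reducing the bad locus to an intersection of $\text{Lie}(\mathcal{S}_\psi)\subset\bigotimes_i\mathfrak{u}(d_i)$ with the commutants of the $G_i$ and exponentiating back to the group---or by a Sard-type dimension count applied directly to the smooth map $\mathcal{S}_\psi\times[n]\to\prod_iU(d_i)/N(T_i)$ parameterizing the bad configurations.
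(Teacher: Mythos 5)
Your overall strategy coincides with the paper's in the finite case: reduce to unitary symmetries by a local conjugation, observe that no non-trivial symmetry of a fully entangled state can be supported on a single site (the Schmidt-decomposition argument, which the paper uses verbatim), and then choose each $G_i$ so that its commutant in $U(d_i)$ avoids the relevant local factors. However, there is a genuine gap at exactly the point you flag as the ``main obstacle'', and it is not a removable technicality of your write-up: it is the one substantive input of the paper's proof that you are missing. The paper invokes Proposition \ref{prop:Compact-Finite} (from Ref.~\cite{GourWallach}): a compact stabilizer of a state is automatically \emph{finite}. Once $\mathcal{S}_\psi$ is finite, the ``bad'' set of $G$'s is a finite union of proper closed subsets (concretely, the paper picks $P_i=|v_i\rangle\langle v_i|+\identity$ with $\ket{v_i}$ outside the finitely many eigenspaces of the non-trivial elements of $\mathcal{S}_\phi^{(i)}$), and the argument closes immediately --- as you yourself note.

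For a positive-dimensional compact $\mathcal{S}_\psi$, neither of your two proposed fixes is carried out, and neither obviously succeeds. The set of $G_i>0$ commuting with a fixed non-scalar $U_i\in U(d_i)$ has real codimension at least $2$, so with two non-scalar sites the bad locus for a single $S$ has codimension roughly $4$; but a compact subgroup of $\bigotimes_i U(d_i)$ can have dimension far exceeding this, so the union over $S\in\mathcal{S}_\psi$ of these loci is not proper by a naive dimension count --- you would need additional input (e.g., injectivity properties of the maps $\Phi_i$, or a genuine transversality theorem for the incidence variety $\{(S,G):\text{bad}\}$), and the linearization-at-the-identity route faces the separate problem that an element $e^{i\theta A_i}$ can commute with $G_i$ for isolated $\theta$ without $[A_i,G_i]=0$, so the bad locus is not simply the exponential of a Lie-algebra condition. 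The clean resolution is that this case is vacuous: compactness forces finiteness. I would recommend either citing Proposition \ref{prop:Compact-Finite} to discharge the positive-dimensional case, or accepting that your proof as written establishes the theorem only under the additional hypothesis that $\mathcal{S}_\psi$ is finite.
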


In order to prove this, we need the following proposition, which states that a compact symmetry group is always finite, and it is proven in Ref.\;\cite{GourWallach}.

\begin{proposition}[\cite{GourWallach}]
\label{prop:Compact-Finite}
Let $\ket{\psi}$ be a state with a symmetry group $\mathcal{S}_\psi$ that is compact, then $\mathcal{S}_\psi$ is finite.
\end{proposition}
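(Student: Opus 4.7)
The plan is to show that $\mathcal{S}_\psi$ is a closed \emph{complex} Lie subgroup of $\bigotimes_{i=1}^n GL(d_i,\mathbb{C})$, and then to invoke the general principle that a compact complex Lie group must be zero-dimensional, and hence finite.

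First I would observe that $\mathcal{S}_\psi$ is closed in $\bigotimes_{i=1}^n GL(d_i,\mathbb{C})$: it is the preimage of the singleton $\{\ket{\psi}\}$ under the continuous map $S\mapsto S\ket{\psi}$. By Cartan's closed-subgroup theorem it is therefore a real Lie subgroup. Moreover, the defining equation $S\ket{\psi}=\ket{\psi}$ is complex-linear in the matrix entries of the factors of $S$, so the tangent space $\mathfrak{s}_\psi\subseteq\bigoplus_{i=1}^n \mathfrak{gl}(d_i,\mathbb{C})$ is closed under multiplication by $i$. In other words, $\mathcal{S}_\psi$ is actually a complex Lie subgroup with a complex Lie algebra.

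Next I would argue that compactness forces $\mathfrak{s}_\psi=\{0\}$. Suppose for contradiction that $0\neq X\in\mathfrak{s}_\psi$. Since $\mathfrak{s}_\psi$ is a complex subspace, the entire holomorphic one-parameter subgroup $\{\exp(zX):z\in\mathbb{C}\}$ lies in $\mathcal{S}_\psi$ and must therefore be relatively compact. Using the Jordan decomposition $X=X_s+X_n$ with $[X_s,X_n]=0$, this leads to a contradiction. If the nilpotent part is nonzero, then $\exp(zX_n)$ is a nonconstant matrix polynomial in $z$, so its operator norm is unbounded. If instead $X=X_s$ is semisimple and has a nonzero eigenvalue $\lambda=a+ib$, then the corresponding eigenvalue of $\exp(zX)$ has modulus $\exp(\mathrm{Re}(z\lambda))$, which is unbounded in $z\in\mathbb{C}$ unless $a=b=0$. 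Either case contradicts relative compactness, so $X=0$.

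Therefore $\mathcal{S}_\psi$ is a compact Lie group of dimension zero, i.e.\ a compact discrete group, and a compact discrete set is finite. I expect the step most likely to need care is the non-compactness argument for a complex one-parameter subgroup generated by a nonzero element of a complex Lie algebra; the closedness and Lie-subgroup claims are essentially bookkeeping. An alternative route that avoids the Jordan decomposition is to note that $\mathcal{S}_\psi$ is a complex algebraic subgroup defined by the polynomial equations $S\ket{\psi}=\ket{\psi}$ and to invoke the classical fact that a Euclidean-bounded complex algebraic group is necessarily finite; I would keep the Lie-theoretic version as the main line since it is more self-contained.
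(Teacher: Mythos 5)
Your argument is correct, but note that the paper itself does not prove this proposition: it is imported from Ref.~\cite{GourWallach}, and the text explicitly remarks that the proof given there requires some familiarity with algebraic geometry. What you have written is therefore a genuinely independent, self-contained route --- the Lie-theoretic/analytic counterpart of the algebraic-geometric fact that a Euclidean-bounded complex algebraic subgroup of $GL(N,\mathbb{C})$ is finite (which is essentially the ``alternative route'' you sketch at the end, and is close in spirit to the cited proof). The core of your argument is sound: the group of invertible product operators is a closed complex Lie subgroup of $GL(N,\mathbb{C})$, the stabilizer is closed in it, its Lie algebra $\{X=\sum_i \identity\otimes\cdots\otimes X_i\otimes\cdots\otimes\identity : X\ket{\psi}=0\}$ is a complex subspace, and a nonzero element of a complex subalgebra generates a holomorphic one-parameter subgroup $z\mapsto\exp(zX)$ whose image can never be relatively compact in $GL(N,\mathbb{C})$. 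What your approach buys is a proof accessible without algebraic geometry; what it costs is the Lie-theoretic bookkeeping (Cartan's theorem, identification of the Lie algebra) that the Zariski-closedness argument gets for free.

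Two small corrections. First, the ``general principle'' you invoke at the outset --- that a compact complex Lie group is zero-dimensional --- is false for abstract complex Lie groups: complex tori $\mathbb{C}^g/\Lambda$ are compact, connected, and positive-dimensional. It is true only for complex \emph{linear} Lie groups, i.e., complex Lie subgroups of some $GL(N,\mathbb{C})$, which is exactly what your one-parameter-subgroup argument establishes, so nothing in the proof actually breaks; only the framing should be adjusted. Second, in the case of a nonzero nilpotent part you bound $\exp(zX_n)$ rather than $\exp(zX)$; when $X_s$ and $X_n$ are both nonzero this needs one more line (e.g., pass to a Jordan block of $X$ of size at least $2$, whose exponential has an entry $z\,e^{z\lambda}$, unbounded over $z\in\mathbb{C}$). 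Both blemishes disappear if you replace the Jordan-decomposition case analysis by Liouville's theorem: the entries of $z\mapsto\exp(zX)$ are entire functions, so boundedness of the image would force them all to be constant, giving $\exp(zX)=\identity$ for all $z$ and hence $X=0$.
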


\begin{proof}[Proof of Theorem \ref{thm:WI_CompactSymm}]
Due to Proposition \ref{prop:Compact-Finite}, we are left to prove that any SLOCC class that contains a fully entangled state $\ket{\psi}$ with a finite symmetry group $\mathcal{S}_{\psi}$ contains weakly isolated states. Let us first recall that whenever there exists a state with finitely many symmetries, then there exists another state $\ket{\phi}$ in the same SLOCC class with finitely many unitary symmetries \cite{GourWallach}. Hence, it remains to show that if a state has only finitely many unitary symmetries, then there exists weak isolation. In order to show this, we denote a non-empty finite set of non-trivial $d\times d$ unitary matrices (i.e., $U\not\propto\identity$) by $\mathcal{F}$ and show that there exists $P>0$ such that $U^\dagger P U \not\propto P$ for all $U\in\mathcal{F}$. This can be easily seen from the fact that all $U\in\mathcal{F}$ have a finite set of eigenspaces. Hence, choosing $P$ as a positive operator with an eigenvector that is not in any of these finitely many eigenspaces proves the statement \cite{footnote:PosOpNotCommUs}. Since $\mathcal{S}_\phi$ is a finite unitary group, the corresponding local symmetry group $\mathcal{S}_\phi^{(i)}$ for every site $i\in[n]$ is also a finite unitary group. We can construct a state $\bigotimes_i \sqrt{P_i}\ket{\phi}$ such that $U^\dagger P_i U\not\propto P_i$ for all non-trivial $U\in\mathcal{S}_\phi^{(i)}$ and will show that it is weakly isolated. Given that the state is not weakly isolated if and only if there exists a non-trivial symmetry that (quasi-)commutes with $P_i$ for all but one site, it remains to show that no fully entangled state can have a symmetry of the form $S^{(j)}\otimes_{k\neq j} \identity_{d_k}\in\mathcal{S}_{\phi}$ unless it is trivial, i.e., $S^{(j)}=\identity_{d_j}$. This last statement follows easily by considering the Schmidt decomposition in the splitting, system $j$ versus the rest. Therefore, $\bigotimes_i \sqrt{P_i}\ket{\phi}$ is weakly isolated.\end{proof}

Since a unitary symmetry group is also compact (see also Appendix \ref{app:proof2ThmU_Symm}), the following corollary follows immediately from Theorem \ref{thm:WI_CompactSymm}.

\begin{corollary}\label{cor:UnitarySymm_WI}
Every $n$-qudit SLOCC class that contains a fully entangled state $\ket{\psi}$ which has a unitary symmetry group $\mathcal{S}_\psi \subseteq \bigotimes_{i=1}^n U(d_i,\mathbb{C})$ contains weakly isolated states.
\end{corollary}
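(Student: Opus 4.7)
The plan is to reduce the corollary directly to Theorem \ref{thm:WI_CompactSymm} by verifying that the hypothesis there (a compact symmetry group) is automatically met whenever the symmetry group is unitary. Since the theorem already does all of the heavy lifting, the only work left is a short topological argument.

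First I would observe that the product $\bigotimes_{i=1}^n U(d_i,\mathbb{C})$, viewed as a subset of the Euclidean space $\mathbb{C}^{D^2}$ with $D=\prod_i d_i$, is bounded (every tensor of unitaries has Frobenius norm $\sqrt{D}$) and closed (it is cut out by the continuous polynomial equations $U^\dagger U=\identity$ on each factor). By the Heine--Borel theorem it is therefore compact.

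Next I would note that the stabilizer $\mathcal{S}_\psi$ is itself closed in $\bigotimes_i GL(d_i,\mathbb{C})$: it is the preimage of the single point $\ket{\psi}$ under the continuous map $S\mapsto S\ket{\psi}$. Consequently, $\mathcal{S}_\psi$ is the intersection of a closed set with the compact set $\bigotimes_{i=1}^n U(d_i,\mathbb{C})$ (by hypothesis $\mathcal{S}_\psi$ sits inside the latter), and so $\mathcal{S}_\psi$ is a closed subset of a compact set, hence compact in the Euclidean topology that Theorem \ref{thm:WI_CompactSymm} uses.

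With compactness in hand, Theorem \ref{thm:WI_CompactSymm} applies verbatim and yields a weakly isolated state in the SLOCC class of $\ket{\psi}$, completing the proof. There is no real obstacle here; the only thing to be careful about is the choice of topology and the fact that we need $\mathcal{S}_\psi$ itself (not just an ambient group containing it) to be compact, which is precisely why the closedness of the stabilizer as the preimage of a point must be invoked. The more detailed verification that a unitary symmetry group is closed can be deferred to Appendix \ref{app:proof2ThmU_Symm}, as the paper already signals.
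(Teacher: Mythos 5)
Your proof is correct and follows essentially the same route as the paper: Corollary \ref{cor:UnitarySymm_WI} is deduced from Theorem \ref{thm:WI_CompactSymm} by showing that a unitary stabilizer is compact (this is Proposition \ref{prop:USymm-Compact} in Appendix \ref{app:proof2ThmU_Symm}). The only cosmetic difference is that you justify closedness of $\mathcal{S}_\psi$ via the preimage of a point under the continuous map $S\mapsto S\ket{\psi}$, whereas the paper phrases it in terms of Zariski closed sets being Euclidean closed; both arguments are valid.
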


The proof of Proposition \ref{prop:Compact-Finite} in Ref.\;\cite{GourWallach} requires some familiarity with algebraic geometry. However, Corollary \ref{cor:UnitarySymm_WI} can also be proven for $d_j=2\;\forall\;j\in[n]$ without relying on Proposition \ref{prop:Compact-Finite}. In Appendix \ref{app:proof2ThmU_Symm}, we provide an alternative proof which uses more elementary mathematics and contains several observations about the general structure of stabilizers that might be of independent interest to some readers.

To summarize all the ideas behind the proof, weak isolation exists in all SLOCC classes with compact stabilizers because there are not enough symmetries in any compact stabilizer to quasi-commute with every possible $\bigotimes_{i=1}^n H_i>0$ for $n-1$ qudits and by Lemma \ref{lemma:isolation}, this leads to weak isolation. Although weak isolation does not imply isolation, these results suggest that it is more promising to investigate SLOCC classes with states stabilized by a non-compact symmetry group to identify isolation-free SLOCC classes.

\subsection{SLOCC classes with non-compact stabilizers}\label{sec:noncompact}

We have seen that non-compact stabilizers are necessary to identify SLOCC class without weak isolation. In this section we show that tripartite states with a particular non-compact stabilizer are indeed isolation-free. An example of such an SLOCC class is given by the totally antisymmetric state. Extending such stabilizers to more parties leads, however, again to weak isolation, as we show at the end of this section. 

\begin{theorem}\label{thm:3Qudit_IsoFree}
Let $\ket{\Psi_s}$ be a $3$-qudit state with a symmetry group $\mathcal{S}_{\Psi_s}=\{S^{\otimes 3} \mbox{ with } S\in SL(d,\mathbb{C})\}$, with $d\geq 2$. Then, no weakly isolated state exists within the SLOCC class of $\ket{\Psi_s}$. In particular, the SLOCC class of the 3-qutrit totally antisymmetric state $\ket{A_3}$ is isolation-free.
\end{theorem}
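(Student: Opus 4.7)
The plan is to apply Lemma~\ref{lemma:isolation} directly: for an arbitrary $\ket{\Phi} \propto g\ket{\Psi_s}$ with $g=\bigotimes_{i=1}^3 g_i$ and $G_i = g_i^\dagger g_i$, I will construct a single $S \in SL(d,\mathbb{C})$ that is \emph{not} a scalar multiple of the identity and such that $S$ quasi-commutes with $G_i$ at (at least) two sites. Non-scalarness of $S$ guarantees $S^{\otimes 3}\neq\identity^{\otimes 3}$, so $S^{\otimes 3}$ is a genuine non-identity element of $\mathcal{S}_{\Psi_s}$, and the hypothesis of Lemma~\ref{lemma:isolation} fails for $\ket{\Phi}$. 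Running this for every $\ket{\Phi}$ in the SLOCC class shows no state is weakly isolated, and since isolated implies weakly isolated, the class is isolation-free. The statement about $\ket{A_3}$ then follows from Observation~\ref{obs:An_symm_group}, which verifies the hypothesis with $d=3$.

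Without loss of generality I target quasi-commutation at sites $1$ and $2$. Solving the site-$1$ condition $S^\dagger G_1 S \propto G_1$ via the change of variables $U := \mu^{-1} G_1^{1/2} S G_1^{-1/2}$ gives ``$U$ is unitary'', i.e.\ $S = \mu\, G_1^{-1/2} U G_1^{1/2}$ for some $\mu\in\mathbb{C}^*$ and $U\in U(d)$. Substituting this into $S^\dagger G_2 S \propto G_2$ and introducing the positive definite Hermitian operator $B := G_1^{-1/2} G_2 G_1^{-1/2}$, the condition collapses to $U^\dagger B U = \lambda B$. Both sides are positive, so $\lambda>0$; taking the trace yields $\lambda\,\mathrm{tr}(B)=\mathrm{tr}(U^\dagger B U)=\mathrm{tr}(B)$, whence $\lambda=1$. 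Thus the original two quasi-commutation conditions reduce to the single requirement $[U,B]=0$ for some unitary $U$.

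The existence of a non-scalar $U$ commuting with $B$ is elementary: the centralizer of a Hermitian operator in $U(d)$ is a direct product of unitary groups on its eigenspaces, which for $d\geq 2$ is never reduced to scalars (if $B$ has a multi-dimensional eigenspace, the corresponding $U(d_k)$ factor supplies non-scalar unitaries; otherwise $B$ has $d$ distinct eigenlines and any non-uniform diagonal phase in its eigenbasis works; in the extreme $B\propto\identity$ the centralizer is all of $U(d)$). Picking such a $U$ and choosing $\mu$ with $\mu^d\det U = 1$ gives $S\in SL(d,\mathbb{C})$. Because conjugation preserves the spectrum, $G_1^{-1/2} U G_1^{1/2}$ has the same eigenvalues as $U$, so $S$ inherits non-scalarness from $U$, completing the construction.

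The main obstacle is the middle reduction: without the positivity-plus-trace (equivalently, determinant) argument forcing $U^\dagger B U \propto B$ to upgrade to $[U,B]=0$, one would be stuck with a condition carrying an extra scalar degree of freedom, for which the set of admissible $U$ is much harder to exhibit explicitly. A secondary but essential point, easily missed, is that one must guarantee $S$ is \emph{non-scalar} rather than just $S\neq\identity$, because only the former ensures $S^{\otimes 3}\neq\identity^{\otimes 3}$; the spectrum-preservation observation addresses this cleanly. Once these two points are in hand, the rest is a tidy application of Lemma~\ref{lemma:isolation}.
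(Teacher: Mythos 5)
Your proof is correct, and it reaches the conclusion by a construction that differs from the paper's in one substantive way. The paper first exploits the symmetry $\tilde{g}_3^{\otimes 3}/(\det\tilde{g}_3)^{3/d}$ to gauge-fix every state in the class to the form $g_1\otimes g_2\otimes\identity\ket{\Psi_s}$; with $G_3=\identity$, any unitary quasi-commutes with the third site for free, so it suffices to pick a non-scalar unitary $S$ diagonal in the eigenbasis of $G_1$ (or $G_2$), and Lemma~\ref{lemma:isolation} applies immediately. You skip the gauge-fixing entirely and instead solve the harder simultaneous problem of finding a non-scalar $S\in SL(d,\mathbb{C})$ quasi-commuting with \emph{two arbitrary} positive definite operators $G_1,G_2$, via the parametrization $S=\mu\,G_1^{-1/2}UG_1^{1/2}$ and the reduction to $[U,B]=0$ with $B=G_1^{-1/2}G_2G_1^{-1/2}$ (the positivity-plus-trace step upgrading $U^\dagger BU\propto B$ to equality is the same device the paper uses elsewhere, e.g.\ in Appendix~\ref{app:proof_Sn_NOweakIsolation}). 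What your route buys is a slightly stronger standalone statement — for any two positive definite operators on $\mathbb{C}^d$, $d\geq2$, there is a non-scalar special-linear matrix quasi-commuting with both — and it avoids invoking the one-parameter family of diagonal symmetries; what it costs is a longer computation where the paper's version is a two-line observation. Your attention to non-scalarness of $S$ (rather than merely $S\neq\identity$) and to the determinant normalization is correct and matches what Lemma~\ref{lemma:isolation} actually requires.
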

\begin{proof}
Using that $\frac{\tilde{g}_3^{\otimes 3}}{(\det \tilde{g}_3)^{3/d}}\in \mathcal{S}_{\Psi_s}$, for any $\tilde{g}_3$, we have that any state in the SLOCC class of $\ket{\Psi_s}$ can be written as 
\begin{equation}
    \tilde{g}_1\otimes \tilde{g}_2\otimes \tilde{g}_3\ket{\Psi_s} \propto g_1 \otimes g_2 \otimes \mathbbm{1} \ket{\Psi_s} \label{eq:A3_SLOCC}
\end{equation}
where $\tilde{g}_i\in GL(d,\mathbb{C})$ and $g_i=\tilde{g}_i\tilde{g}_3^{-1}$ for $i=1,2$. The proportionality factor in the equation above comes from $\tilde{g}_3^{\otimes 3} \ket{\Psi_s}=(\det \tilde{g}_3)^{3/d}\ket{\Psi_s}$. To show that any state of the form of Eq.\;(\ref{eq:A3_SLOCC}) is not weakly isolated, we define $G_i=g_i^\dagger g_i$ for $i=1,2$ and use Lemma \ref{lemma:isolation}. Since the symmetry group of $\ket{\Psi_s}$ is $\{S^{\otimes 3} \mbox{ with } S\in SL(d,\mathbb{C})\}$, there always exists a unitary $S\not\propto\mathbbm{1}$ that satisfies $S^\dagger \mathbbm{1} S \propto \mathbbm{1}$ and $S^\dagger G_i S \propto G_i$ for $i=1$ or 2 by choosing the eigenbasis of $S$ identical to that of the positive operator $G_i$. As the quasi-commutation relation holds for at least 2 out of 3 sites, any state in form of Eq.\;(\ref{eq:A3_SLOCC}) is not weakly isolated by Lemma \ref{lemma:isolation}. Since the 3-qutrit totally antisymmetric state $\ket{A_3}$ has the symmetry group $\{S^{\otimes 3} \mbox{ with } S\in SL(3,\mathbb{C})\}$ by Observation \ref{obs:An_symm_group}, its SLOCC class has no (weakly) isolated states.
\end{proof}

This makes the SLOCC class of $\ket{A_3}$ the third known isolation-free class that has identical dimension in all local systems. Note however, that this property does not extend to larger numbers of parties, as we show in the following theorem, which is proven in Appendix \ref{app:proof_Sn_NOweakIsolation}.

\begin{theorem}\label{thm:Sn_weakIsolation}
Let $\ket{\Psi_s}$ be an $n$-qudit state $(n\geq4)$ with a symmetry group $\mathcal{S}_{\Psi_s}\subseteq\{S^{\otimes n} \mbox{ with } S\in GL(d,\mathbb{C})\}$ with $d\geq 2$ arbitrary. Then, there exist weakly isolated states within the SLOCC class of $\ket{\Psi_s}$.
\end{theorem}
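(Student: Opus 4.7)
The plan is to apply Lemma~\ref{lemma:isolation}. Since every element of $\mathcal{S}_{\Psi_s}$ has the form $S^{\otimes n}$, the local components coincide on every site, and the non-trivial symmetries in $\mathcal{S}_{\Psi_s}\setminus\{\identity\}$ correspond to $S\in GL(d,\mathbb{C})$ that are \emph{not} scalar multiples of $\identity$: indeed, if $S=c\identity$ then $S^{\otimes n}=c^n\identity$ lies in $\mathcal{S}_{\Psi_s}$ only when $c^n=1$, and then $S^{\otimes n}=\identity$. Hence exhibiting a weakly isolated state $g\ket{\Psi_s}$ reduces to finding $g_i\in GL(d,\mathbb{C})$, with $G_i:=g_i^\dagger g_i$, such that for every $(n-1)$-subset $I\subseteq[n]$ the joint system $\{S^\dagger G_i S\propto G_i:i\in I\}$ admits only $S\propto\identity$.

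The technical core I would prove first is the following lemma. Suppose $G_1,G_2,G_3>0$ satisfy (i) $B:=G_1^{-1/2}G_2G_1^{-1/2}$ has $d$ distinct eigenvalues and (ii) $C:=G_1^{-1/2}G_3G_1^{-1/2}$ has all off-diagonal entries nonzero in the eigenbasis of $B$. Then $S^\dagger G_i S\propto G_i$ for $i=1,2,3$ forces $S\propto\identity$. Comparing determinants in the three relations identifies the common proportionality constant as $\lambda:=|\det S|^{2/d}$; rescaling $T:=S/\sqrt{\lambda}$ gives $T^\dagger G_i T=G_i$. The matrix $U:=G_1^{1/2}TG_1^{-1/2}$ is then easily seen to be unitary, and the $i=2,3$ relations translate into $[U,B]=[U,C]=0$. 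Hypothesis (i) forces $U$ diagonal in the $B$-eigenbasis; entrywise inspection of $[U,C]=0$ under hypothesis (ii) then forces all diagonal entries of $U$ to coincide, so $U\propto\identity$ and hence $S\propto\identity$.

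Granted this lemma, the theorem follows by choosing $G_1,\ldots,G_n>0$, $n\geq 4$, so that every 3-element subset, under some labeling of its elements as $(G_{i_1},G_{i_2},G_{i_3})$, satisfies (i) and (ii). Since these are Zariski-open conditions, a suitable choice exists; concretely $G_i=\identity+\epsilon H_i$ with small $\epsilon>0$ and sufficiently generic Hermitian $H_i$'s works. Every $(n-1)$-subset then contains such a triple (as $n-1\geq 3$), hence the joint quasi-commutation over any $(n-1)$-subset forces $S\propto\identity$. Setting $g_i:=G_i^{1/2}$ produces a state $g\ket{\Psi_s}$ in the SLOCC class that is weakly isolated by Lemma~\ref{lemma:isolation}.

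The main obstacle I anticipate is the combinatorial/genericity step: verifying that hypotheses (i) and (ii) of the key lemma are satisfied \emph{simultaneously} for every triple drawn from $\{G_1,\ldots,G_n\}$, not just for one triple in isolation. A cleaner alternative to a pure genericity argument is an explicit inductive construction, starting with $G_1=\identity$, $G_2$ diagonal with distinct positive eigenvalues, and $G_3$ positive-definite with all off-diagonal entries nonzero, then adjoining $G_4,\ldots,G_n$ one by one while avoiding the proper subvarieties on which (i) or (ii) would fail for some induced triple. Once this step is in place, everything else is routine linear algebra, and in particular no use is made of the finer structure of $\mathcal{S}_{\Psi_s}$ beyond it lying in $\{S^{\otimes n}\}$.
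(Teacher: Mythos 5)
Your proof is correct, and it takes a genuinely different route from the paper's. The core reduction is sound: positivity of both sides of $S^\dagger G_i S\propto G_i$ forces each proportionality constant to equal $|\det S|^{2/d}$, the rescaled $T$ then conjugates to a unitary $U=G_1^{1/2}TG_1^{-1/2}$ commuting with $B$ and $C$, and your hypotheses (i)--(ii) pin down $U\propto\identity$. The paper instead splits into $n\geq 5$ and $n=4$. For $n\geq 5$ it places $G=\identity$ on two sites, so that any $S$ quasi-commuting on $n-1$ sites hits at least one identity (forcing $S$ proportional to a unitary) and at least two non-degenerate $G_i$ with mutually non-orthogonal eigenbases (forcing $S\propto\identity$); only for $n=4$, where that counting fails on the triple $\{1,2,3\}$, does it perform the $G_1^{-1/2}$-conjugation you apply to every triple, and there it verifies the spectral conditions by an explicit first-order perturbative computation with Fourier-type bases. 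Your uniform argument removes the case distinction and the perturbation theory, at the price of an existential genericity step; the paper's construction buys fully explicit weakly isolated states. Your condition (ii) is also marginally cleaner than the paper's analogue (non-degeneracy of both conjugated operators plus one eigenvector of the first non-orthogonal to all eigenvectors of the second), since it needs no assumption on the spectrum of $C$. The simultaneity issue you flag is real but unproblematic: for each of the finitely many ordered triples there are finitely many failure conditions (one discriminant, $d(d-1)$ off-diagonal entries), each cutting out a proper closed subset of the space of tuples of positive matrices, and your ansatz $G_i=\identity+\epsilon H_i$ with generic Hermitian $H_i$ realizes all of them at first order in $\epsilon$, hence for all sufficiently small $\epsilon>0$.
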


Theorem \ref{thm:Sn_weakIsolation} and the fact that the symmetries of the $n$-qunit totally antisymmetric states are of the form $S^{\otimes n}$ implies that the SLOCC classes of the $n$-qunit totally antisymmetric states, $\ket{A_n}$, with $n\geq4$ contain weakly isolated states.

Hence, in order to find new isolation-free SLOCC classes, one needs to consider states with non-compact stabilizer different than $S^{\otimes n}$. However, a more general symmetry group does not preclude isolation. In fact, we show next that all SLOCC classes of $n$-qubit states with $n\leq4$, with the exception of the GHZ and W classes, contain weakly isolated states. 

\begin{observation}
The only isolation-free SLOCC classes for $n\leq4$ qubits that contain fully entangled states are the GHZ and W classes.\label{obs:nleq4_WI}
\end{observation}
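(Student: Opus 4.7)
The plan is to handle $n=2,3,4$ separately, invoking known classifications of qubit SLOCC classes together with Theorem \ref{thm:WI_CompactSymm} and Lemma \ref{lemma:isolation}.

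For $n=2$, every fully entangled pure two-qubit state is SLOCC-equivalent to the Bell state, so the unique fully entangled SLOCC class coincides with both the GHZ and the W class. For $n=3$, the D\"ur--Vidal--Cirac classification \cite{3qubitSLOCC} gives exactly two fully entangled SLOCC classes, the GHZ and W classes, both of which are isolation-free by \cite{turgutGHZ,turgutW}. Only the four-qubit case is therefore substantive.

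For $n=4$, I would proceed through the Verstraete et al.\;classification \cite{4qubitSLOCC} of four-qubit SLOCC classes into nine normal-form families. The GHZ and W classes are isolation-free by \cite{turgutGHZ,turgutW}. For every remaining SLOCC class, the stabilizer of a seed state has been computed explicitly in \cite{MESpaper,4qubitMES}. Whenever that stabilizer is finite, or more generally compact, Theorem \ref{thm:WI_CompactSymm} immediately produces a weakly isolated state of the form $\bigotimes_i \sqrt{P_i}\ket{\Psi_s}$. For the remaining classes whose stabilizer has a continuous non-compact component, I would use the explicit parametric form of $\mathcal{S}_{\Psi_s}$ from \cite{MESpaper,4qubitMES} to exhibit positive operators $P_1,\ldots,P_4$ such that at least two sites $i$ satisfy $(S^{(i)})^\dagger P_i S^{(i)}\not\propto P_i$ for every non-identity $S$ in the stabilizer; Lemma \ref{lemma:isolation} then certifies weak isolation of $\bigotimes_i \sqrt{P_i}\ket{\Psi_s}$.

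The main obstacle is to handle the non-compact families uniformly. In each such family the continuous part of the stabilizer depends on only a few parameters and acts at each site by fixed lower-triangular or diagonal blocks (as already visible for the W case in Eq.\;(\ref{eq:S_W})); the quasi-commutation condition for a given non-trivial $S$ then carves out a proper algebraic subvariety in the cone of positive $P_i$ at every site on which $S^{(i)}$ is non-scalar. Since, by the argument recalled in the proof of Theorem \ref{thm:WI_CompactSymm}, any non-trivial stabilizer element of a fully entangled state must act non-trivially on at least two sites, a generic choice of the $P_i$'s fails quasi-commutation on the required number of sites. Each non-compact family thus reduces to a short, explicit check against the stabilizer tables of \cite{MESpaper,4qubitMES}, concluding the observation.
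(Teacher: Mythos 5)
Your overall strategy coincides with the paper's: dispose of $n=2,3$ by the known classifications, and for $n=4$ use the stabilizer characterization of Ref.\;\cite{4qubitMES} together with Theorem \ref{thm:WI_CompactSymm} for the compact/finite stabilizers and a case-by-case application of Lemma \ref{lemma:isolation} for the infinite non-compact ones (this is exactly what the paper does in its Appendix D). However, the justification you offer for the non-compact families contains a genuine gap. You argue that, for each \emph{fixed} non-trivial $S$, the positive operators $P_i$ quasi-commuting with $S^{(i)}$ form a proper subvariety, and conclude that a \emph{generic} choice of $(P_1,\dots,P_4)$ defeats \emph{all} non-trivial $S$ simultaneously. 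That inference fails when the stabilizer is a continuous family: the union over uncountably many $S$ of these proper subvarieties can cover the whole cone. Indeed, your argument applied verbatim to the GHZ and W stabilizers of Eqs.\;(\ref{eq:S_GHZ})--(\ref{eq:S_W}) would ``prove'' weak isolation there as well, whereas Sec.\;\ref{sec:KnownIsolationFreeClasses} shows that for \emph{every} $H=\bigotimes_i H_i>0$ some non-trivial symmetry quasi-commutes on $n-1$ sites. What actually makes the $4$-qubit non-GHZ/W families work is not genericity but the \emph{coupling} of the stabilizer parameters across sites (e.g.\ the same $z$ appearing as $P_z$ at site $1$ and $P_{z^{-1}}$ or a related matrix at site $3$): one must choose the $H_i$ so that the parameter value forced by quasi-commutation at one site is incompatible with the value forced at its partner site, as in the explicit conditions $|\tfrac{c_i b_i}{a_i b_i^*}|\neq|\tfrac{a_j b_j^*}{c_j b_j}|$ for the pairs $(i,j)\in\{(1,3),(2,4)\}$ used in the paper. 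Your plan already gestures at the relevant fact (``the continuous part depends on only a few parameters''), but the genericity step as stated must be replaced by this parameter-incompatibility argument, carried out against the stabilizer tables of Ref.\;\cite{4qubitMES} family by family.
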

\begin{proof}
Considering only fully entangled state, we have for 2 qubits, only the SLOCC class of the Bell state $\ket{\Phi^+} (=\ket{\text{GHZ}_2})$, and for 3 qubits, the ones of $\ket{\text{GHZ}_3}$ and $\ket{\text{W}_3}$ \cite{3qubitSLOCC}. For 4 qubits, despite that there are infinitely many SLOCC classes with fully entangled states \cite{4qubitSLOCC}, the stabilizer for every class was characterized in Ref.\;\cite{4qubitMES}. By Theorem \ref{thm:WI_CompactSymm} and Proposition \ref{prop:Compact-Finite}, we only need to show that weak isolation exists in classes with (non-compact) infinite stabilizers. In Appendix \ref{app:4qubit_WI}, we show that for every infinite stabilizer that corresponds to a 4-qubit fully entangled state (except for states in the W or GHZ classes) as characterized in Ref.\;\cite{4qubitMES}, one can always find $\otimes_{i=1}^4 H_i>0$ with which no non-trivial symmetry can quasi-commute for more than 2 sites.
\end{proof}

Notice that such a case-by-case analysis cannot be immediately extended to more parties. A full characterization of all $n$-qubit SLOCC classes and their stabilizers for $n>4$ is a formidable problem that, to the best of our knowledge, remains open so far.

\section{Entanglement properties of the $\ket{A_3}$ SLOCC class}\label{sec:MES_A3}

In this section, we investigate the entanglement properties of the SLOCC class of the 3-qutrit totally antisymmetric state $\ket{A_3}$. In particular, we will identify a superset, $M_{A_3}$, of the MES of this SLOCC class. It comprises of states that are not $\text{LOCC}_\mathbb{N}$-reachable and can collectively reach any SLOCC-equivalent state with LOCC in at most 3 rounds. Then, we will show that within the $M_{A_3}$, LOCC transformations from a full-measure subset to a zero-measure subset are impossible even with infinite-round protocols (Lemma \ref{lemma:MES_LOCCN_SEPforbid}). Finally, we show that all states in the $M_{A_3}$ can be obtained (up to LU) from applying two 2-qutrit entangling unitaries on a qutrit and a 2-qubit entangled state embedded in the Hilbert space of 2 qutrits.

\begin{observation}\label{obs:DiagStatesReachAll}
Any state in the SLOCC class of $\ket{A_3}$ is either LU equivalent to or can be reached from a state in one of the following three forms (up to LU) via LOCC in at most 3 rounds:
\begin{enumerate}[(i)]
    \item $\ket{A_3}$,
    \item $\ket{\psi_2(\alpha_1;\alpha_2)}\propto\text{diag}(\sqrt{\alpha_1},1,1)\otimes\text{diag}(\sqrt{\alpha_2},1,1)\otimes\mathbbm{1}\ket{A_3}$ where $\alpha_1,\alpha_2>0$, $\alpha_1\neq\alpha_2$ and $\alpha_1,\alpha_2\neq1$,
    \item $\ket{\psi_3(\alpha_1,\beta_1;\alpha_2,\beta_2)}\propto\text{diag}(\sqrt{\alpha_1},\sqrt{\beta_1},1)\otimes\text{diag}(\sqrt{\alpha_2},\sqrt{\beta_2},1) \otimes \mathbbm{1}\ket{A_3}$ where $\alpha_1,\alpha_2,\beta_1,\beta_2>0$, $\alpha_1\neq\beta_1\neq\beta_2\neq\alpha_2$, $\alpha_1\neq\alpha_2$, $\frac{\alpha_1}{\beta_1}\neq\frac{\alpha_2}{\beta_2}$ and $\alpha_1,\alpha_2,\beta_1,\beta_2\neq1$.
\end{enumerate}
Moreover, these states are not reachable via $\text{LOCC}_\mathbb{N}$. Hence, the MES of $\ket{A_3}$ is contained in the union set of all these states, which we call $M_{A_3}$.
\end{observation}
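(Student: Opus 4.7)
The plan is to split the proof into two logically independent parts: (B) showing that every state in $M_{A_3}$ is not $\text{LOCC}_\mathbb{N}$-reachable, and (A) showing that every other state in the SLOCC class of $\ket{A_3}$ is either LU-equivalent to a state in $M_{A_3}$ or can be reached from one via LOCC in at most three rounds.

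For part (B), I would apply Theorem~\ref{thm:reachability} directly, using the stabilizer $\mathcal{S}_{A_3}=\{S^{\otimes 3}: S\in SL(3,\mathbb{C})\}$ from Observation~\ref{obs:An_symm_group}. For each form, $H_3=\mathbbm{1}$ while $H_1,H_2$ are diagonal. For each choice of the distinguished site, I would analyse the constraints $S^\dagger H_i S\propto H_i$ at the remaining two sites. In form~(iii) the condition at site~3 forces $S^\dagger S\propto\mathbbm{1}$, i.e., $S$ proportional to unitary; the off-diagonal constraints from $H_1$ and $H_2$ yield a linear system whose kernel vector is $(\beta_1-\beta_2,\,\alpha_2-\alpha_1,\,\alpha_1\beta_2-\alpha_2\beta_1)$, so the hypotheses $\alpha_1\neq\alpha_2$, $\beta_1\neq\beta_2$, $\alpha_1/\beta_1\neq\alpha_2/\beta_2$ and $\alpha_k,\beta_k\neq 1$ ensure that the only solutions are diagonal matrices. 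Any diagonal unitary $S$ with $\det S=1$ quasi-commutes with \emph{all three} $H_i$ simultaneously, contradicting the strict non-proportionality condition~(ii) of Theorem~\ref{thm:reachability} at the distinguished site. Analogous (and simpler) arguments using the block structure $\alpha_k\oplus\mathbbm{1}_2$ handle forms~(ii), while form~(i) is immediate since $H_i=\mathbbm{1}$ for all $i$.

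For part (A), I first use the reduction from the proof of Theorem~\ref{thm:3Qudit_IsoFree} to write any state as $g_1\otimes g_2\otimes\mathbbm{1}\ket{A_3}$. The polar decompositions $g_i=U_iP_i$ absorb LUs on sites~1 and~2, and an LU on site~3 combined with a compensating symmetry $V^{\otimes 3}$ (with $V\in SL(3,\mathbb{C})$) that preserves the identity on the third factor brings the state to a canonical form $D_1\otimes\tilde P_2\otimes\mathbbm{1}\ket{A_3}$ with $D_1$ positive diagonal and $\tilde P_2>0$ (not necessarily diagonal). If $\tilde P_2$ happens to be diagonal and the non-equality conditions hold, the state is LU-equivalent to one of the forms in $M_{A_3}$. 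For the remaining states, I would construct an explicit single-round LOCC protocol: party~2 measures with three Kraus operators $M_k\propto\tilde P_2\,\tilde V_k^\dagger D_2^{-1}$, where $\tilde V_k$ are the diagonal matrices in $SL(3,\mathbb{C})$ whose entries are triples of cube roots of unity. The identity $\sum_k \tilde V_k \tilde P_2^2\tilde V_k^\dagger = 3\,\mathrm{diag}(\tilde P_2^2)$ gives $\sum_k M_k^\dagger M_k=\mathbbm{1}$ as soon as $D_2^2=3\,\mathrm{diag}(\tilde P_2^2)$; for each outcome, using $\tilde V_k^{\otimes 3}\ket{A_3}=\ket{A_3}$ pushes $\tilde V_k^\dagger$ from party~2 onto parties~1 and~3, and because $\tilde V_k$ is diagonal it commutes with $D_1$ and can be undone by an LU on site~1 while site~3 absorbs it by another LU. This produces the target $D_1\otimes\tilde P_2\otimes\mathbbm{1}\ket{A_3}$ in one round starting from a form~(iii) state with $(\alpha_i,\beta_i)$ fixed by $D_1$ and $\mathrm{diag}(\tilde P_2^2)$. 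Degenerate configurations (e.g., $D_1\propto\mathbbm{1}$, or $\mathrm{diag}(\tilde P_2^2)$ with repeated entries) violate the non-equalities of form~(iii), and I would handle them using form~(i) or~(ii) as starting point, inserting up to two additional measurement rounds on parties~1 or~3 to adjust the remaining invariants, for a total of at most three rounds.

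The main obstacle is the case analysis in part~(A): one must verify that for every degeneracy pattern of $D_1$ and $\mathrm{diag}(\tilde P_2^2)$ arising from the canonical form, an admissible starting state in $M_{A_3}$ exists whose parameters satisfy all the non-equality conditions, and that the matching protocol never exceeds three rounds. Careful bookkeeping is required because the symmetry $S^{\otimes 3}$ couples the operators on the two non-trivial parties in a way that constrains which $M_{A_3}$ parameters can be matched by which protocol.
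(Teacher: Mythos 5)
Your overall architecture matches the paper's: reduce any state to $h_1\otimes h_2\otimes\mathbbm{1}\ket{A_3}$, bring it by LU and a compensating $V^{\otimes 3}$ symmetry to a canonical form with one diagonal and one general factor, connect that form to a fully diagonal state $\sqrt{D_1}\otimes\sqrt{D_2}\otimes\mathbbm{1}\ket{A_3}$ by a one-round ``diagonal twirl'' built from the symmetry $Z^{\otimes 3}$ with $Z=\mathrm{diag}(1,\omega,\omega^2)$, and prove non-reachability of the three listed forms via Theorem~\ref{thm:reachability} by showing that any $S^{\otimes 3}$ quasi-commuting with two of the three $H_i$ must quasi-commute with all three. (The paper twirls on party 1 rather than party 2; that is immaterial.) Your part~(B) and your twirl protocol are sound.

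The genuine gap is the step you yourself flag as ``the main obstacle'': showing that every fully diagonal state $\mathrm{diag}(\alpha_1,\beta_1,1)\otimes\mathrm{diag}(\alpha_2,\beta_2,1)\otimes\mathbbm{1}\ket{A_3}$ that violates the non-equality conditions of forms (ii)/(iii) is nevertheless reachable, within the round budget, from a state that does satisfy them. This is not routine bookkeeping --- it occupies roughly half of the paper's proof in Appendix~\ref{app:DiagStatesReachAll} --- and it requires two tools absent from your sketch. First, measurements built from the \emph{non-diagonal} unitary symmetry $\tilde{U}=1\oplus\bigl(\begin{smallmatrix}0&-1\\1&0\end{smallmatrix}\bigr)\in SL(3,\mathbb{C})$: the two-outcome measurement $\{\frac{1}{\sqrt{\beta+1}}\sqrt{D_1}\tilde{U}^{j}\,\mathrm{diag}(\alpha_1^{-1/2},1,1)\}_{j=0,1}$ with $\alpha_1=\frac{2\alpha}{\beta+1}$ converts $\mathrm{diag}(\alpha_1,1,1)$ on one site into $\mathrm{diag}(\alpha,\beta,1)$. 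Diagonal twirls cannot change the multiset of singular values on a site, so your plan of ``additional rounds on parties 1 or 3 to adjust the remaining invariants'' cannot be realized without symmetries of this non-diagonal kind, which you never invoke. Second, the redistribution of diagonal factors among all three sites via the symmetry $\sqrt{D_1}^{\otimes 3}/\det\sqrt{D_1}$, which is what disposes of the entire family $\alpha_1/\beta_1=\alpha_2/\beta_2$; these states satisfy none of the form-(ii)/(iii) conditions and are not visibly downstream of any $M_{A_3}$ state otherwise. Finally, the matching of parameters is delicate in ways your sketch does not anticipate: when $\delta=2\alpha/(\beta+1)$ the natural type-(ii) starting point would need $\alpha_1=\alpha_2$, which is excluded from form (ii), and the paper must instead use a two-step protocol from $\ket{A_3}$ in which party 3 acts non-trivially first. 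Until the full list of degeneracy patterns is enumerated and each is closed with an explicit protocol, neither the three-round bound nor even the bare claim that $M_{A_3}$ reaches every state is established.
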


We prove this observation in Appendix \ref{app:DiagStatesReachAll}. Note that within the SLOCC class of $\ket{A_3}$, many states can be reached via LOCC. In fact, the MES of the $\ket{A_3}$ class is a zero-measure subset of the whole SLOCC class since the MES is contained in $M_{A_3}$ which is parametrized by 4 positive real numbers (see Observation \ref{obs:DiagStatesReachAll}), while general states in the SLOCC class require 8 real parameters \cite{footnote:MA3_zero_measure}.

Let us note that some transformations from states in $M_{A_3}$ to other states in the SLOCC class are achievable with non-trivial 3-round LOCC protocols. We illustrate that with an example: the transformation from $\ket{A_3}$ to $\ket{\psi}\propto h_1\otimes \sqrt{D_2}\otimes\identity\ket{A_3}$ where $h_1=\frac{1}{\sqrt{6}}\begin{pmatrix}
\sqrt{27+\sqrt{3}} & \sqrt{3-\sqrt{3}}\\
\sqrt{3-\sqrt{3}} & \sqrt{3+\sqrt{3}}
\end{pmatrix}\oplus1$ and $D_2=\text{diag}(25,9,1)$. First, party 3 applies a measurement to the third qutrit of $\ket{A_3}$ with operators $\{\sqrt{\frac{5}{11}}\text{diag}(\frac{1}{\sqrt{5}},1,1)X^j\}_{j=0,1,2}$ where $X=\sum_{i=0}^2|(i+1)\text{mod}\;3\rangle\langle i|$ and parties 1 and 2 each apply $X^j$ depending on the measurement outcome $j$ to get $\mathbbm{1}\otimes\mathbbm{1}\otimes\text{diag}(\frac{1}{\sqrt{5}},1,1)\ket{A_3} \propto \text{diag}(\sqrt{5},1,1)^{\otimes2}\otimes\mathbbm{1}\ket{A_3}$. Next, party 2 measures the state with operators $\{\frac{1}{\sqrt{10}}\text{diag}(5,3,1)\tilde{U}^k \text{diag}(\frac{1}{\sqrt{5}},1,1)\}_{k=0,1}$ where $\tilde{U}=1\oplus\begin{pmatrix}
0& -1 \\
1 & 0 
\end{pmatrix}$ and parties 1 and 3 each apply $\tilde{U}^k$ depending on the measurement outcome $k$. Finally, party 1 measures with $\{\frac{1}{\sqrt{3}}h_1 \text{diag}(\frac{1}{\sqrt{5}},1,1)Z^r\}_{r=0,1,2}$ and each of the remaining parties apply $Z^r$ depending on the measurement outcome $r$, where $Z=\text{diag}(1,e^{i\frac{2\pi}{3}},e^{i\frac{4\pi}{3}})$. Despite the fact that there are 
certain instances [e.g., $h_1$ and $D_2$ such that $\text{diag}(h_1^\dagger h_1)=D_2$] where a 2-round protocol is sufficient, it is worth noting that in this protocol, all three parties act non-trivially despite $g_3=h_3$. Other instances for state transformations that can be achieved with LOCC protocols using more than 2 rounds of communication in a non-trivial way are presented in  Ref.\;\cite{NonTrivial4round}.

In the process of characterizing the MES of the $\ket{A_3}$ class, we show that for certain subsets of $M_{A_3}$, some states are not related even via SEP (and, hence, by LOCC). More specifically, we prove in Lemma \ref{lemma:MES_LOCCN_SEPforbid} that none of the type-(iii) states in $M_{A_3}$ (see Observation \ref{obs:DiagStatesReachAll}) can be converted to any type-(ii) states with SEP. In addition, we show that certain type-(ii) states are not interconvertible via SEP. For more details, please refer to Lemma \ref{lemma:MES_LOCCN_SEPforbid} in Appendix \ref{app:SEPforbid}. This result together with Observation \ref{obs:DiagStatesReachAll} are summarized in Fig.\;\ref{Fig:MES_A3}.

\begin{figure}[h!]
    \centering
    \includegraphics[width=\linewidth]{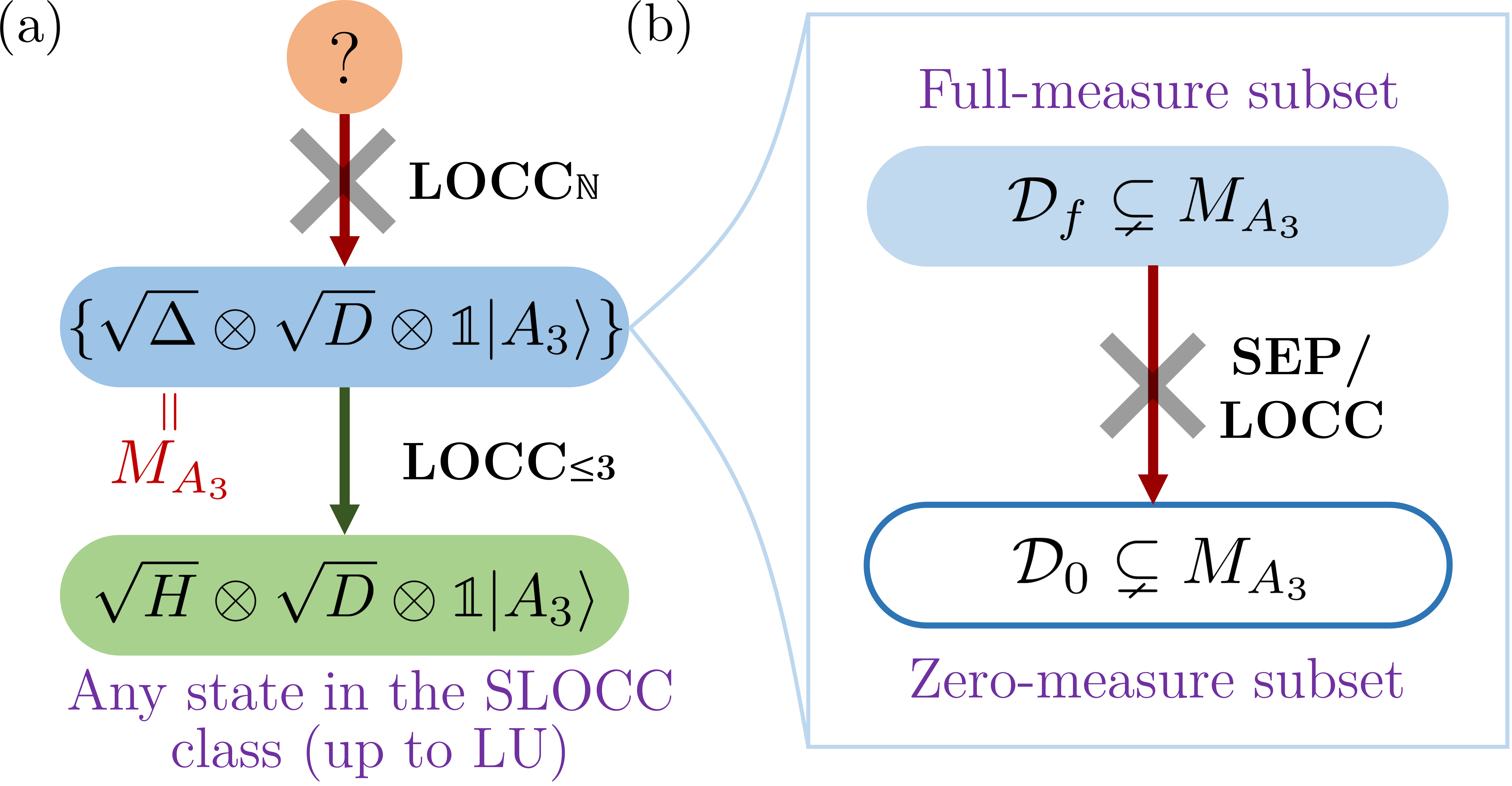}
    \caption{The figure summarizes the results about state transformations within the SLOCC class of $\ket{A_3}$. (a) Any state in the SLOCC class can be reached (up to LU) from a state in $M_{A_3}$ with LOCC using at most 3 rounds of communications, whereas all states in $M_{A_3}$ are unreachable from any other states with $\text{LOCC}_{\mathbb{N}}$. (b) Within $M_{A_3}$, there exist a full-measure subset of states, $\mathcal{D}_f$, that cannot reach states in another zero-measure subset, $\mathcal{D}_0$, with SEP and hence, also not with LOCC. The definitions of sets $\mathcal{D}_f$ and $\mathcal{D}_0$ are given in Appendix \ref{app:SEPforbid}.}\label{Fig:MES_A3}
\end{figure}

The proof of these results relies on the satisfiability of a necessary but not sufficient condition for SEP transformations. For the most general transformations within $M_{A_3}$, arbitrary operators $g$ and $h$ corresponding to states therein and the set $\mathcal{N}_{g\ket{A_3}}$ need to be considered. As shown in Appendix \ref{app:SEPforbid}, for the state transformations presented here, a simple necessary condition for the existence of a SEP transformation, which is sufficient to show that these transformations do not exist, can be derived without specifying this set. 
Hence, we leave the task of fully characterizing the MES of the $\ket{A_3}$ class as an open problem.

Given the simple form of states in $M_{A_3}$, one may expect that these states can be completely characterized by bipartite entanglement across two different splittings of the three particles. Indeed, we find that all states in $M_{A_3}$ can be generated (up to LU) by entangling a qutrit with a 2-qubit entangled state embedded in the Hilbert space of 2 qutrits using two 2-qutrit controlled-unitaries following a procedure similar to the one proposed by Ref.\;\cite{Decomp3qubits} (see Ref.\;\cite{SubradianEntAtoms} for a different preparation scheme of $\ket{A_3}$). The preparation procedure of a state that is LU equivalent to the state $\ket{\psi(\alpha_1,\alpha_2,\beta_1,\beta_2)}\propto\text{diag}(\sqrt{\alpha_1},\sqrt{\alpha_2},1)\otimes\text{diag}(\sqrt{\beta_1},\sqrt{\beta_2},1) \otimes \mathbbm{1}\ket{A_3}$ \cite{footnote:ReorderAlpha2Beta1} is illustrated in Fig.\;\ref{Fig:Decomp_MES_A3} and is outlined as follows. First, qutrit 1 is prepared in the state $\ket{\widetilde{+}}=\frac{1}{\sqrt{3}}(\ket{0}+\ket{1}+\ket{2})$ and qutrits 2 and 3 are prepared in an entangled state $\ket{\psi_s}=\sqrt{\lambda_+}\ket{00}+\sqrt{\lambda_-}\ket{11}$ where $\lambda_\pm$ depends on $\alpha_1,\alpha_2,\beta_1,\beta_2$. Then, apply two commuting unitaries $U_{1j}=\sum_{k=0}^2|k\rangle\langle k|\otimes (U_j^\dagger D_\omega U_j)^k$ on qutrits 1 and $j$ for $j=2,3$ 
where $D_\omega=\text{diag}(1,e^{-i\frac{2\pi}{3}},e^{i\frac{2\pi}{3}})$. The exact forms of $\lambda_\pm$, $U_2$ and $U_3$ are stated in Appendix \ref{app:decomp_MES_A3}. With these parametrization,
\begin{equation}
    U_{13}U_{12}\ket{\widetilde{+}}_1\ket{\psi_s}_{23} \stackrel{\text{LU}}{\simeq}\ket{\psi(\alpha_1,\alpha_2,\beta_1,\beta_2)}
\end{equation} 
for any $\alpha_1,\alpha_2,\beta_1,\beta_2>0$, where the symbol $\stackrel{\text{LU}}{\simeq}$ stands for LU equivalence.
\begin{figure}[h!]
    \centering
    \includegraphics[width=\linewidth]{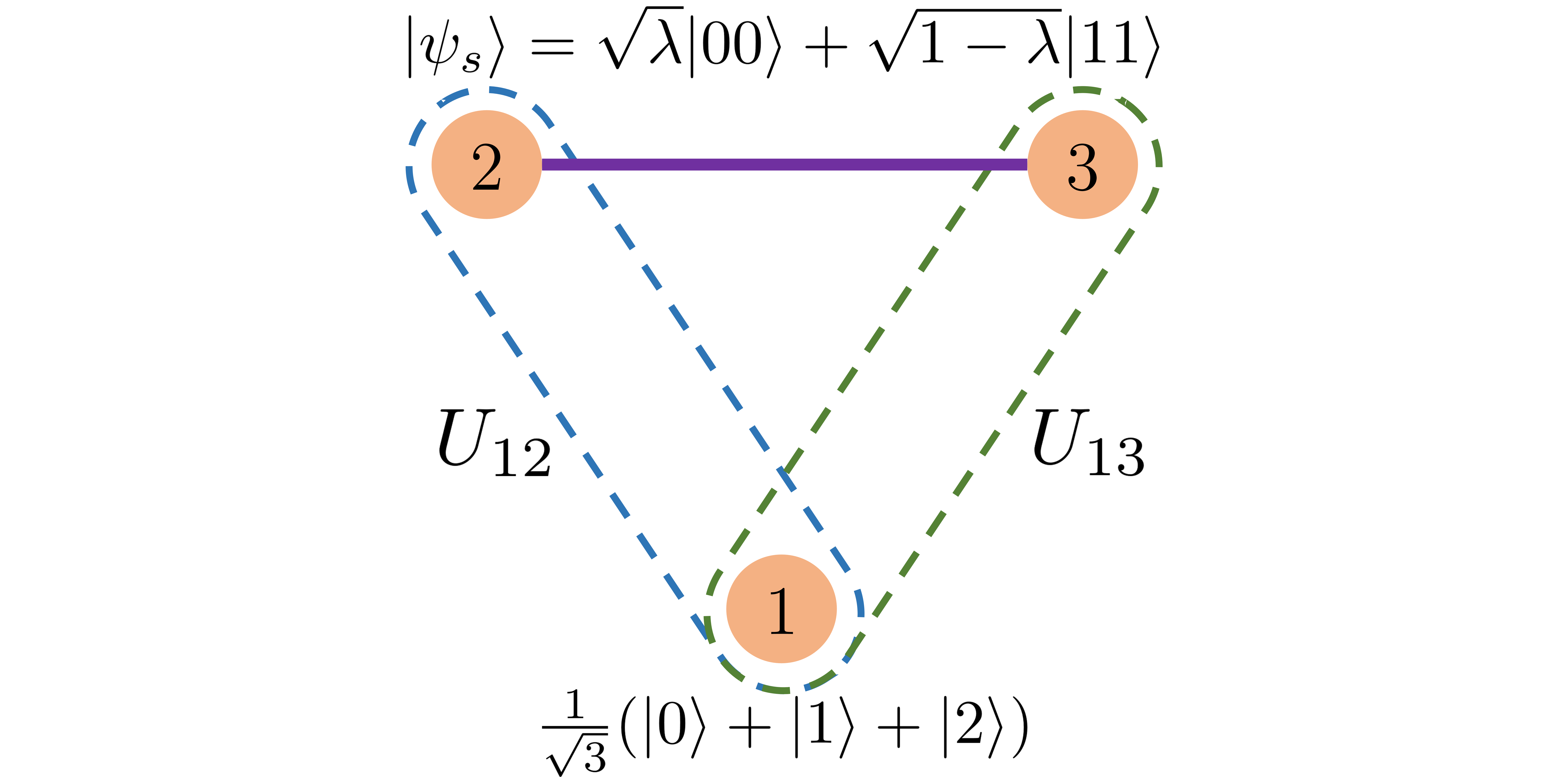}
    \caption{Any state in $M_{A_3}$ can be prepared (up to LU) by entangling qutrit 1 in the state $\frac{1}{\sqrt{3}}(\ket{0}+\ket{1}+\ket{2})$ to the entangled qubits 2 and 3 in the state $\ket{\psi_s}$ with 2-qutrit entangling unitaries $U_{12}$ and $U_{13}$.}\label{Fig:Decomp_MES_A3}
\end{figure}

\section{Conclusion}
Although isolation is a generic feature of multipartite states in $(\mathbb{C}^d)^{\otimes n}$ for $n>3$ and $d>2$ (and for $n>4$ when $d=2$), there could still exist zero-measure SLOCC classes with a rich LOCC structure. In fact, such examples are known (most notably the $n$-qubit GHZ and W SLOCC classes) but they are scarce. Thus, in this work we considered this question in general with the hope of finding more isolation-free SLOCC classes. In order to do so, we investigated whether weak isolation exists in many SLOCC classes. First, we proved that weakly isolated states are present in all $n$-qudit SLOCC classes that have fully entangled states with compact stabilizers for arbitrary local dimensions. Then, we considered $n$-qudit SLOCC classes that has a fully entangled state stabilized by non-compact stabilizers restricted to the form $S^{\otimes n}$ for $S\in GL(d,\mathbb{C})$ and any local dimension $d$. We proved that for $n=3$, the SLOCC class is isolation-free for any $d\geq2$. We also showed that the SLOCC class of the 3-qutrit totally antisymmetric state $\ket{A_3}$ is a special case with $d=3$, thereby identifying a new isolation-free class. However, we proved that weakly isolated states exist in $n$-qudit SLOCC classes with non-compact stabilizers of the form $S^{\otimes n}$ for all $n\geq4$ and $d\geq2$. Examples of such classes are those of $n$-qunit totally antisymmetric states for all $n\geq4$. Let us mention here that the $\ket{A_3}$ class is singled out by the fact that this state and the 2-qubit singlet state are the only absolutely maximally entangled states, i.e., states which are maximally entangled in any bipartite splitting, in $(\mathbb{C}^d)^{\otimes n}$ for all $d,n\geq2$ that satisfy $U^{\otimes n}\ket{\psi}\propto \ket{\psi}$ for any $U\in U(d,\mathbb{C})$ \cite{A3vsAngeq4}. However, it is still unclear whether these properties are relevant for the absence of isolation.

Finally, we studied the entanglement properties of the new isolation-free class. We proved that every state in such class can be reached (up to LU) from a zero-measure subset of states, $M_{A_3}$, with LOCC using at most 3 rounds, while the states in $M_{A_3}$ are not $\text{LOCC}_\mathbb{N}$-reachable from any other states. Moreover, we showed that all states in $M_{A_3}$ can be prepared with a simple protocol using only bipartite entanglement resources.

The reason for asking the initial question was that the elements of the MES of the $n$-qubit GHZ and W SLOCC classes appear in many practical applications \cite{SecretSharing,ByzantineGHZ,QCommW,MetrologyGHZ,RobustLossW}. Thus, arguably, one might expect to find useful states in other isolation-free classes due to their exceptionally rich entanglement structure. Indeed, the newly identified isolation-free class proven in this work contains the state $|A_3\rangle$ in its MES, which also plays an important role in many known practical protocols \cite{AdanAntiSymmAppl,ByzantineAgree,InvertUnitary,InvertIsometry}. In future work, it would be interesting to study whether a practical scenario could be found, in which the very notion of ``isolation-freeness'' translates into some form of advantage. On the more technical level, it would be desirable to identify more isolation-free classes or to prove that the examples found thus far exhaust all possibilities. In this regard, it would be particularly useful to devise new techniques that would make it possible to prove that weakly isolated SLOCC classes are in fact isolated or to find a counterexample.

\begin{acknowledgements}

We thank Nolan Wallach and Tobias Fritz for helpful discussions regarding Proposition \ref{prop:Compact-Finite}.

NKHL, CS, MH, and BK acknowledge financial support from the Austrian Science Fund (FWF): W1259-N27 (DK-ALM), F 7107-N38 (SFB BeyondC) and P 32273-N27 (Stand-Alone Project). JIdV acknowledges financial support from the Spanish Ministerio de Ciencia e Innovaci\'on (grant PID2020-113523GB-I00 and ``Severo Ochoa Programme for Centres of Excellence'' grant CEX2019-000904-S funded by MCIN/AEI/10.13039/501100011033) and from Comunidad de Madrid (grant QUITEMAD-CM P2018/TCS-4342 and the Multiannual Agreement with UC3M in the line of Excellence of University Professors EPUC3M23 in the context of the V PRICIT)

\end{acknowledgements}

\appendix

\section{Proof of Observation \ref{obs:An_symm_group}}\label{app:An_symm}

In this appendix, we prove Observation \ref{obs:An_symm_group} which states the symmetry group of $n$-qunit totally antisymmetric states for all $n\geq2$.

\begin{repobservation}{obs:An_symm_group}
The symmetry group of the $n$-qunit totally antisymmetric state $\ket{A_n}$ is $\{S^{\otimes n} \mbox{ with } S\in SL(n,\mathbb{C})\}$.
\end{repobservation}
\begin{proof}

Using Eq.\;(\ref{eq:An_perm}), it is straightforward to see that the existence of a symmetry, which is not of the form $S^{\otimes n}$ implies that there exists a $B\in SL(n,\mathbb{C})$ such that \cite{footnote:BBinv}
\begin{align}
    B\otimes B^{-1}\otimes \mathbbm{1}^{\otimes n-2}\ket{A_n}=\ket{A_n} \label{eq:TotAntiSymm_BBinv}.
\end{align}
It follows from Eq.\;(\ref{eq:TotAntiSymm_BBinv}) that
\begin{align}
    (B\otimes \mathbbm{1}-\mathbbm{1}\otimes B)\otimes\mathbbm{1}^{\otimes n-2}\ket{A_n}=0. \label{eq:TotAntiSymm_B1-1B}
\end{align}
Let $B=\sum_{i,j=1}^{n} B_{i,j}|i\rangle\langle j|$. Projecting the last $n-2$ systems of the state in Eq.\;(\ref{eq:TotAntiSymm_B1-1B}) onto $\ket{i_3,\ldots,i_n}$ gives
\begin{align}
    (B\otimes \mathbbm{1}-\mathbbm{1}\otimes B)(\ket{i_1,i_2}-\ket{i_2,i_1}) = 0
\end{align}
for all $i_1 \neq i_2$. Its overlaps with $\bra{i_1,i_2}$ and $\bra{i_1,i_1}$ give the constraints $B_{i_1, i_1}=B_{i_2, i_2}$ and $B_{i_1, i_2}=0$, respectively, for all $i_1\neq i_2$. Thus, only $B\propto\mathbbm{1}$ can satisfy Eq.\;(\ref{eq:TotAntiSymm_BBinv}) and hence, any symmetry of $\ket{A_n}$ must be equal to $S^{\otimes n}$ for some $S\in GL(n,\mathbb{C})$.

The remaining part of the proof is to show that for any $S\in GL(n,\mathbb{C})$, $\frac{1}{\det S}S^{\otimes n}$ is a symmetry of $\ket{A_n}$. Consider
\begin{align}
    &\langle j_1,\ldots,j_n|S^{\otimes n}|A_n\rangle \\
    = &\;\frac{1}{\sqrt{n!}}\sum_{i_1,\ldots,i_n=1}^n \epsilon_{i_1,\ldots,i_n}[S]_{j_1, i_1}[S]_{j_2, i_2}\ldots [S]_{j_n, i_n}\\
    = &\;\frac{1}{\sqrt{n!}}\epsilon_{j_1,\ldots,j_n} \det S \label{eq:m_Sn_An}
\end{align}
where the last line is obtained from the Leibniz formula for determinants that uses the Levi-Civita symbol. Due to Eq.\;(\ref{eq:An_epsilon}), it follows that $S^{\otimes n}\ket{A_n} = \det S\ket{A_n}$ for all $S\in GL(n,\mathbb{C})$ and therefore,  $S'=S/(\det S)^{\frac{1}{n}} \in SL(n,\mathbb{C}) $ fulfills $S'^{\otimes n}\ket{A_n} = \ket{A_n}$. Thus, the symmetry group of $\ket{A_n}$ is $\{S^{\otimes n} \mbox{ with } S\in SL(n,\mathbb{C})\}$.
\end{proof}

\section{Proof of Corollary \ref{cor:UnitarySymm_WI}}\label{app:proof2ThmU_Symm}

In this appendix, we first prove Corollary \ref{cor:UnitarySymm_WI} using Proposition \ref{prop:USymm-Compact} and then we provide an alternative proof of Corollary \ref{cor:UnitarySymm_WI} for $n$ qubits, which does not require Proposition \ref{prop:Compact-Finite}. Let us first restate Corollary \ref{cor:UnitarySymm_WI} here.

\begin{repcorollary}{cor:UnitarySymm_WI}
Every $n$-qudit SLOCC class that contains a fully entangled state $\ket{\psi}$ which has a unitary symmetry group $\mathcal{S}_\psi \subseteq \bigotimes_{i=1}^n U(d_i,\mathbb{C})$ contains weakly isolated states.
\end{repcorollary}

The proof follows immediately from Theorem \ref{thm:WI_CompactSymm}, which states that weakly isolated states exist in every $n$-qudit SLOCC class that contains a fully entangled state with a compact symmetry group, and the following proposition.

\begin{proposition}\label{prop:USymm-Compact}
Let $\mathcal{S}_\psi\subseteq \bigotimes_{i=1}^n U(d_i,\mathbb{C})$ be the symmetry group of a state $\ket{\psi}$. Then, $\mathcal{S}_\psi$ is compact.
\end{proposition}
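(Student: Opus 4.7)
The plan is to show $\mathcal{S}_\psi$ is compact by realising it as a closed subset of a compact ambient space, namely $\bigotimes_{i=1}^n U(d_i,\mathbb{C})$, and then invoking the elementary fact that a closed subset of a compact set is compact in the Euclidean topology on $\mathbb{C}^{(\sum_i d_i^2)}$ (equivalently on the space of tensor products).

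First I would recall why $U(d,\mathbb{C})$ itself is compact as a subset of $\mathbb{C}^{d^2}$: it is bounded, since every entry of a unitary matrix has modulus at most $1$ (so $\|U\|_F=\sqrt{d}$ is fixed), and it is closed, since it is the preimage of the singleton $\{\identity\}$ under the continuous map $U\mapsto U^\dagger U$. A finite product of compact sets is compact, so $\bigotimes_{i=1}^n U(d_i,\mathbb{C})$ is compact.

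Next I would show that $\mathcal{S}_\psi$ is closed in $\bigotimes_{i=1}^n U(d_i,\mathbb{C})$. This is the routine step: the map $F:\bigotimes_{i=1}^n U(d_i,\mathbb{C})\to\bigotimes_{i=1}^n\mathbb{C}^{d_i}$ defined by $F(S)=S\ket{\psi}-\ket{\psi}$ is continuous (it is polynomial in the matrix entries), and $\mathcal{S}_\psi=F^{-1}(\{0\})$ is the preimage of the closed singleton $\{0\}$. Hence $\mathcal{S}_\psi$ is closed. By the hypothesis $\mathcal{S}_\psi\subseteq\bigotimes_{i=1}^n U(d_i,\mathbb{C})$, we can regard $\mathcal{S}_\psi$ as a closed subset of this compact ambient space, and conclude that $\mathcal{S}_\psi$ itself is compact.

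I do not anticipate any real obstacle here; the only point that deserves a brief comment is the tacit identification between elements $S=\bigotimes_{i=1}^n S^{(i)}$ of the stabilizer and the tuples $(S^{(1)},\ldots,S^{(n)})$ in the product group, so that topological notions (closedness, compactness) transfer between the two descriptions. Since the tensor product is a continuous (multilinear) map on matrices, closedness on one side corresponds to closedness on the other, and the argument above goes through without modification.
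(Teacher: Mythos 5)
Your proof is correct and follows essentially the same route as the paper: both realise $\mathcal{S}_\psi$ as the intersection of the compact set $\bigotimes_{i=1}^n U(d_i,\mathbb{C})$ with the solution set of the polynomial equations $S\ket{\psi}=\ket{\psi}$, and conclude compactness from closed-plus-bounded. The only cosmetic difference is that the paper phrases the closedness of that solution set via ``Zariski closed sets are Euclidean closed,'' whereas you obtain it directly as the preimage of $\{0\}$ under a continuous map -- the same fact in more elementary clothing.
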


\begin{proof}
The symmetry group of a state $\ket{\psi}$, $\mathcal{S}_\psi=\{S\in \bigotimes_{i=1}^n U(d_i,\mathbb{C}): (S-\identity)\ket{\psi}=0\}$, is the intersection of the compact group $\bigotimes_{i=1}^n U(d_i,\mathbb{C})$ and a Zariski closed set, namely the set of all solutions to the finite set of complex polynomial equations given above. As Zariski closed sets are closed in the Euclidean topology \cite{footnote:ZariskiEuclidean} and  $\mathcal{S}_\psi$ is bounded since it is a subgroup of a compact group, we have that $\mathcal{S}_\psi$ is also compact.
\end{proof}

For the alternative proof restricted to $n$-qubit cases, we will need the following lemmata. Our first step is to show that every local symmetry group corresponding to a unitary symmetry group must be closed. Note that considering the local symmetry group on a single site, $j$, one can always choose its elements in $SU(d_j,\mathbb{C})$, i.e., $\mathcal{S}_{\psi}^{(j)}=\{S^{(j)}\in SU(d_j,\mathbb{C}):\exists\;S^{(i)}\;\forall\;i\in[n]\setminus\{j\} \text{ so that } \bigotimes_{k=1}^n S^{(k)}\in\mathcal{S}_\psi\}$. Then, we have the following lemma.

\begin{lemma}\label{lemma:closedSymm}
Let $\mathcal{S}_\psi\subseteq \bigotimes_{i=1}^n U(d_i,\mathbb{C})$ be the symmetry group of a state $\ket{\psi}$. Then, the local symmetry group $\mathcal{S}_{\psi}^{(j)}\in SU(d_j,\mathbb{C})$ is closed for any choice of $j\in[n]$.
\end{lemma}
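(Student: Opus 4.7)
The plan is to prove closedness of $\mathcal{S}_\psi^{(j)}$ directly via sequential compactness, exploiting the compactness of $U(d_i,\mathbb{C})$ at the remaining sites. I would take an arbitrary sequence $(S_m^{(j)})_{m\in\mathbb{N}}\subseteq\mathcal{S}_\psi^{(j)}$ converging (in the Euclidean topology) to some $S^{(j)}\in SU(d_j,\mathbb{C})$, and aim to show that $S^{(j)}\in\mathcal{S}_\psi^{(j)}$, i.e., that $S^{(j)}$ can be completed to an element of $\mathcal{S}_\psi$ by suitable factors at the sites $i\neq j$.

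By the definition of $\mathcal{S}_\psi^{(j)}$, for each $m$ there exist operators $S_m^{(i)}$ for every $i\in[n]\setminus\{j\}$ such that $\bigotimes_{k=1}^n S_m^{(k)}\in\mathcal{S}_\psi$; the hypothesis $\mathcal{S}_\psi\subseteq\bigotimes_i U(d_i,\mathbb{C})$ ensures these factors may be taken unitary, and the standard rephasing argument sketched just above the lemma lets us keep $S_m^{(j)}\in SU(d_j,\mathbb{C})$ simultaneously (absorbing the determinant phase of the original lift into one of the other unitary factors). Since each $U(d_i,\mathbb{C})$ is compact and the index set $[n]\setminus\{j\}$ is finite, a finite nested application of Bolzano--Weierstrass produces a single subsequence (which I relabel by $m$) along which $S_m^{(i)}\to S^{(i)}\in U(d_i,\mathbb{C})$ for every $i\neq j$, while $S_m^{(j)}\to S^{(j)}$ is preserved.

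Continuity of the tensor product and of the action $T\mapsto T\ket{\psi}$ then yields
\begin{equation*}
\bigotimes_{k=1}^n S^{(k)}\ket{\psi}=\lim_{m\to\infty}\bigotimes_{k=1}^n S_m^{(k)}\ket{\psi}=\ket{\psi},
\end{equation*}
so $\bigotimes_{k=1}^n S^{(k)}\in\mathcal{S}_\psi$ and therefore $S^{(j)}\in\mathcal{S}_\psi^{(j)}$, which proves the claim. The core of the argument is essentially the familiar fact that the continuous image of a compact set is compact (hence closed), so I do not expect a serious obstacle; the only bookkeeping point worth double-checking is the compatibility between the $SU$ normalization enforced at site $j$ and the $U$ factors at the other sites, which is resolved by the phase-absorption trick already recorded in the lemma's preamble.
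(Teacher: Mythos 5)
Your proof is correct, and it takes a noticeably different (and arguably cleaner) route than the paper's. The paper argues by contradiction: it invokes Proposition \ref{prop:Compact-Finite}'s companion statement that $\mathcal{S}_\psi$ itself is compact (which rests on the Zariski-closedness argument of Proposition \ref{prop:USymm-Compact}), extracts via Bolzano--Weierstrass a convergent subsequence of the \emph{full tensor products} inside $\mathcal{S}_\psi$, and then has to deal with the phase ambiguity in identifying the $j$-th local factor of the limit with the limit of the $S_k^{(j)}$. You instead keep the $j$-th factor fixed, apply compactness site-by-site to the unitary groups $U(d_i,\mathbb{C})$ at the complementary sites to extract convergent lifts, and verify membership of the limit in $\mathcal{S}_\psi$ directly by continuity of $T\mapsto T\ket{\psi}$. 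This sidesteps the compactness of $\mathcal{S}_\psi$ (and hence the algebraic-geometry input) entirely, which is in the spirit of this appendix's goal of giving a more elementary argument; the price is that you must justify that the lifts at sites $i\neq j$ can be chosen unitary while the $j$-th factor stays in $SU(d_j,\mathbb{C})$, and your phase-absorption remark does handle this correctly (the scalar ambiguities among the local factors of an element of $\bigotimes_i U(d_i,\mathbb{C})$ are unimodular once the $j$-th factor is normalized to $SU$, so they can be pushed into a single unitary factor). Both proofs are sound; yours is self-contained modulo the preamble's normalization convention, while the paper's reuses machinery it needs anyway for Corollary \ref{cor:UnitarySymm_WI}.
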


\begin{proof}
We will prove the statement by contradiction. By Proposition \ref{prop:USymm-Compact}, $\mathcal{S}_\psi$ is compact. Assume that $\mathcal{S}_\psi^{(j)}$ is not closed for some $j$. Then there exists a convergent sequence $\{S_k^{(j)}\}_{k\in\mathbb{N}}\subseteq \mathcal{S}_\psi^{(j)}$ such that $\lim_{k\rightarrow\infty}S_k^{(j)}\not\in \mathcal{S}_\psi^{(j)}$. As $\mathcal{S}_\psi$ is compact, the corresponding sequence $\{\bigotimes_{l=1}^n S_k^{(l)}\}_{k\in\mathbb{N}}\subseteq\mathcal{S}_\psi$ is bounded. By the Bolzano-Weierstrass theorem \cite{BolzanoWeierstrass,footnote:BW}, there exists a convergent subsequence $\{\bigotimes_{l=1}^n \tilde{S}_k^{(l)}\}_{k\in\mathbb{N}}\subset\{\bigotimes_{l=1}^n S_k^{(l)}\}_{k\in\mathbb{N}}$ and its limit must be an element in $\mathcal{S}_\psi$ since this set is closed. Here, we choose the phase of $\tilde{S}_k^{(j)}$ such that $\tilde{S}_k^{(j)}$ is an element of $\{S_k^{(j)}\}_{k\in\mathbb{N}}$ for each $k$. Since the limit of a convergent sequence ($\{S_k^{(j)}\}_{k\in\mathbb{N}}$) equals to the limit of any  subsequence, we have $\lim_{k\rightarrow\infty}\tilde{S}_k^{(j)}=\lim_{k\rightarrow\infty}S_k^{(j)}$, which leads to a contradiction as the LHS is an element of $ \mathcal{S}_\psi^{(j)}$, whereas the RHS is not. Therefore, $\mathcal{S}_\psi^{(j)}$ is closed for any choice of $j\in[n]$.
\end{proof}

Due to the argument in the proof of Theorem \ref{thm:WI_CompactSymm}, it remains to consider SLOCC classes that have a state $\ket{\psi}$ with an infinite symmetry group $\mathcal{S}_\psi$. A simple counting argument shows that $\mathcal{S}_\psi$ is infinite if and only if the local symmetry group $\mathcal{S}_\psi^{(j)}$ is infinite for at least one $j\in[n]$. Due to  Lemma \ref{lemma:closedSymm} and the fact that the only infinite closed subgroups of $SU(2,\mathbb{C})$ are up to unitary conjugation $\mathcal{G}_1=\{diag(e^{i\theta},e^{-i\theta})\}_{\theta\in[0,2\pi)}$, $\mathcal{G}_2=\mathcal{G}_1\cup\left\{\begin{pmatrix} 0 & -e^{i\varphi}\\ e^{-i\varphi} & 0\end{pmatrix}\right\}_{\varphi\in[0,2\pi)}$, and $SU(2,\mathbb{C})$ (see Sec.\;III of Ref.\;\cite{ClosedSubgroupSU2}), we deduce that $\mathcal{S}_\psi^{(j)}$ is, up to unitary conjugation, one of these three subgroups of $SU(2,\mathbb{C})$ for at least one $j$.

With the help of the subsequent lemma, we will now show that there exists no SLOCC class with an infinite unitary local stabilizer. In particular, we show that any of these local symmetry groups must also contain matrices in $SL(2,\mathbb{C})\setminus SU(2,\mathbb{C})$. Hence, there exists no SLOCC class with unitary stabilizer without weak isolation. Without loss of generality, we can assume that $\mathcal{S}_\psi^{(1)}\in\{\mathcal{G}_1,\mathcal{G}_2,SU(2,\mathbb{C})\}$. Since $\mathcal{G}_1\subset\mathcal{G}_2\subset SU(2,\mathbb{C})$, it suffices to consider unitary symmetries $\bigotimes_{i=1}^n U_i$ with $U_1=diag(e^{i\theta},e^{-i\theta})$ for all $\theta\in[0,2\pi)$ for the rest of the proof \cite{footnote:LUequiv}.

\begin{lemma}\label{lemma:SU2toSL2}
If an $n$-qubit state $\ket{\psi}$ has a unitary symmetry $U(\theta)=e^{-i\varphi}diag(e^{i\theta},e^{-i\theta})\otimes_{i=2}^n U_i\in U(2,\mathbb{C})^{\otimes n}$ for $\theta=x\pi$ where $x\in(0,2)\cap(\mathbb{R}\setminus\mathbb{Q})$, then $\ket{\psi}$ also has a symmetry $X(z) = w(z) diag(e^{z\theta},e^{-z\theta})\otimes_{i=2}^n X_i(z)$ where $w(z)\in\mathbb{C}$ and $X_i(z)\in SL(2,\mathbb{C})$ for every $z\in\mathbb{C}$.
\end{lemma}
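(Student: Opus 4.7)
The plan is to close up the cyclic subgroup $\langle U(\theta)\rangle$ inside $U(2)^{\otimes n}$, extract from that closure an infinitesimal generator whose site-$1$ part has a nontrivial $\sigma_z$-component, and then analytically continue this generator to complex parameters using the fact that it annihilates $\ket{\psi}$. Concretely, $U(\theta)^k=e^{-ik\varphi}\,\text{diag}(e^{ik\theta},e^{-ik\theta})\otimes\bigotimes_{j=2}^n U_j^k$ lies in $\mathcal{S}_\psi$ for every $k\in\mathbb{Z}$, and the irrationality of $\theta/\pi$ forces these powers to be pairwise distinct, so $\langle U(\theta)\rangle$ is infinite. By Proposition \ref{prop:USymm-Compact}, $\mathcal{S}_\psi$ is closed in $U(2)^{\otimes n}$; hence $T:=\overline{\langle U(\theta)\rangle}\subseteq\mathcal{S}_\psi$ is a compact abelian Lie subgroup of $U(2)^{\otimes n}$, whose identity component is a torus of positive dimension. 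Every anti-Hermitian element $iY\in\mathrm{Lie}(T)$ gives a curve $\{e^{itY}\}_{t\in\mathbb{R}}\subseteq\mathcal{S}_\psi$, forcing $Y\ket{\psi}=0$; and since $T\subseteq U(2)^{\otimes n}$, every such $Y$ decomposes as $Y=\sum_{j=1}^n(Y_j)_j$ with $Y_j$ Hermitian on site $j$.

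To single out a generator with a $\sigma_z$-contribution on site $1$, consider the site-$1$ projection $\pi:\bigotimes_j V_j\mapsto V_1$. Then $\pi(T)$ is the closure in $U(2)$ of the diagonal cyclic group $\{\text{diag}(e^{ik\mu_+},e^{ik\mu_-})\}_{k\in\mathbb{Z}}$ with $\mu_\pm=\pm\theta-\varphi$. Writing such a diagonal as $e^{iu}\text{diag}(e^{iv},e^{-iv})$ with $u=(\mu_++\mu_-)/2$, $v=(\mu_+-\mu_-)/2$, the orbit takes the form $\{(-k\varphi,k\theta)\}_k$; since $\theta/\pi$ is irrational, the $v$-component $\{k\theta\mod 2\pi\}$ is dense in $[0,2\pi)$, so the closure projects surjectively onto the $\sigma_z$-rotation circle. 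Consequently the identity component of $\pi(T)$ is at least one-dimensional and not contained in the center, and $\mathrm{Lie}(\pi(T))$ contains an element $i(a\mathbbm{1}+b\sigma_z)$ with $b\neq0$. Because $\pi|_T:T\twoheadrightarrow\pi(T)$ is a surjective morphism of compact Lie groups, the induced $d\pi:\mathrm{Lie}(T)\twoheadrightarrow\mathrm{Lie}(\pi(T))$ is surjective, and this direction lifts to some $iY\in\mathrm{Lie}(T)$ whose site-$1$ block is $Y_1=a\mathbbm{1}+b\sigma_z$ with $b\neq0$.

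Finally, since $Y\ket{\psi}=0$ and the local summands $(Y_j)_j$ commute pairwise, $e^{\tau Y}=\bigotimes_j e^{\tau Y_j}$ is entire in $\tau\in\mathbb{C}$ and satisfies $e^{\tau Y}\ket{\psi}=\ket{\psi}$ for every $\tau$ (Taylor-expand and use $Y^m\ket{\psi}=0$). Reparametrizing $\tau=z\theta/b$ puts the site-$1$ factor in the form of a scalar times $\text{diag}(e^{z\theta},e^{-z\theta})$; for $j\geq2$ each $e^{\tau Y_j}\in GL(2,\mathbb{C})$ is converted into an entire $SL(2,\mathbb{C})$-valued $X_j(z)$ by extracting $(\det e^{\tau Y_j})^{1/2}=e^{\tau\,\text{tr}(Y_j)/2}$. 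Collecting all extracted scalars into a single entire function $w(z)$ yields the claimed form $X(z)=w(z)\,\text{diag}(e^{z\theta},e^{-z\theta})\otimes\bigotimes_{j=2}^n X_j(z)$ with $X(z)\ket{\psi}=\ket{\psi}$ for every $z\in\mathbb{C}$. The main technical obstacle is the middle step --- showing that $\mathrm{Lie}(T)$ really contains a generator with non-central site-$1$ component --- which requires understanding the closure of a cyclic orbit inside the maximal torus of $U(2)$ and handling the sub-case in which $\pi(T)$ becomes disconnected (for instance when exactly one of $\mu_\pm$ happens to be a rational multiple of $2\pi$). Once this Lie-algebraic fact is in hand, the passage from real to complex $z$ is automatic because $Y\ket{\psi}=0$ is a purely algebraic relation that propagates through the exponential series.
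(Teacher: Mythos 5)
Your route is genuinely different from the paper's and, in outline, it works. The paper argues coordinate-wise: it expands $\ket{\psi}$ in the site-1 computational basis and each $U_j$ in its eigenbasis, so that $U(\theta)\ket{\psi}=\ket{\psi}$ becomes a system of linear congruences $\mathbf{B}\vec{\alpha}=\vec{\varphi}+\vec{\theta}+2\pi\vec{m}$ over $\mathbb{Q}$; a separate consistency lemma (Lemma \ref{lemma:vecZ0consistent}) uses the irrationality of $\theta/\pi$ to strip off the $2\pi\vec{m}$ ambiguity and produce real exponents $\vec{\alpha'}$ that can then be multiplied by an arbitrary $z\in\mathbb{C}$ and re-exponentiated. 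Your argument replaces that bookkeeping with Lie theory: the closure $T$ of the cyclic group $\langle U(\theta)\rangle$ inside the (compact, closed) set of product-unitary symmetries is an infinite compact abelian Lie group, so it has a positive-dimensional torus whose generators $Y=\sum_j(Y_j)_j$ annihilate $\ket{\psi}$, and $Y\ket{\psi}=0$ complexifies for free through the exponential series. Both proofs use the irrationality of $\theta/\pi$ at the crucial point --- the paper in the consistency conditions, you in forcing $\{e^{2ik\theta}\}$ to be dense so that $T$ is infinite and its site-1 eigenvalue ratio sweeps out all of $U(1)$. The paper's version is more elementary and fully explicit (it exhibits $\vec{\alpha'}$); yours is conceptually shorter and explains why the phenomenon is inevitable, at the price of invoking Cartan's closed-subgroup theorem and surjectivity of differentials for compact groups.

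Two points need repair, though neither is fatal. First, you invoke Proposition \ref{prop:USymm-Compact} to conclude $\mathcal{S}_\psi$ is closed, but that proposition assumes $\mathcal{S}_\psi\subseteq U(2,\mathbb{C})^{\otimes n}$ --- which is precisely the hypothesis this lemma is meant to refute, so the appeal is circular as written. What you actually need is weaker and elementary: the set of product unitaries is compact (continuous image of $U(2)^{\times n}$) and $S\mapsto S\ket{\psi}$ is continuous, so $T=\overline{\langle U(\theta)\rangle}$ consists of product-unitary symmetries. Second, your projection $\pi:\bigotimes_jV_j\mapsto V_1$ is not well defined (the local factors are only fixed up to phase), and this is exactly the step you flag as the ``main technical obstacle.'' The clean fix is to note that commuting with $\sigma_z\otimes\mathbbm{1}^{\otimes(n-1)}$ is a closed condition satisfied by every power of $U(\theta)$, hence by all of $T$; so every element of $T$ has a diagonal site-1 factor, and the ratio of its two diagonal entries is a well-defined continuous homomorphism $\rho:T\to U(1)$ whose image contains the dense set $\{e^{2ik\theta}\}$ and is closed, hence is all of $U(1)$. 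Surjectivity of $d\rho$ then hands you a generator with $Y_1=a\mathbbm{1}+b\sigma_z$, $b\neq0$, and the disconnectedness of $\pi(T)$ that worries you is irrelevant. With these two patches your proof is complete.
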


Let us first sketch the proof of Lemma \ref{lemma:SU2toSL2} here and then give the complete proof once we prove Lemma \ref{lemma:vecZ0consistent}. The first step is to consider the spectral decomposition of $U_j$, which is a tensor factor of the unitary $U(\theta)\in U(2,\mathbb{C})^{\otimes n}$, as $U_j=e^{i\alpha_j}|v_j^{(0)}\rangle\langle v_j^{(0)}|+e^{-i\alpha_j}|v_j^{(1)}\rangle\langle v_j^{(1)}|$ and determine how $\{\alpha_j\}_{j=2}^n$ depend on $\varphi$ and $\theta$ for $U(\theta)$ to be a symmetry of $\ket{\psi}$. Then, we construct the operators $X(z)\in GL(2,\mathbb{C})^{\otimes n}$ with $w(z)=e^{-z\varphi'}$ and $X_j(z)=e^{z\alpha'_j}|v_j^{(0)}\rangle\langle v_j^{(0)}|+e^{-z\alpha'_j}|v_j^{(1)}\rangle\langle v_j^{(1)}|$ where $z\in\mathbb{C}\setminus i\mathbb{R}$ and $(\varphi',\{\alpha'_j\}_{j=2}^n)$ are related to $(\varphi,\{\alpha_j\}_{j=2}^n)$ by being solutions to two systems of linear equations that share the same coefficient matrix (see Lemma \ref{lemma:vecZ0consistent}). Finally, we show that $X(z)$ is a symmetry of $\ket{\psi}$ by using the fact that $(\varphi',\{\alpha'_j\}_{j=2}^n)$ solves the system of linear equations which corresponds to $X(z)\ket{\psi}=\ket{\psi}$. Note that we will use $\vec{1}_{d}$ to denote a $d$-dimensional vector with all entries being $1$.

\begin{lemma}\label{lemma:vecZ0consistent}
Let $\theta=x\pi$ where $x\in(0,2)\cap(\mathbb{R}\setminus\mathbb{Q})$. If there exists a solution $\vec{\alpha}\in\mathbb{R}^L$ to the matrix equation $\mathbf{B}\vec{\alpha}=\vec{\varphi}+\vec{\theta}+2\pi\vec{m}$ where $\mathbf{B}$ is a $(d_1+d_2)\times L$ matrix with rational entries, $\vec{\varphi}=\varphi\cdot\vec{1}_{d_1+d_2}$, $\varphi\in\mathbb{R}$, $\vec{\theta}=\theta\cdot((-\vec{1}_{d_1})\oplus\vec{1}_{d_2})$, and $\vec{m}\in\mathbb{Z}^{d_1+d_2}$, then there exists a choice for $\varphi'\in\mathbb{R}$ such that the equation $\mathbf{B}\vec{\alpha'}=\vec{\varphi'}+\vec{\theta}$ where $\vec{\varphi'}=\varphi'\cdot\vec{1}_{d_1+d_2}$ has a real solution $\vec{\alpha'}\in\mathbb{R}^L$.
\end{lemma}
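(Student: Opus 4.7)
The plan is to exploit the rationality of the entries of $\mathbf{B}$ together with the $\mathbb{Q}$-linear independence of $\{1,x\}$ in order to extract structural constraints on the left null space of $\mathbf{B}$, and then read off the required value of $\varphi'$ from these constraints.

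First, I would recast consistency in terms of the orthogonal complement of the column space: over $\mathbb{R}$, the equation $\mathbf{B}\vec{\beta}=\vec{y}$ is solvable if and only if $\vec{u}^{T}\vec{y}=0$ for every $\vec{u}\in\ker(\mathbf{B}^{T})$. Since $\mathbf{B}$ has rational entries, one can fix a basis $\{\vec{u}_{k}\}$ of $\ker(\mathbf{B}^{T})$ consisting of rational vectors. Setting $a_{k}=\vec{u}_{k}^{T}\vec{1}_{d_{1}+d_{2}}$, $b_{k}=\vec{u}_{k}^{T}((-\vec{1}_{d_{1}})\oplus\vec{1}_{d_{2}})$, and $c_{k}=\vec{u}_{k}^{T}\vec{m}$ (all rational, the last one because $\vec{m}$ is integer-valued), the assumed solvability of $\mathbf{B}\vec{\alpha}=\vec{\varphi}+\vec{\theta}+2\pi\vec{m}$ becomes $\varphi a_{k}+\pi(x b_{k}+2c_{k})=0$ for every $k$.

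The heart of the argument is then to feed in the irrationality of $x$. If $a_{k}=0$, the above reduces to $x b_{k}+2c_{k}=0$ with $b_{k},c_{k}\in\mathbb{Q}$, and the $\mathbb{Q}$-linear independence of $\{1,x\}$ forces $b_{k}=c_{k}=0$. If instead $a_{k},a_{k'}\neq 0$, equating the two resulting expressions for $\varphi/\pi$ and clearing denominators yields $(a_{k'}b_{k}-a_{k}b_{k'})x=2(a_{k}c_{k'}-a_{k'}c_{k})$; again $\mathbb{Q}$-linear independence of $\{1,x\}$ forces both sides to vanish, so the ratio $B:=b_{k}/a_{k}$ is constant over all indices $k$ with $a_{k}\neq 0$.

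Finally, I would take $\varphi':=-\theta B$ (and $\varphi':=0$ in the degenerate case where every $a_{k}$ vanishes) and verify the new consistency condition $\varphi'a_{k}+\theta b_{k}=0$ for every $k$. When $a_{k}=0$ the previous step already gave $b_{k}=0$, so both terms vanish; when $a_{k}\neq 0$ we have $b_{k}=Ba_{k}$, and $\varphi'a_{k}+\theta b_{k}=-\theta Ba_{k}+\theta Ba_{k}=0$. Hence $\vec{\varphi'}+\vec{\theta}$ lies in the column space of $\mathbf{B}$, producing the required real $\vec{\alpha'}$. The main technical obstacle I anticipate is the careful separation of the ``$1$-part'' and ``$x$-part'' when combining two different left-kernel vectors; once that bookkeeping is handled, the rest is elementary linear algebra over $\mathbb{R}$.
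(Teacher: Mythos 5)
Your proposal is correct and follows essentially the same route as the paper: the paper derives the same rational-coefficient consistency conditions $x_1^{(\lambda)}\varphi+x_2^{(\lambda)}\theta+2\pi\sum_j y_j^{(\lambda)}m_j=0$ via Gaussian elimination (your $a_k\varphi+b_k\theta+2\pi c_k=0$ obtained from a rational basis of the left null space), uses the irrationality of $\theta/\pi$ to force the ratio $x_2^{(\lambda)}/x_1^{(\lambda)}$ to be a constant $f$ and to kill the conditions with vanishing $\varphi$-coefficient, and then sets $\varphi'=-f\theta$, exactly as you do with $B$ and $\varphi'=-\theta B$. The only difference is cosmetic (orthogonal-complement language versus row reduction), so no further comparison is needed.
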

\begin{proof}
Consider a system of equations represented by the matrix equation $\mathbf{B}\vec{\alpha}=\vec{\varphi}+\vec{\theta}+2\pi\vec{m}$. 
While allowing any free choice of $\varphi\in\mathbb{R}$ and $\vec{m}\in\mathbb{Z}^{d_1+d_2}$, we determine the necessary and sufficient conditions for the system of equations to be consistent (and, hence, admits a real solution). We then choose $\varphi'\in\mathbb{R}$ so that the above conditions imply that the system $\mathbf{B}\vec{\alpha'}=\vec{\varphi'}+\vec{\theta}$ is consistent as well. Notice that a system with a matrix of coefficients given by $\mathbf{B}$ is always consistent if all the rows of this matrix are linearly independent, so we only need to analyze the situation when this is not the case.

By performing Gaussian elimination on the augmented matrix corresponding to $\mathbf{B}\vec{\alpha}=\vec{\varphi}+\vec{\theta}+2\pi\vec{m}$, the system is consistent if and only if for each row of the resulting matrix where the first $L$ entries are all zeros, the last entry must also be zero. We will call these conditions ``consistency conditions''. Given that $\mathbf{B}$ is a rational matrix, they take the form
\begin{equation}
    x_1^{(\lambda)} \varphi +x_2^{(\lambda)}\theta+2\pi\sum_j y_j^{(\lambda)} m_j=0 \label{eq:consistency}
\end{equation}
where $x_1^{(\lambda)},x_2^{(\lambda)},y_j^{(\lambda)}\in\mathbb{Q}$ for all $j\in\{1,\ldots,d_1+d_2\}$, $\lambda$ labels the different consistency conditions, and $m_j$ is the $j$-th entry of $\vec{m}$. Notice that $\theta$ being an irrational multiple of $\pi$ imposes that $f=\frac{x_2^{(\lambda)}}{x_1^{(\lambda)}}\in\mathbb{Q}$ is a constant independent of $\lambda$ for all the consistency conditions for which it does not hold that $x_1^{(\lambda)}=x_2^{(\lambda)}=0$ \cite{footnote:consistencyCond}.

We consider now the system $\mathbf{B}\vec{\alpha'}=\vec{\varphi'}+\vec{\theta}$ and look for a choice of $\varphi'\in\mathbb{R}$ that makes it consistent under the assumption that all the previous consistency conditions (\ref{eq:consistency}) are met. The consistency conditions are now
\begin{equation}
    x_1^{(\lambda)} \varphi' +x_2^{(\lambda)}\theta=0 \label{eq:consistency2}
\end{equation}
where the coefficients $\{x_1^{(\lambda)},x_2^{(\lambda)}\}$ are the same as before. Observe first that if the previous system had consistency conditions of the form $x_1^{(\lambda)}=x_2^{(\lambda)}=0$, these are trivially met here for any choice of $\varphi'$. On the other hand, all remaining conditions in Eq.\ (\ref{eq:consistency2}) are fulfilled by choosing $\varphi'=-f\theta$.
\end{proof}

We are now in the position to prove Lemma \ref{lemma:SU2toSL2}.
\begin{proof}[Proof of Lemma \ref{lemma:SU2toSL2}]
Let $\ket{\psi}=\ket{0}_1\ket{\phi_0}_{2\ldots n}+\ket{1}_1\ket{\phi_1}_{2\ldots n}$ be an $n$-qubit state where $\ket{\phi_0},\ket{\phi_1}\in(\mathbb{C}^2)^{\otimes(n-1)}$ are both unnormalized. Suppose that $U(\theta) = e^{-i\varphi}diag(e^{i\theta},e^{-i\theta})\otimes_{j=2}^n U_j\in U(2,\mathbb{C})^{\otimes n}$ with $U_j=e^{i\alpha_j}|v_j^{(0)}\rangle\langle v_j^{(0)}|+e^{-i\alpha_j}|v_j^{(1)}\rangle\langle v_j^{(1)}|$, where $\theta=x\pi$ with $x\in(0,2)\cap(\mathbb{R}\setminus\mathbb{Q})$, $\varphi,\alpha_j\in\mathbb{R}$, and $\{|v_j^{(k)}\rangle\}_{k=0}^1$ are orthonormal for all $j\in\{2,\ldots,n\}$, is a symmetry of $\ket{\psi}$ (i.e., $U(\theta)\ket{\psi}=\ket{\psi}$). This means that
\begin{equation}
    \bigotimes_{j=2}^n \left(\sum_{k=0}^1 e^{ i(-1)^k\alpha_j}|v_j^{(k)}\rangle\langle v_j^{(k)}|\right)\ket{\phi_l}=e^{i(\varphi-(-1)^l\theta)}\ket{\phi_l}
    \label{eq:eigenEq}
\end{equation}
has to hold for both $l=0$ and 1. These two equations are satisfied if and only if
\begin{equation}
    \ket{\phi_l} = \sum_{\vec{k}\in\{0,1\}^{n-1}} c_{\vec{k}}^{(l)} \;\bigotimes_{j=2}^n |v_j^{(k_j)}\rangle
    \label{eq:decomp_phi_l}
\end{equation}
where $c_{\vec{k}}^{(l)}\in\mathbb{C}$ and $c_{\vec{k}}^{(l)}\neq0$ only for those $\vec{k}$, for which it holds that  $e^{i\sum_{j=2}^n (-1)^{k_j}\alpha_j} = e^{i(\varphi-(-1)^l\theta)}$ for $l=0$ and 1, which is equivalent to
\begin{equation}
    \sum_{j=2}^n (-1)^{k_j}\alpha_j = \varphi-(-1)^l\theta + 2\pi m
    \label{eq:eigenvalueCondition}
\end{equation}
for some $m\in\mathbb{Z}$. The collection of all constraints on $\{\alpha_j\}_{j=2}^n$ [in the form of Eq.\;(\ref{eq:eigenvalueCondition})] associated to all $c_{\vec{k}}^{(l)}\neq0$ for $l=0$ and 1 form a system of linear equations of the form, $\mathbf{B}\vec{\alpha}=\vec{\varphi}+\vec{\theta}+2\pi\vec{m}$ where $\mathbf{B}$ is a $(d_1+d_2)\times (n-1)$ matrix with entries either $1$ or $-1$, $d_{l+1}$ is the number of $c_{\vec{k}}^{(l)}\neq0$, $\vec{\alpha}=(\alpha_2,\ldots,\alpha_n)^T$, and $\vec{\varphi}$, $\vec{\theta}$, and $\vec{m}$ are defined in the same way as in Lemma \ref{lemma:vecZ0consistent}. 

As $U(\theta)$ is unitary, the equation $\mathbf{B}\vec{\alpha}=\vec{\varphi}+\vec{\theta}+2\pi\vec{m}$ must have a real solution. Then, by Lemma \ref{lemma:vecZ0consistent}, $\mathbf{B}\vec{\alpha'}=\vec{\varphi'}+\vec{\theta}$ must also have a real solution $\vec{\alpha'}\in\mathbb{R}^{n-1}$, which implies that \cite{footnote:how_to_get_exp_equal}
\begin{equation}
    e^{\sum_{j=2}^n (-1)^{k_j}z\alpha'_j} = e^{z(\varphi'-(-1)^l\theta)}
    \label{eq:z&ExpPhases}
\end{equation}
must hold for all $z\in\mathbb{C}$, $\vec{k}\in\{0,1\}^{n-1}$ and $l\in\{0,1\}$ such that $c_{\vec{k}}^{(l)}\neq0$. 

Now, let $X(z) = w(z) diag(e^{z\theta},e^{-z\theta})\otimes_{i=2}^n X_i(z)$ where $w(z)=e^{-z\varphi'}$, $X_j(z)=e^{z\alpha'_j}|v_j^{(0)}\rangle\langle v_j^{(0)}|+e^{-z\alpha'_j}|v_j^{(1)}\rangle\langle v_j^{(1)}|$, and $z\in\mathbb{C}$. Note that $X(z)\in U(2,\mathbb{C})^{\otimes n}$ if and only if $Re(z)=0$. Using the same methods as above, it is straightforward to see that $X(z)\ket{\psi}=\ket{\psi}$ for all $z\in\mathbb{C}$. Hence, there also exist non-unitary symmetries, which completes the proof.
\end{proof} 

We are now in the position to present an alternative proof of Corollary \ref{cor:UnitarySymm_WI} for $n$ qubits. From the argument above, SLOCC classes that have a state $\ket{\psi}$ with an infinite symmetry group $\mathcal{S}_\psi \subseteq U(2,\mathbb{C})^{\otimes n}$ means that there exists an LU-equivalent state $\ket{\phi}$ such that $\mathcal{S}_\phi^{(j)}\in\{\mathcal{G}_1,\mathcal{G}_2,SU(2,\mathbb{C})\}$ for at least one $j\in[n]$. We will now prove that no such class exists. Recall that, without loss of generality, we can take $\mathcal{S}_\phi^{(1)}\in\{\mathcal{G}_1,\mathcal{G}_2,SU(2,\mathbb{C})\}$ and $U(\theta)=e^{-i\varphi}diag(e^{i\theta},e^{-i\theta})\otimes_{i=2}^n U_i\in \mathcal{S}_\phi$ where $\theta=x\pi$ and $x\in(0,2)\cap(\mathbb{R}\setminus\mathbb{Q})$. By Lemma \ref{lemma:SU2toSL2}, the symmetry group $\mathcal{S}_\phi$ must also contain $X(z)\in GL(2,\mathbb{C})^{\otimes n}$ for $z\in\mathbb{C}$ such that $Re(z)\neq0$, which are not in $U(2,\mathbb{C})^{\otimes n}$. Hence, these classes with an infinite stabilizer $\mathcal{S}_\phi\subseteq U(2,\mathbb{C})^{\otimes n}$ do not exist. In particular, all $n$-qubit SLOCC classes that have a state $\ket{\psi}$ with a symmetry group $\mathcal{S}_\psi \subseteq U(2,\mathbb{C})^{\otimes n}$ contain weakly isolated states.

\section{Proof of Theorem \ref{thm:Sn_weakIsolation}}\label{app:proof_Sn_NOweakIsolation}

In this appendix, we prove Theorem \ref{thm:Sn_weakIsolation} which is stated in Sec.\;\ref{sec:noncompact}. We restate the theorem here for clarity.

\begin{reptheorem}{thm:Sn_weakIsolation}
Let $\ket{\Psi_s}$ be an $n$-qudit state $(n\geq4)$ with a symmetry group $\mathcal{S}_{\Psi_s}\subseteq\{S^{\otimes n} \mbox{ with } S\in GL(d,\mathbb{C})\}$ with $d\geq 2$ arbitrary. Then, there exist weakly isolated states within the SLOCC class of $\ket{\Psi_s}$.
\end{reptheorem}
\begin{proof}
Suppose that the state $\ket{\Psi_s}$ has symmetries only of the form $S^{\otimes n}$. We will show that a weakly isolated state in such an SLOCC class can be constructed as $\bigotimes_{i=1}^{n-1}g_i\otimes\mathbbm{1}\ket{\Psi_s}$. Here, the matrices $g_i$ are chosen  such that the only matrices $S\in GL(d,\mathbb{C})$ that satisfy $S^\dagger G_i S \propto G_i$ for $n-1$ sites are $S\propto\mathbbm{1}$ (see Lemma \ref{lemma:isolation}). We divide this proof into two cases: (a) $n\geq5$ and (b) $n=4$. Case (a) has been proven in Ref.\;\cite{OurSymmPaper} \cite{footnote:WI_n>=5}. The main idea there was to choose $G_i\not\propto\identity$ for $i=1,2,3$ and $g_{n-1}=\identity$ so that for any $S$ which  quasi-commute with $G_i$ on  $n-1$ sites, it must quasi-commute with at least two $G_i\not\propto\identity$ and at least one $G_j=\identity$. The last condition implies that $S$ is proportional to a unitary. Appropriate choices of the eigenbases of $G_1,G_2,G_3\not\propto\identity$ force $S\propto\identity$. More precisely, $G_1, G_2$ and $G_3$ are chosen to be non-degenerate and each of these operators has an eigenvector that is non-orthogonal to all eigenvectors of the other two operators. The fact that $S$ has to commute with at least two of these operators implies that $S\propto \identity$. 
For case (b), where $n=4$, such a construction is no longer possible, because there exists $S\not\propto\identity$ that can quasi-commute with $G_1\not\propto\identity$ and $G_3=G_4=\identity$. However, as we will show next, there also exist weakly isolated states.

Similar to case (a), we show that the state $g_1\otimes\ldots\otimes g_4\ket{\Psi_s}$, with $g_i$ defined below is weakly isolated. Let $\{|v^{(1)}_k\rangle=|k\rangle\}_{k=0}^{d-1}$, $\{|v^{(2)}_k\rangle=U_F|k\rangle\}_{k=0}^{d-1}$ and $\{|v^{(3)}_k\rangle=\widetilde{U}U_F|k\rangle\}_{k=0}^{d-1}$ (where $U_F=\frac{1}{\sqrt{d}}\sum_{j,k=0}^{d-1} \omega^{jk}|j\rangle\langle k|$, $\widetilde{U}=|0\rangle\langle 0|+\sum_{k=1}^{d-1} \sqrt{\omega}|k\rangle\langle k|$ and $\omega=e^{i\frac{2\pi}{d}}$) be three orthonormal bases. One can easily verify that they are three mutually non-orthogonal bases for any $d\geq2$. We choose $g_i=\sqrt{G_i}$ where $G_1=\sum_{j=0}^{d-1} r^j |j\rangle\langle j|$ for any $0<r<1$, $G_i=\sum_{j=0}^{d-1} (1-\epsilon)^j|v^{(i)}_j\rangle\langle v^{(i)}_j|$ for $i=2,3$ and $0<\epsilon\ll1$, and $G_4=\mathbbm{1}$.

To prove that the state is weakly isolated, we must prove that no $S\not\propto\identity$ can satisfy $S^\dagger G_i S \propto G_i$ for any three sites. We distinguish the cases: (i) site 4 and any two sites from $\{1,2,3\}$ and (ii) sites 1,2 and 3. In case (i), the proof in case (a) holds directly since the $G_i$ defined here correspond to special instances of the weakly isolated states constructed for $n\geq5$ (by ignoring $G_{4\leq i \leq n-1}$ in the construction for $n\geq5$). In case (ii), we use Observation 1 in Ref.\;\cite{OurSymmPaper} and $S^\dagger G_1 S \propto G_1$ to get $S\propto\sqrt{G_1}^{-1}U\sqrt{G_1}$ for some unitary $U$. Substituting this expression for $S$ into $S^\dagger G_i S \propto G_i$ and using that traces of both sides of this equation must coincide, we obtain $[\sqrt{G_1}^{-1} G_i \sqrt{G_1}^{-1}, U]=0$. We now define $Y=\sqrt{G_1}^{-1} G_2 \sqrt{G_1}^{-1}$ and $Z=\sqrt{G_1}^{-1} G_3 \sqrt{G_1}^{-1}$ which are both positive definite. To prove that only $U\propto\identity$ can satisfy both $[Y,U]=0$ and $[Z,U]=0$, we follow the same argument for proving $S\propto\mathbbm{1}$ if $[S,G_i]=0$ for $i=1,2$ in case (a). It remains to show that (1) $Y$ and $Z$ have non-degenerate eigenvalues and (2) one of the eigenvectors of $Y$ is non-orthogonal to all eigenvectors of $Z$. Despite the proof being straightforward, we present it here for completeness. Since both $G_1$ and $\widetilde{U}$ are diagonal and $G_3=\widetilde{U}G_2 \widetilde{U}^\dagger$, we have that $Z=\widetilde{U}Y\widetilde{U}^\dagger$. In particular, condition (1) is satisfied if $Y$ is non-degenerate. To prove that $Y$ and $Z$ satisfy conditions (1) and (2), we begin by writing $Y$ in the computational basis as
\begin{align}
    Y &= \frac{1}{d}\sum_{j,k,m=0}^{d-1} \frac{(1-\epsilon)^m \omega^{m(j-k)}}{\sqrt{r^{j+k}}}|j\rangle\langle k|\nonumber\\
    &= \frac{1-(1-\epsilon)^d}{d}\sum_{j,k=0}^{d-1} \frac{1}{\sqrt{r^{j+k}}(1-(1-\epsilon)\omega^{(j-k)})}|j\rangle\langle k|\nonumber\\
    &\propto \sum_{j,k=0}^{d-1} \frac{\epsilon}{\sqrt{r^{j+k}}(1-(1-\epsilon)\omega^{(j-k)})}|j\rangle\langle k| \label{eq:Ypropfactor}\\
    &\equiv H_0 + \epsilon V(\epsilon)
\end{align}
where we use the geometric series in the second line to sum over the index $m$ and define a diagonal matrix $H_0=\sum_{k=0}^{d-1}\frac{1}{r^k}|k\rangle\langle k|$ and an off-diagonal matrix $V(\epsilon)=\sum_{j\neq k} \frac{1}{\sqrt{r^{j+k}}(1-(1-\epsilon)\omega^{(j-k)})}|j\rangle\langle k|$. Note that the proportionality factor $\frac{1-(1-\epsilon)^d}{\epsilon d}$ omitted in Eq.\;(\ref{eq:Ypropfactor}) is well defined for $\epsilon>0$. By setting $\epsilon$ sufficiently small, we can use non-degenerate perturbation theory \cite{Sakurai} to find the eigenvalues and eigenvectors of $H_0 + \epsilon V(\epsilon) \propto Y$. 

Since $V(\epsilon)$ depends on the perturbation parameter $\epsilon$, we need to Taylor expand each matrix element around $\epsilon=0$, which results in $V(\epsilon) = \sum_{m=0}^\infty \epsilon^m V^{(m)}$ where
\begin{align}
    V^{(m)} = \sum_{j\neq k} \frac{(-1)^m \omega^{m(j-k)}}{\sqrt{r^{j+k}}(1-\omega^{j-k})^{m+1}}|j\rangle\langle k|.
\end{align}
The solution to the  eigenvalue equation
\begin{align}
    (H_0 + \epsilon V(\epsilon))|e_p\rangle = E_p |e_p\rangle \label{eq:eigeneq}
\end{align}
with $E_p = \sum_{m=0}^\infty \epsilon^m E^{(m)}_p$ and $|e_p\rangle=\sum_{m=0}^\infty \epsilon^m |p^{(m)}\rangle$ \cite{footnote:pert_converge} is given (up to first order) by $E^{(0)}_p=\frac{1}{r^p}$, $|p^{(0)}\rangle = |p\rangle$, $E^{(1)}_p=\langle p^{(0)}|V^{(0)}|p^{(0)}\rangle=0$, and 
\begin{align}
    |p^{(1)}\rangle = \sum_{j\neq p}\frac{\sqrt{r}^{j+p}}{(r^j-r^p)(1-\omega^{j-p})}\ket{j}
\end{align}
for $p\in\{0,\ldots,d-1\}$. Thus, the eigenvalues $E_p = \frac{1}{r^p} + \mathcal{O}(\epsilon^2)$ are non-degenerate if $0<\epsilon\ll1$ is sufficiently small \cite{footnote:pert}. This shows that the eigenvalues of $Y\propto H_0 + \epsilon V(\epsilon)$ and $Z$ are also non-degenerate [satisfying condition (1)].

It remains to show that one of the eigenvectors of $Y$ is non-orthogonal to all eigenvectors of $Z$. Using the fact that $Z=\widetilde{U}Y\widetilde{U}^\dagger$ and $|e_p\rangle$ are the eigenvectors of $Y$, the task becomes proving that for at least one $q$, $\langle e_p|\widetilde{U}|e_q\rangle\neq0$ for all $p\in\{0,\ldots,d-1\}$. Let $|e_p\rangle=\sum_{j=0}^{d-1} e^{(j)}_p|j\rangle$, then
\begin{align}
    \langle e_p|\widetilde{U}|e_q\rangle &= e^{(0)*}_p e^{(0)}_q + \sqrt{\omega}\sum_{j=1}^{d-1} e^{(j)*}_p e^{(j)}_q\label{eq:epUeq2}\\
    &= (1-\sqrt{\omega})e^{(0)*}_p e^{(0)}_q + \sqrt{\omega}\delta_{p,q}.
\end{align}
For $p=q$, this overlap is non-vanishing as $|e^{(0)}_p|^2$ is real and $\frac{-\sqrt{\omega}}{1-\sqrt{\omega}}$ is complex. For all $p\neq q$, $e^{(0)*}_p e^{(0)}_q\neq0$ if and only if $e^{(0)}_p=\langle0|e_p\rangle\neq0 \;\forall\; p$. From perturbation theory (up to first order), we find that 
\begin{align}
    \langle0|e_p\rangle 
    =\begin{cases}1+\mathcal{O}(\epsilon^2), \qquad\qquad\qquad\;\text{if }p=0,\\
    \frac{\epsilon\sqrt{r}^{p}}{(1-r^p)(1-\omega^{-p})} +\mathcal{O}(\epsilon^2), \quad\text{if }p\neq0
    \end{cases}
\end{align}
which are non-zero for all $p\in\{0,\ldots,d-1\}$ if $0<\epsilon\ll1$ is sufficiently small while keeping the eigenvalues $\{E_p\}$ non-degenerate \cite{footnote:Ep_non-deg}. Hence, by choosing $\epsilon$ to be sufficiently small, conditions (1) and (2) are satisfied, so only $U\propto\identity$ can satisfy $[Y,U]=[Z,U]=0$ $\Rightarrow S\propto\sqrt{G_1}^{-1}U\sqrt{G_1}\propto\identity$, which completes the proof for case (b).
\end{proof}

\section{Proof of Observation \ref{obs:nleq4_WI} for $n=4$}\label{app:4qubit_WI}

In this appendix, we will demonstrate the proof of Observation \ref{obs:nleq4_WI} for 4 qubits with one or two selected cases that have an infinite stabilizer from each of the following three families of SLOCC classes. The proofs for all the remaining infinite-stabilizer cases of the following three and all other families of SLOCC classes work analogously and will be omitted here. The full list of SLOCC classes with infinite stabilizers and the exact form of their stabilizers can be found in Ref.\;\cite{4qubitMES}. Note that we will adopt the notation used in Ref.\;\cite{4qubitMES} for labelling different families of SLOCC classes and will neglect state normalization.

The $G_{abcd}$ SLOCC classes have representative/seed states $\ket{\Psi_{abcd}}=a\ket{\Phi^+}\ket{\Phi^+}+b\ket{\Psi^+}\ket{\Psi^+}+c\ket{\Psi^-}\ket{\Psi^-}+d\ket{\Phi^-}\ket{\Phi^-}$ parametrized by $a,b,c,d\in\mathbb{C}$, where $\ket{\Phi^\pm}=\frac{1}{\sqrt{2}}(\ket{00}\pm\ket{11})$ and $\ket{\Psi^\pm}=\frac{1}{\sqrt{2}}(\ket{01}\pm\ket{10})$.
\begin{enumerate}[(a)]
    \item For $a^2=b^2=d^2\neq c^2$ and $a\neq0$, the stabilizers of these seed states are $\{X^{\otimes4}:X\in SL(2,\mathbb{C})\}$, so by Theorem \ref{thm:Sn_weakIsolation}, all these SLOCC classes contain weakly isolated states.
    \item For $a^2=d^2\neq b^2=c^2$ where $a^2=-c^2\neq0$, these SLOCC classes contain states $\ket{\psi_{G_{abcd}}}$ with the stabilizer $\{\langle\sigma_x^{\otimes4},-\sigma_x P_{\sqrt{\frac{y_1}{x_1}}}\otimes P_{\sqrt{\frac{y_2}{x_2}}}\otimes \sigma_x P_{i\sqrt{\frac{x_1}{y_1}}}\otimes P_{i\sqrt{\frac{x_2}{y_2}}}, P_{\sqrt{\frac{z_1}{\eta_1}}}\otimes P_{\sqrt{\frac{z_2}{\eta_2}}}\otimes P_{\sqrt{\frac{\eta_1}{z_1}}}\otimes P_{\sqrt{\frac{\eta_2}{z_2}}}\rangle\}_{x_1,x_2,y_1,y_2,z_1,z_2,\eta_1,\eta_2\in\mathbb{C}\setminus\{0\}}$ where $P_z=\text{diag}(z,z^{-1})$. Let $H_i=\begin{pmatrix} a_i & b_i \\ b_i^* & c_i \end{pmatrix}>0$. Since $\sigma_xP_{z}^\dagger H_i P_{z}\sigma_x\not\propto H_i$ if and only if $z\neq\pm\sqrt{\frac{c_i b_i}{a_i b_i^*}}$, and also $P_{z}^\dagger H_i P_{z}\not\propto H_i$ if and only if $z\neq\pm1$ and $b_i\neq0$, the states $\bigotimes_i h_i\ket{\psi_{G_{abcd}}}$ such that $H_i=h_i^\dagger h_i$ are weakly isolated if $H_i>0$ are non-diagonal for all $i$ and their entries satisfy $|\frac{c_i b_i}{a_i b_i^*}|\neq|\frac{a_j b_j^*}{c_j b_j}|$ for all pairs of indices $(i,j)\in\{(1,3),(2,4)\}$.
\end{enumerate}

The $L_{abc_2}$ SLOCC classes with $c\neq0$ have representative/seed states $\ket{\Psi_{abc_2}}=a\ket{\Phi^+}\ket{\Phi^+}-b\ket{\Phi^-}\ket{\Phi^-}+c(\ket{0110}+\ket{1001})-\frac{i}{2c}\ket{1010}$ parametrized by $a,b,c\in\mathbb{C}$. For $a^2=b^2=c^2$, these SLOCC classes contain states $\ket{\psi_{L_{abc_2}}}$ with the stabilizer $\{T_{z^2,xz}\otimes T_{\frac{1}{z^2},\frac{y}{z}}\otimes T_{z^2,-xz}\otimes T_{\frac{1}{z^2},-\frac{y}{z}}\}_{x,y,z\in\mathbb{C},z\neq0}$ where $T_{x,y}=\begin{pmatrix} 1 & y \\ 0 & x \end{pmatrix}$. For $H_i=\begin{pmatrix} a_i & b_i \\ b_i^* & c_i \end{pmatrix}>0$, $T_{x,y}^\dagger H_i T_{x,y}\propto H_i$ if and only if $y=\frac{b_i}{a_i}(1-x)$ and $|x|=1$. By choosing $H_i$ such that $b_i\neq0$ and $|\frac{b_i}{a_i}|\neq|\frac{b_j}{a_j}|$ for all $(i,j)\in\{(1,3),(2,4)\}$, $\bigotimes_i h_i\ket{\psi_{L_{abc_2}}}$ where $H_i=h_i^\dagger h_i$ are weakly isolated.

The $L_{a_2b_2}$ SLOCC classes with $a,b\neq0$ have seed states $\ket{\Psi_{a_2b_2}}=a(\ket{0011}+\ket{1100})+b(\ket{0110}+\ket{1001})+\frac{i}{2a}\ket{1111}-\frac{i}{2b}\ket{1010}$ parametrized by $a,b\in\mathbb{C}\setminus\{0\}$. For $a^2=b^2$, these SLOCC classes contain states $\ket{\psi_{L_{a_2b_2}}}$ with the stabilizer $\{(\sigma_z\otimes\sigma_y\otimes\sigma_z\otimes\sigma_y)^m\left[T_{1,y}\otimes P_{z^{-1}}\otimes T_{1,-y}\otimes P_z\right]\}_{m\in\{0,1\},y,z\in\mathbb{C}, z\neq0}$. By choosing $H_2$ and $H_4$ non-diagonal, no non-trivial symmetry with $m=0$ can quasi-commute with $\bigotimes_i H_i$ at more than 2 sites. For the remaining symmetries $\bigotimes_i S^{(i)}$ with $m=1$, since the parameters in $S^{(3)}$ and $S^{(4)}$ are completely fixed by $S^{(1)}$ and $S^{(2)}$, once $H_1$ and the non-diagonal $H_2$ are fixed, there always exist $H_3$ and a non-diagonal $H_4$ that cannot be quasi-commuted with $S^{(3)}$ and $S^{(4)}$ of any symmetries with $m=1$. Hence, the states $\bigotimes_i h_i\ket{\psi_{L_{a_2b_2}}}$ where $H_i=h_i^\dagger h_i$ are weakly isolated.

It is straightforward to apply arguments similar to the ones used for the above cases to show that all other remaining 4-qubit infinite-stabilizer SLOCC classes, except the GHZ and W classes, have weakly isolated states.

\section{Proof of Observation \ref{obs:DiagStatesReachAll}}\label{app:DiagStatesReachAll}

In this appendix, we will prove Observation \ref{obs:DiagStatesReachAll} which is stated in Sec.\;\ref{sec:MES_A3}. We restate it here for clarity.

\begin{repobservation}{obs:DiagStatesReachAll}
Any state in the SLOCC class of $\ket{A_3}$ is either LU equivalent to or can be reached from a state in one of the following three sets of states (up to LU) via LOCC in at most 3 rounds:
\begin{enumerate}[(i)]
    \item $\ket{A_3}$,
    \item $\ket{\psi_2(\alpha_1;\alpha_2)}\propto\text{diag}(\sqrt{\alpha_1},1,1)\otimes\text{diag}(\sqrt{\alpha_2},1,1)\otimes\mathbbm{1}\ket{A_3}$ where $\alpha_1,\alpha_2>0$, $\alpha_1\neq\alpha_2$ and $\alpha_1,\alpha_2\neq1$,
    \item $\ket{\psi_3(\alpha_1,\beta_1;\alpha_2,\beta_2)}\propto\text{diag}(\sqrt{\alpha_1},\sqrt{\beta_1},1)\otimes\text{diag}(\sqrt{\alpha_2},\sqrt{\beta_2},1) \otimes \mathbbm{1}\ket{A_3}$ where $\alpha_1,\alpha_2,\beta_1,\beta_2>0$, $\alpha_1\neq\beta_1\neq\beta_2\neq\alpha_2$, $\alpha_1\neq\alpha_2$, $\frac{\alpha_1}{\beta_1}\neq\frac{\alpha_2}{\beta_2}$ and $\alpha_1,\alpha_2,\beta_1,\beta_2\neq1$.
\end{enumerate}
Moreover, these states are not reachable via $\text{LOCC}_\mathbb{N}$. Hence, the MES of $\ket{A_3}$ is contained in the union set of all these states, which we call $M_{A_3}$.
\end{repobservation}

\begin{proof}
Recall from the proof of Theorem \ref{thm:3Qudit_IsoFree} that any state $\ket{\phi}$ in the SLOCC class of $\ket{A_3}$ is equal to $h_1 \otimes h_2 \otimes \mathbbm{1} \ket{A_3}$ with the normalization factor absorbed into $h_1$ or $h_2$. Using the singular value decomposition of $h_2$, $h_2=V\sqrt{D_2}U^\dagger$, where $D_2$ is positive diagonal, and the fact that $U^{\otimes3}$ is a symmetry of $\ket{A_3}$, we have that 
\begin{equation}
    \ket{\phi} \stackrel{\text{LU}}{\cong} h_1 U \otimes \sqrt{D_2} \otimes \mathbbm{1}\ket{A_3}\label{eq:LUequivSLOCC_A3}.
\end{equation}
Any of these states can be reached from a state $\sqrt{D_1}\otimes \sqrt{D_2} \otimes \mathbbm{1}\ket{A_3}$ (up to LU) via $\text{LOCC}_1$, where $D_1$ is positive diagonal. The transformation is achieved if party 1 measures with the operators $\mathcal{K}_{D\rightarrow ND} =  \{\frac{1}{\sqrt{3}}h_1 U \sqrt{D_1}^{-1}Z^j\}_{j=0,1,2}$ and each of the remaining parties apply $Z^j$ depending on the measurement outcome $j$ of party 1, where $Z=\text{diag}(1,e^{i\frac{2\pi}{3}},e^{i\frac{4\pi}{3}})$. One can easily verify that it suffices to take $D_1=\text{diag}(U^\dagger H_1 U)$ so that the operators $\mathcal{K}_{D\rightarrow ND}$ satisfy the completeness relation for a measurement. Therefore, $\ket{\phi}$ is either LU equivalent to or can be reached from
\begin{align}
    &\sqrt{D_1}\otimes \sqrt{D_2} \otimes \mathbbm{1}\ket{A_3}\label{eq:DiagStates}\\
    \propto &\text{\;diag}(\sqrt{\alpha_1},\sqrt{\beta_1},1)\otimes\text{diag}(\sqrt{\alpha_2},\sqrt{\beta_2},1)\otimes\mathbbm{1}\ket{A_3}\nonumber
\end{align}
where $\alpha_i,\beta_i>0$ for $i=1,2$. It remains to show that the three types of states (i)--(iii), which are special cases of Eq.\;(\ref{eq:DiagStates}), are not $\text{LOCC}_\mathbb{N}$-reachable and all the remaining states of the form in Eq.\;(\ref{eq:DiagStates}) can be reached from one of these three types of states with LOCC in 2 steps.

We now prove that states of types (i)--(iii) are not $\text{LOCC}_\mathbb{N}$-reachable by showing that condition (ii) in Theorem \ref{thm:reachability} is violated for all these cases. Clearly, for the type-(i) state ($D_1=D_2=\identity$), any $S\in SL(3,\mathbb{C})$ that quasi-commutes with $\mathbbm{1}$ must quasi-commute with $D_i$ for all sites $i$. For type-(ii) states, any $S\in SL(3,\mathbb{C})$ that quasi-commutes with both $\text{diag}(\alpha,1,1)$ and $\text{diag}(\beta,1,1)$ for $\alpha\neq\beta$ must be of the form $e^{i\theta}\oplus U_2$ where $\theta\in\mathbb{R}$ and $U_2\in U(2,\mathbb{C})$ \cite{footnote:QCwithDiag}. Thus, any $S$ which quasi-commutes with any two operators of the set $\{\text{diag}(\alpha_1,1,1),\text{diag}(\alpha_2,1,1),\identity\}$ must also quasi-commute with the remaining one. Lastly, for type-(iii) states, any $S\in SL(3,\mathbb{C})$ that quasi-commutes with both $\text{diag}(\alpha,\beta,1)$ and $\text{diag}(\gamma,\delta,1)$ where $\frac{\alpha}{\gamma},\frac{\beta}{\delta}$, and 1 take three different values must be unitary and diagonal \cite{footnote:QCwithDiag}. Therefore, any $S$ that quasi-commutes with any two operators of the set $\{\text{diag}(\alpha_1,\beta_1,1),\text{diag}(\alpha_2,\beta_2,1),\identity\}$ must also quasi-commute with the remaining one. Thus, all states of types (i)--(iii) are not $\text{LOCC}_\mathbb{N}$-reachable. Since, by definition, a state is in the MES if and only if it is not reachable via LOCC, the MES of the $\ket{A_3}$ class is contained in the union set of these three types of states.

Finally, we have to show that all the states in the form of Eq.\;(\ref{eq:DiagStates}) but not LU equivalent to type (i), (ii) or (iii) are reachable from one of these three types of states via LOCC within 2 steps. The first case to consider is when $\frac{\alpha_1}{\beta_1}\neq\frac{\alpha_2}{\beta_2}$ and the set $\{\alpha_1,\alpha_2,\beta_1,\beta_2\}$ has at least one element equal to 1. All states in this case are, up to LU (by exchanging local basis $\ket{0}\leftrightarrow\ket{1}$ in all three systems) and systems permutation, equivalent to the state with
\begin{equation}
    D_1=\text{diag}(\alpha, \beta, 1) \text{ and } D_2=\text{diag}(\delta, 1, 1).\label{eq:diagStates1}
\end{equation}
We can further divide this case into the following:
\begin{enumerate}[(a)]
    \item The state $\sqrt{D_1}\otimes \mathbbm{1}^{\otimes2}\ket{A_3}$ (with $\delta=1$) can be obtained from $\ket{A_3}$ [type (i)] with a protocol similar to the one above where party 1 measures with operators $\{\frac{1}{\sqrt{\text{Tr}(D_1)}}\sqrt{D_1}X^j\}_{j=0,1,2}$ with $X=\sum_{i=0}^2|(i+1)\text{mod}\;3\rangle\langle i|$ and parties 2 and 3 each apply $X^j$ depending on the measurement outcome $j$.
    
    \item For $\delta\neq\frac{2\alpha}{\beta+1}$ and $\beta,\delta\neq1$, the state can be obtained from a type-(ii) state $\ket{\psi_2(\alpha_1;\alpha_2)}$ with $\alpha_1=\frac{2\alpha}{\beta+1}$ and $\alpha_2=\delta$ if party 1 measures with operators $\{\frac{1}{\sqrt{\beta+1}}\sqrt{D_1}\tilde{U}^j \text{diag}(\frac{1}{\sqrt{\alpha_1}},1,1)\}_{j=0,1}$ where $\tilde{U}=1\oplus \begin{pmatrix}0& -1 \\
     1 & 0 \end{pmatrix}$ and parties 2 and 3 each apply $\tilde{U}^j$ depending on the measurement outcome $j$.
     
    \item For $\delta=\frac{2\alpha}{\beta+1}$ and $\beta,\delta\neq1$, the state can be obtained from $\ket{A_3}$ in 2 steps: (1) Convert $\ket{A_3}$ into the state $\mathbbm{1}\otimes\mathbbm{1}\otimes\text{diag}(\sqrt{\frac{\beta+1}{2\alpha}},1,1)\ket{A_3} \propto \text{diag}(\sqrt{\frac{2\alpha}{\beta+1}},1,1)^{\otimes2}\otimes\mathbbm{1}\ket{A_3}$ with the protocol in (a) but having the roles of parties 1 and 3 reversed and $D_1$ in the measurement operators replaced by $\text{diag}(\frac{\beta+1}{2\alpha},1,1)$. (2) Party 1 follows protocol (b) to obtain $\sqrt{D_1}\otimes \sqrt{D_2} \otimes \mathbbm{1}\ket{A_3}$.
    
    \item For all other values of $\alpha,\beta,\delta>0$, the state is either a type-(ii) state or can be reached from $\ket{A_3}$ with the protocol in (a) having the roles of parties 1 and 2 (or 3) reversed.
\end{enumerate}

The second case to consider is when $\frac{\alpha_1}{\beta_1}\neq\frac{\alpha_2}{\beta_2}$ and the collection $\{\alpha_1,\alpha_2,\beta_1,\beta_2\}$ has no elements equal to 1. All states belonging to this case are defined by $D_1=\text{diag}(\alpha_1, \beta_1, 1)$ and $D_2=\text{diag}(\alpha_2, \beta_2, 1)$. In fact, they all correspond to the previous case described by Eq.\;(\ref{eq:diagStates1}) up to LU and permutation of systems, except for $\alpha_1\neq\beta_1\neq\beta_2\neq\alpha_2$ and $\alpha_1\neq\alpha_2$, which correspond to the type-(iii) states.

The last case to consider is when $\frac{\alpha_1}{\beta_1}=\frac{\alpha_2}{\beta_2}=\frac{1}{k}$ which corresponds to the states with $D_1=\text{diag}(\alpha, k\alpha, 1)$ and $D_2=\text{diag}(\beta, k\beta, 1)$. Using the symmetry $\frac{\sqrt{D_1}^{\otimes 3}}{\det\sqrt{D_1}}$, it is easy to see that these states are proportional to $\mathbbm{1}\otimes\text{diag}(1,1,\sqrt{\frac{\alpha}{\beta}})\otimes\text{diag}(1, \frac{1}{\sqrt{k}}, \sqrt{\alpha})\ket{A_3}$ which are equivalent to the case described by Eq.\;(\ref{eq:diagStates1}) up to LU and permutation of systems. This completes the proof. \end{proof}

We have shown that every state in the SLOCC class of $\ket{A_3}$ can be reached from a state in $M_{A_3}$ within 3 rounds of LOCC, including the round which converts a state of the form in Eq.\;(\ref{eq:DiagStates}) into the most general state in the SLOCC class described by Eq.\;(\ref{eq:LUequivSLOCC_A3}) with measurement operators $\mathcal{K}_{D\rightarrow ND}$. We observe that certain state transformations [e.g., when case (c) is involved] can be achieved with a non-trivial 3-round protocol. A specific example for that has already been shown in Sec.\;\ref{sec:MES_A3}.

\section{Transformations that are impossible via SEP within $M_{A_3}$}\label{app:SEPforbid}

In this appendix, we will prove Lemma \ref{lemma:MES_LOCCN_SEPforbid} which provides the full details of which states in $M_{A_3}$ cannot be related by SEP (and hence, by LOCC) as summarized in Sec.\;\ref{sec:MES_A3}. At the end, we will define the sets $\mathcal{D}_0$ and $\mathcal{D}_f$ that appear in Fig.\;\ref{Fig:MES_A3}.

\begin{lemma}\label{lemma:MES_LOCCN_SEPforbid}
For any type-(ii) state $\ket{\psi_2(\alpha_1;\alpha_2)}$ in Observation \ref{obs:DiagStatesReachAll}, it holds that: (a) they cannot be reached from any type-(iii) state and (b) they cannot be reached from nor converted to any type-(ii) state $\ket{\psi_2(\alpha'_1;\alpha'_2)}$ with $\alpha'_2\neq\frac{\alpha'_1(\alpha_2-1)+\alpha_1-\alpha_2}{\alpha_1-1}$ by LOCC. Moreover, the state $\ket{A_3}$ is in the MES.
\end{lemma}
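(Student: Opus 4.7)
The plan is to apply Theorem~\ref{thm:SEP} combined with the observation that $X:=g^\dagger\sum_q N_q^\dagger N_q g$ is positive semidefinite and satisfies $\bra{A_3}X\ket{A_3} = \sum_q\|N_q g\ket{A_3}\|^2 = 0$, which forces $X\ket{A_3}=0$. Using Observation~\ref{obs:An_symm_group} together with the fact that $SL(3,\mathbb{C})$ is closed under Hermitian conjugation, every $S_k\in\mathcal{S}_{A_3}$ and its adjoint satisfy $S_k\ket{A_3} = S_k^\dagger\ket{A_3} = \ket{A_3}$. Applying the SEP equation to $\ket{A_3}$ then produces the clean necessary condition
\begin{equation*}
G\ket{A_3} \;=\; \frac{1}{r}\sum_k p_k\, S_k^\dagger H\ket{A_3},
\end{equation*}
which completely bypasses any explicit characterization of $\mathcal{N}_{g\ket{A_3}}$.

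For the ``Moreover'' claim, I set $H=\mathbbm{1}$. The right-hand side of the displayed equation collapses to $\frac{1}{r}\ket{A_3}$, so $\ket{A_3}$ is an eigenvector of $G=g^\dagger g$ with eigenvalue $1/r$. The polar decomposition $g=U\sqrt{G}$ has $U=\bigotimes_i U_i$ local unitary, since both $g=\bigotimes_i g_i$ and $\sqrt{G}=\bigotimes_i\sqrt{G_i}$ are local. Hence $g\ket{A_3} = U\sqrt{G}\ket{A_3} = \sqrt{1/r}\,U\ket{A_3}$ is LU-equivalent to $\ket{A_3}$, so no LU-inequivalent state in the SLOCC class can reach $\ket{A_3}$ via SEP (nor via LOCC), placing $\ket{A_3}$ in the MES.

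For claims~(a) and~(b), write the type-(ii) target as $H = H_1\otimes H_2\otimes\mathbbm{1}$ with diagonal $H_1,H_2$, and expand $S_k^\dagger H\ket{A_3} = [(S_k^\dagger H_1)\otimes(S_k^\dagger H_2)\otimes S_k^\dagger]\ket{A_3}$. Project the necessary condition onto each of the six basis vectors $\bra{i_1 i_2 i_3}$ whose indices form a permutation of $(0,1,2)$. The Levi-Civita expansion of the resulting $3\times 3$ determinants reduces every projection to a linear combination of three scalar quantities $\bar A,\bar B,\bar C$ built from the first column and row-wise cross products of $S_k^\dagger$ and averaged over $k$ with weights $p_k$, subject to $\bar A+\bar B+\bar C=1$ (from $\det S_k^\dagger=1$). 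The result is an overdetermined system of six linear equations in the two unknowns $\bar A,\bar B$, whose compatibility conditions translate into algebraic constraints among the source and target parameters. Suitable pairwise differences eliminate the constant terms and isolate the relation
\begin{equation*}
(\alpha_1-\beta_1)(\alpha_2'-1) \;=\; (\alpha_2-\beta_2)(\alpha_1'-1).
\end{equation*}
For~(b), where $\beta_1=\beta_2=1$, this is precisely the admissibility criterion $\tfrac{\alpha_1-1}{\alpha_2-1} = \tfrac{\alpha_1'-1}{\alpha_2'-1}$, excluding every other type-(ii) target. For~(a), combining this relation with an additional compatibility condition extracted from the equations controlling $\bar A+\bar B$ and using the type-(iii) hypothesis $\alpha_1\beta_2\neq\alpha_2\beta_1$ should force the target parameters to leave the positive real axis or to coincide, contradicting the definition of a type-(ii) state.

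The main obstacle is the last step of~(a): verifying that the combined compatibility conditions, together with the fixed value $r=\bra{A_3}H\ket{A_3}/\bra{A_3}G\ket{A_3}$, admit no physically allowed type-(ii) target for any type-(iii) source. I expect that organizing the algebra around the natural source invariants $D=\alpha_1\beta_2-\alpha_2\beta_1$ and $\Sigma=(\alpha_1-\alpha_2)+(\beta_2-\beta_1)$ will make the impossibility transparent, typically by exhibiting a compatibility relation whose denominator vanishes while the numerator remains nonzero unless $\alpha_j'$ becomes negative.
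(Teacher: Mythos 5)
Your skeleton coincides with the paper's: invoke Theorem~\ref{thm:SEP}, act with the SEP identity on $\ket{A_3}$ so that the $\mathcal{N}_{g A_3}$ term drops out (the paper notes $N_q g\ket{A_3}=0$ directly, which is even quicker than your positivity argument) and $S_k^\dagger H S_k\ket{A_3}=S_k^\dagger H\ket{A_3}$, then project onto the antisymmetric basis components to obtain an overdetermined linear system. Two of your three claims are in good shape. For the ``Moreover'' part your route is genuinely different from and cleaner than the paper's: setting $H=\mathbbm{1}$ gives $G\ket{A_3}=\tfrac{1}{r}\ket{A_3}$, and the local polar decomposition then shows the source is LU-equivalent to $\ket{A_3}$; this rules out reachability from \emph{every} state at once, whereas the paper only checks inconsistency against the $M_{A_3}$ candidates and then has to invoke transitivity through Observation~\ref{obs:DiagStatesReachAll}. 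For part (b), your cofactor reduction is correct: since $\Delta_{i_1}D_{i_2}\in\{\alpha_1,\alpha_2,1\}$ according to which slot carries the index $1$, each projected equation collapses to $\alpha_1\bar{A}_{j_1}+\alpha_2\bar{A}_{j_2}+\bar{A}_{j_3}=r\Delta'_{j_1}D'_{j_2}$ with $\bar{A}_l=\sum_k p_k[S_k^\dagger]_{l,1}C_{l,1}(S_k^\dagger)$ and $\sum_l\bar{A}_l=1$, and the compatibility relation you extract, $(\alpha_1-1)(\alpha'_2-1)=(\alpha_2-1)(\alpha'_1-1)$, is algebraically identical to the threshold $\alpha'_2=\frac{\alpha'_1(\alpha_2-1)+\alpha_1-\alpha_2}{\alpha_1-1}$ in the lemma, and is manifestly symmetric under priming, which covers both directions. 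The paper instead writes the raw $6\times6$ system and Gaussian-eliminates; your three-unknown parametrization is a tidier way to see why its rank is so low.

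The genuine gap is part (a), and you have flagged it yourself: ``should force the target parameters to leave the positive real axis or to coincide'' is a plan, not a proof. The single relation $(\alpha_1-\beta_1)(\alpha'_2-1)=(\alpha_2-\beta_2)(\alpha'_1-1)$ does not suffice there --- for a generic type-(iii) source one can always choose a bona fide type-(ii) target satisfying it, so the impossibility must come from the \emph{other} compatibility conditions of the rank-deficient system. Concretely, with unknowns $\bar{A}_1,\bar{A}_2,\bar{A}_3,r$ and seven equations (six projections plus $\sum_l\bar{A}_l=1$) there are several independent consistency relations; for instance, combining the pairwise differences of rows $(1,2)$, $(3,4)$ and $(5,6)$ with $(\bar{A}_1-\bar{A}_3)=(\bar{A}_1-\bar{A}_2)+(\bar{A}_2-\bar{A}_3)$ yields a constraint involving \emph{only} the source parameters, namely $\beta'_1(\alpha'_2-1)-(\alpha'_2-\beta'_2)-\alpha'_1(\beta'_2-1)=0$, and one must show that whenever this and your relation hold, the remaining conditions pin the target down to parameters violating the type-(ii) requirements ($\alpha_1,\alpha_2>0$, $\neq 1$, $\alpha_1\neq\alpha_2$). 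This is exactly the finite Gaussian-elimination computation the paper performs on its Eq.~(\ref{eq:type3to2SEP}); until you carry it out, claim (a) --- which is the bulk of the lemma --- remains unproved.
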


\begin{proof}
We will prove that the transformations are not achievable even with SEP by showing that Eq.\;(\ref{eq:SEP}) in Theorem \ref{thm:SEP} cannot be satisfied. Let the initial state be $\sqrt{\Delta'}\otimes \sqrt{D'}\otimes\identity\ket{A_3}$ and the target state be $\sqrt{\Delta}\otimes \sqrt{D}\otimes\identity\ket{A_3}$ where $D, D', \Delta$ and $\Delta'$ are positive diagonal matrices. To obtain a simple necessary condition for the existence of a SEP transformation, we compute the  overlap of Eq.\;(\ref{eq:SEP}) with $\bra{j_1,j_2,j_3}$ and $\ket{A_3}$ after inserting $\Delta \otimes D\otimes \identity$ ($\Delta'\otimes D'\otimes \identity$) for $H$ ($G$), respectively. We get \cite{footnote:SEP_A3_singular}
\begin{equation}
    \sum_{i_1,i_2,i_3,k} p_k \epsilon_{i_1,i_2,i_3}\prod_{m=1}^3 [S_k^\dagger]_{j_m,i_m}\Delta_{i_1}D_{i_2} = r\epsilon_{j_1,j_2,j_3}\Delta'_{j_1}D'_{j_2}.\label{eq:3QutritSEP}
\end{equation}

For case (a), we will show that any type-(iii) state $\ket{\psi_3(\alpha'_1,\beta'_1;\alpha'_2,\beta'_2)}$ cannot be transformed into any type-(ii) state $\ket{\psi_2(\alpha_1;\alpha_2)}$ via SEP. The corresponding Eq.\;(\ref{eq:3QutritSEP}) is equivalent to the following matrix equation
\begin{widetext}
\begin{equation}
    \begin{pmatrix}
    \alpha_1 & -\alpha_1 & -1 & -\alpha_2 & 1 & \alpha_2\\
    -\alpha_1 & \alpha_1 & \alpha_2 & 1 & -\alpha_2 & -1\\
    -\alpha_2 & \alpha_2 & 1 & \alpha_1 & -1 & -\alpha_1\\
    1 & -1 & -\alpha_2 & -\alpha_1 & \alpha_2 & \alpha_1\\
    \alpha_2 & -\alpha_2 & -\alpha_1 & -1 & \alpha_1 & 1\\
    -1 & 1 & \alpha_1 & \alpha_2 & -\alpha_1 & -\alpha_2
    \end{pmatrix}\cdot \vec{x}= r\begin{pmatrix}
    \alpha'_1\beta'_2 \\ -\alpha'_1 \\ -\alpha'_2\beta'_1 \\ \beta'_1 \\ \alpha'_2 \\ -\beta'_2
    \end{pmatrix}\label{eq:type3to2SEP}
\end{equation}
\end{widetext}
where the entries of $\vec{x}$ correspond to the non-vanishing terms of the left hand side of Eq.\;(\ref{eq:3QutritSEP}) and are given by $\sum_{k} p_k \prod_{m=1}^3 [S_k^\dagger]_{j_m,i_m}$ for $i_m,j_m\in\{1,2,3\}$. Using Gaussian elimination, one can easily check that Eq.\;(\ref{eq:type3to2SEP}) has no solution (i.e., inconsistent) if $\alpha_1,\alpha_2,\alpha'_1,\beta'_1,\alpha'_2,\beta'_2$ correspond to the parameters of any type-(ii) and type-(iii) states. Hence, such a transformation is not possible.

For case (b), we will first show that any type-(ii) state $\ket{\psi_2(\alpha'_1;\alpha'_2)}$ cannot be transformed into any type-(ii) state $\ket{\psi_2(\alpha_1;\alpha_2)}$ with $\alpha'_2\neq\frac{\alpha'_1(\alpha_2-1)+\alpha_1-\alpha_2}{\alpha_1-1}$ via SEP. Then, we will show that it is also impossible to transform $\ket{\psi_2(\alpha_1;\alpha_2)}$ into $\ket{\psi_2(\alpha'_1;\alpha'_2)}$ for these parameters with SEP. For the first direction, we simply set $\beta'_1=\beta'_2=1$ in Eq.\;(\ref{eq:type3to2SEP}) and observe that the equation has no solution if
\begin{equation}
    \alpha'_2\neq\frac{\alpha'_1(\alpha_2-1)+\alpha_1-\alpha_2}{\alpha_1-1}.\label{eq:type2to2SEP}
\end{equation}
For the reversed direction, one simply has to exchange the parameters $\alpha_i\leftrightarrow\alpha'_i$ for both $i=1,2$ in Eq.\;(\ref{eq:type3to2SEP}) that has $\beta'_1=\beta'_2=1$ and observe that it also has no solution if Eq.\;(\ref{eq:type2to2SEP}) holds by the fact that Eq.\;(\ref{eq:type2to2SEP}) is symmetric under the exchange, $\alpha_i\leftrightarrow\alpha'_i$.

To show that the type-(i) state, $\ket{A_3}$, is in the MES, one simply sets $\alpha_1=\alpha_2=1$ in Eq.\;(\ref{eq:type3to2SEP}) and sees that the matrix equation is inconsistent when $\alpha'_1,\beta'_1,\alpha'_2,\beta'_2$ correspond to the parameters of any type-(ii) or type-(iii) states. Since $\ket{A_3}$ cannot be reached from any other MES candidates (see Observation \ref{obs:DiagStatesReachAll}) via SEP (LOCC), it must be in the MES \cite{footnote:A3_in_MES}.
\end{proof}

We now define the sets $\mathcal{D}_0$ and $\mathcal{D}_f$ that we mention in Fig.\;\ref{Fig:MES_A3} based on the results in Lemma \ref{lemma:MES_LOCCN_SEPforbid}. We first fix a pair of $(\alpha_1,\alpha_2)$ such that $\alpha_1,\alpha_2>0$, $\alpha_1\neq\alpha_2$, and $\alpha_1,\alpha_2\neq1$. As a zero-measure subset of $M_{A_3}$, $\mathcal{D}_0$ is defined to contain $\ket{A_3}$, a type-(ii) state $\ket{\psi_2(\alpha_1;\alpha_2)}$. The full-measure subset, $\mathcal{D}_f$, of $M_{A_3}$ is defined to contain all type-(iii) states and all type-(ii) states $\ket{\psi_2(\alpha_1';\alpha_2')}$ such that $\alpha'_2\neq\frac{\alpha'_1(\alpha_2-1)+\alpha_1-\alpha_2}{\alpha_1-1}$.

\section{Decomposition of states in $M_{A_3}$}\label{app:decomp_MES_A3}

In Sec.\;\ref{sec:MES_A3}, we stated that any states $\ket{\psi(\alpha_1,\alpha_2,\beta_1,\beta_2)}$ $\propto \text{diag}(\sqrt{\alpha_1},\sqrt{\alpha_2},1)\otimes\text{diag}(\sqrt{\beta_1},\sqrt{\beta_2},1) \otimes \mathbbm{1}\ket{A_3}$ \cite{footnote:ReorderAlpha2Beta1} in $M_{A_3}$ can be prepared with simple bipartite entanglement resources. More precisely, one can show that
\begin{align}
    U_{13}U_{12}\ket{\widetilde{+}}_1\ket{\psi_s}_{23} =V\otimes U_2^\dagger\otimes U_3^\dagger\ket{\psi(\alpha_1,\alpha_2,\beta_1,\beta_2)}
\end{align}
where $\ket{\widetilde{+}}_1=\frac{1}{\sqrt{3}}(\ket{0}+\ket{1}+\ket{2})$, $\ket{\psi_s}_{23}=\sqrt{\lambda_+}\ket{00}+\sqrt{\lambda_-}\ket{11}$, $U_{1j}=\sum_{k=0}^2|k\rangle\langle k|\otimes (U_j^\dagger D_\omega U_j)^k$ for $j=2,3$, $D_\omega=\text{diag}(1,e^{-i\frac{2\pi}{3}},e^{i\frac{2\pi}{3}})$, and $V=\frac{1}{\sqrt{3}}\begin{pmatrix}
1 & 1 & 1\\
1 & \omega & \omega^2\\
1 & \omega^2 & \omega
\end{pmatrix}$ with $\omega=e^{i\frac{2\pi}{3}}$. We now give the exact forms of $\lambda_\pm$, and unitaries $U_2$ and $U_3$ which we omitted in Sec.\;\ref{sec:MES_A3}.

For states $\ket{\psi(\alpha_1,\alpha_2,\beta_1,\beta_2)}$ with $\beta_1\neq\beta_2$, $\lambda_\pm=\frac{1\pm\sqrt{2\chi-1}}{2}$ with $\chi=\frac{\gamma^2-2(\alpha_1+\alpha_2+1)[(\alpha_1+\beta_1)\beta_2+\alpha_2\beta_1]}{\gamma^2}$ and $\gamma=\alpha_1+\alpha_2+\beta_1+\beta_2+\alpha_1\beta_2+\alpha_2\beta_1$. The unitary $U_2=\sigma\begin{pmatrix}
-\frac{\sqrt{\alpha_2\beta_1}x_{+,2}}{\mathcal{N}^{(1)}_+} & -\frac{\sqrt{\alpha_2\beta_1}x_{-,2}}{\mathcal{N}^{(1)}_-} & \frac{1}{\mathcal{N}^{(1)}_0}\sqrt{\frac{\alpha_1}{\beta_1}} \\
\frac{\sqrt{\alpha_1\beta_2}x_{+,1}}{\mathcal{N}^{(1)}_+} & \frac{\sqrt{\alpha_1\beta_2}x_{-,1}}{\mathcal{N}^{(1)}_-} & \frac{1}{\mathcal{N}^{(1)}_0}\sqrt{\frac{\alpha_2}{\beta_2}} \\
\frac{\sqrt{\alpha_1\alpha_2}(\beta_1-\beta_2)}{\mathcal{N}^{(1)}_+} & \frac{\sqrt{\alpha_1\alpha_2}(\beta_1-\beta_2)}{\mathcal{N}^{(1)}_-} & \frac{1}{\mathcal{N}^{(1)}_0} \\
\end{pmatrix}$ with $\sigma=\text{sgn}(\beta_1-\beta_2)$, $x_{\pm,i}=\gamma\lambda_\pm-\alpha_1-\alpha_2-\beta_i$ for $i\in\{1,2\}$, $\mathcal{N}_0^{(m)}=\sqrt{1+\frac{\alpha_1}{\beta_1^{m}}+\frac{\alpha_2}{\beta_2^{m}}}$ and $\mathcal{N}_\pm^{(m)}=\sqrt{\alpha_1\beta_2^{m} x_{\pm,1}^2+\alpha_2\beta_1^{m} x_{\pm,2}^2+\alpha_1\alpha_2(\beta_1-\beta_2)^2}$ and $U_3=\begin{pmatrix}
-\frac{\sqrt{\alpha_2}x_{-,2}}{\mathcal{N}^{(0)}_-} & \frac{\sqrt{\alpha_2}x_{+,2}}{\mathcal{N}^{(0)}_+}
& \frac{\sqrt{\alpha_1}}{\mathcal{N}^{(0)}_0}\\
\frac{\sqrt{\alpha_1}x_{-,1}}{\mathcal{N}^{(0)}_-} & -\frac{\sqrt{\alpha_1}x_{+,1}}{\mathcal{N}^{(0)}_+} & \frac{\sqrt{\alpha_2}}{\mathcal{N}^{(0)}_0}\\
\frac{\sqrt{\alpha_1\alpha_2}(\beta_1-\beta_2)}{\mathcal{N}^{(0)}_-} & -\frac{\sqrt{\alpha_1\alpha_2}(\beta_1-\beta_2)}{\mathcal{N}^{(0)}_+} & \frac{1}{\mathcal{N}^{(0)}_0}
\end{pmatrix}$. 

For any state $\ket{\psi(\alpha_1,\alpha_2,\beta_1,\beta_2)}$ with $\beta_1=\beta_2=\beta$ \cite{footnote:DecompMA3_b1=b2}, the preparation procedure is the same as above, except $\lambda_+=\frac{(\alpha_1+\alpha_2+1)\beta}{\gamma}$, \mbox{$\lambda_-=1-\lambda_+$}, $U_2=\begin{pmatrix}
-\frac{\sqrt{\alpha_2}}{\mathcal{N}'} & -\frac{\sqrt{\alpha_1\beta}}{\mathcal{M}'^{(1)}} & \frac{1}{\mathcal{N}^{(1)}_0}\sqrt{\frac{\alpha_1}{\beta}} \\
\frac{\sqrt{\alpha_1}}{\mathcal{N}'} & -\frac{\sqrt{\alpha_2\beta}}{\mathcal{M}'^{(1)}} & \frac{1}{\mathcal{N}^{(1)}_0}\sqrt{\frac{\alpha_2}{\beta}} \\
0 & \frac{\alpha_1+\alpha_2}{\mathcal{M}'^{(1)}} & \frac{1}{\mathcal{N}^{(1)}_0}
\end{pmatrix}$ and $U_3=\begin{pmatrix}
-\frac{\sqrt{\alpha_1}}{\mathcal{M}'^{(0)}} & \frac{\sqrt{\alpha_2}}{\mathcal{N}'}
& \frac{\sqrt{\alpha_1}}{\mathcal{N}^{(0)}_0}\\
-\frac{\sqrt{\alpha_2}}{\mathcal{M}'^{(0)}} & -\frac{\sqrt{\alpha_1}}{\mathcal{N}'} & \frac{\sqrt{\alpha_2}}{\mathcal{N}^{(0)}_0}\\
\frac{\alpha_1+\alpha_2}{\mathcal{M}'^{(0)}} & 0 & \frac{1}{\mathcal{N}^{(0)}_0}
\end{pmatrix}$ with $\mathcal{N}'=\sqrt{\alpha_1+\alpha_2}$ and $\mathcal{M}'^{(m)}=\mathcal{N}'\sqrt{\alpha_1+\alpha_2+\beta^m}$.


\begin{thebibliography}{99}


\bibitem{EntQKD} A. K. Ekert, \href{https://doi.org/10.1103/PhysRevLett.67.661}{Phys. Rev. Lett. \textbf{67}, 661 (1991)}.

\bibitem{Gottesman} D. Gottesman, \href{https://doi.org/10.48550/arXiv.quant-ph/9705052}{\textit{Stabilizer Codes and Quantum Error Correction}, Ph.D. Thesis, California Institute of Technology, 1997}.

\bibitem{SecretSharing} M. Hillery, V. Bu\v{z}ek, and A. Berthiaume, \href{https://doi.org/10.1103/PhysRevA.59.1829}{Phys. Rev. A \textbf{59}, 1829 (1999)}.

\bibitem{MBC} R. Raussendorf and H. J. Briegel, \href{https://doi.org/10.1103/PhysRevLett.86.5188}{Phys. Rev. Lett. \textbf{86}, 5188 (2001)}.

\bibitem{MetrologyGHZ} V. Giovannetti, S. Lloyd, and L. Maccone, \href{https://doi.org/10.1126/science.1104149}{Science \textbf{306}, 1330 (2004)}.

\bibitem{ByzantineGHZ} M. Ben-Or and A. Hassidim, \href{https://doi.org/10.1145/1060590.1060662}{Fast quantum byzantine agreement, in \textit{Proceedings of the Thirty-Seventh Annual ACM Symposium on Theory of Computing}, STOC ’05 (Association for Computing Machinery, New York, NY, USA, 2005) p.\;481–485}.

\bibitem{reviewmps} J. I. Cirac, D. P\'erez-Garc\'{\i}a, N. Schuch, and F. Verstraete, \href{https://doi.org/10.1103/RevModPhys.93.045003}{Rev. Mod. Phys. \textbf{93}, 045003 (2021)}.

\bibitem{Horodeckis} R. Horodecki, P. Horodecki, M. Horodecki and K. Horodecki, \href{https://doi.org/10.1103/RevModPhys.81.865}{Rev. Mod. Phys. \textbf{81}, 865 (2009)}.

\bibitem{ResourceTheory} E. Chitambar and G. Gour, \href{https://doi.org/10.1103/RevModPhys.91.025001}{Rev. Mod. Phys. \textbf{91}, 025001 (2019)}.

\bibitem{Nielsen} M. A. Nielsen, \href{https://doi.org/10.1103/PhysRevLett.83.436}{Phys. Rev. Lett. \textbf{83}, 436 (1999)}.

\bibitem{3qubitSLOCC} W. D\"{u}r, G. Vidal, and J. I. Cirac, \href{https://doi.org/10.1103/PhysRevA.62.062314}{Phys. Rev. A \textbf{62}, 062314 (2000)}.

\bibitem{4qubitSLOCC} F. Verstraete, J. Dehaene, B. De Moor, and H. Verschelde, \href{https://doi.org/10.1103/PhysRevA.65.052112}{Phys. Rev. A \textbf{65}, 052112 (2002)}.

\bibitem{footnote:infSLOCC} Notice that Ref.~\cite{4qubitSLOCC} provides 9 families of 4-qubit states but some of these families are collections of an infinite number of inequivalent SLOCC classes (see also, e.g., Chapter 14 in Ref.~\cite{GourBook}).

\bibitem{GourBook} G. Gour, \href{https://doi.org/10.48550/arXiv.2402.05474}{\textit{Resources of the Quantum World}. arXiv:2402.05474v1 [quant-ph] (2024).}

\bibitem{GourWallach} G. Gour and N. R. Wallach, \href{https://doi.org/10.1088/1367-2630/13/7/073013}{New J. Phys. \textbf{13} 073013 (2011)}.

\bibitem{SEPeqn} M. Hebenstreit, M. Englbrecht, C. Spee, J. I. de Vicente, and B. Kraus, \href{https://doi.org/10.1088/1367-2630/abe60c}{New J. Phys. \textbf{23}, 033046 (2021)}.

\bibitem{ProbStep1} C. Spee, J. I. de Vicente, D. Sauerwein, and B. Kraus, \href{https://doi.org/10.1103/PhysRevLett.118.040503}{Phys. Rev. Lett. \textbf{118}, 040503 (2017)}.

\bibitem{ProbStep2} J. I. de Vicente, C. Spee, D. Sauerwein, and B. Kraus, \href{https://doi.org/10.1103/PhysRevA.95.012323}{Phys. Rev. A \textbf{95}, 012323 (2017)}.

\bibitem{MESpaper} J. I. de Vicente, C. Spee, and B. Kraus, \href{https://doi.org/10.1103/PhysRevLett.111.110502}{Phys. Rev. Lett. \textbf{111}, 110502 (2013)}.

\bibitem{Isolation1} G. Gour, B. Kraus, and N.~R. Wallach, \href{https://doi.org/10.1063/1.5003015}{J. Math. Phys. \textbf{58}, 092204 (2017)}.

\bibitem{Isolation2} D. Sauerwein, N.~R. Wallach, G. Gour, and B. Kraus, \href{https://doi.org/10.1103/PhysRevX.8.031020}{Phys. Rev. X \textbf{8}, 031020 (2018)}.

\bibitem{turgutGHZ} S. Turgut, Y. G\"{u}l, and N. K. Pak, \href{https://doi.org/10.1103/PhysRevA.81.012317}{Phys. Rev. A \textbf{81}, 012317 (2010)}.

\bibitem{turgutW} S. K{\i}nta\c{s} and S. Turgut, \href{https://doi.org/10.1063/1.3481573}{J. Math. Phys. \textbf{51}, 092202 (2010)}.

\bibitem{4qubitMES} C. Spee, J. I. de Vicente, and B. Kraus, \href{https://doi.org/10.1063/1.4946895}{J. Math. Phys. \textbf{57}, 052201 (2016)}.

\bibitem{locc3qutrit} M. Hebenstreit, C. Spee, and B. Kraus, \href{https://doi.org/10.1103/PhysRevA.93.012339}{Phys. Rev. A \textbf{93}, 012339 (2016)}.

\bibitem{MatthiasStab} M. Englbrecht and B. Kraus, \href{https://doi.org/10.1103/PhysRevA.101.062302}{Phys. Rev. A \textbf{101}, 062302 (2020)}.

\bibitem{loccmps1} D. Sauerwein, A. Molnar, J. I. Cirac, and B. Kraus, \href{https://doi.org/10.1103/PhysRevLett.123.170504}{Phys. Rev. Lett. \textbf{123}, 170504 (2019)}.

\bibitem{loccmps2} M. Hebenstreit, D. Sauerwein, A. Molnar, J. I. Cirac, and B. Kraus, \href{https://doi.org/10.1103/PhysRevA.105.032424}{Phys. Rev. A \textbf{105}, 032424 (2022)}.

\bibitem{OurSymmPaper} M. Hebenstreit, C. Spee, N. K. H. Li, B. Kraus, J. I. de Vicente, \href{https://doi.org/10.1103/PhysRevA.105.032458}{Phys. Rev. A \textbf{105}, 032458 (2022)}.

\bibitem{yamasaki} H. Yamasaki, A. Soeda, and M. Murao, \href{https://doi.org/10.1103/PhysRevA.96.032330}{Phys. Rev. A \textbf{96}, 032330 (2017)}.

\bibitem{NonTrivial4round} C. Spee and T. Kraft, \href{https://doi.org/10.48550/arXiv.2105.01090}{arXiv:2105.01090 [quant-ph] (2021)}.

\bibitem{QCommW} W. Jian, Z. Quan, and T. Chao-Jing, \href{https://doi.org/10.1088/0253-6102/48/4/013}{Commun. Theor. Phys. \textbf{48}, 637 (2007)}.

\bibitem{RobustLossW} W. D\"{u}r, \href{https://doi.org/10.1103/PhysRevA.63.020303}{Phys. Rev. A \textbf{63}, 020303(R) (2001)}.

\bibitem{AdanAntiSymmAppl} A. Cabello, \href{https://doi.org/10.1103/PhysRevLett.89.100402}{Phys. Rev. Lett. \textbf{89}, 100402 (2002)}.

\bibitem{ByzantineAgree} M. Fitzi, N. Gisin, and U. Maurer, \href{https://doi.org/10.1103/PhysRevLett.87.217901}{Phys. Rev. Lett. \textbf{87}, 217901 (2001)}.

\bibitem{InvertUnitary} M. T. Quintino, Q. Dong, A. Shimbo, A. Soeda, and M. Murao, \href{https://doi.org/10.1103/PhysRevLett.123.210502}{Phys. Rev. Lett. \textbf{123}, 210502 (2019)}.

\bibitem{InvertIsometry} S. Yoshida, A. Soeda, and M. Murao, \href{https://doi.org/10.22331/q-2023-03-20-957}{Quantum \textbf{7}, 957 (2023)}.

\bibitem{LP} H.-K. Lo and S. Popescu, \href{https://doi.org/10.1103/PhysRevA.63.022301}{Phys. Rev. A \textbf{63}, 022301 (2001)}.

\bibitem{footnote:NOT1round} Notice that the examples of conversions that cannot be achieved by concatenating one-round protocols do not prove this. This is because the output state is automatically not weakly isolated (it must be finite-round reachable) and the input state can be one-round convertible to a different state.

\bibitem{SchmidtMeas} J. Eisert and H. J. Briegel, \href{https://doi.org/10.1103/PhysRevA.64.022306}{Phys. Rev. A \textbf{64}, 022306 (2001)}.

\bibitem{footnote:LocalSymm_SL} This is because any matrix $\bigotimes_{j=1}^n X^{(j)}\in \bigotimes_{i=1}^n GL(d_i,\mathbb{C})$ is equal to the tensor product between $\frac{X^{(j)}}{\det(X^{(j)})^{1/d_j}}\in SL(d_j,\mathbb{C})$ for any $n-1$ indices $j$ and $\prod_{j\neq k}\det(X^{(j)})^{1/d_j} X^{(k)}$ for the remaining index $k$.

\bibitem{SEP>LOCCRef1} C. H. Bennett, D. P. DiVincenzo, C. A. Fuchs, T. Mor, E. Rains, P. W. Shor, J. A. Smolin, and W. K. Wootters, \href{https://doi.org/10.1103/PhysRevA.59.1070}{Phys. Rev. A \textbf{59}, 1070 (1999)}.

\bibitem{DonaldHorodecki} M. J. Donald, M. Horodecki, and O. Rudolph, \href{https://doi.org/10.1063/1.1495917}{J. Math. Phys. \textbf{43}, 4252 (2002)}.

\bibitem{LOCChardEric} E. Chitambar, \href{https://doi.org/10.1103/PhysRevLett.107.190502}{Phys. Rev. Lett. \textbf{107}, 190502 (2011)}.

\bibitem{LOCC_Everything} E. Chitambar, D. Leung, L. Man\v{c}inska, M. Ozols, and A. Winter, \href{https://doi.org/10.1007/s00220-014-1953-9}{Commun. Math. Phys. \textbf{328}, 303 (2014)}, and references therein.

\bibitem{footnote:QuasiComm} We say a matrix $X$ quasi-commutes with another matrix $A$ if and only if $X^\dagger AX= kA\propto A$ for some $k\in\mathbb{C}$.

\bibitem{WGHZsymm} F. Verstraete, J. Dehaene, and B. De Moor, \href{https://doi.org/10.1103/PhysRevA.65.032308}{Phys. Rev. A \textbf{65}, 032308 (2002)}.

\bibitem{footnote:PosOpNotCommUs} More precisely, $P$ can be chosen as $P=|v\rangle\langle v|+\identity$, where $\ket{v}\in\mathbb{C}^d$ is not an eigenvector of any $U_i\in\mathcal{F}$. Such a vector always exists as no finite-dimensional vector space over $\mathbb{C}$ is a finite union of proper subspaces (see e.g., Ref.\;\cite{VecSpaceNOTfiniteUnion}).

\bibitem{VecSpaceNOTfiniteUnion} A. Khare, \href{https://doi.org/10.1016/j.laa.2009.06.001}{Linear Algebra and its Applications \textbf{431}(9), 1681-1686 (2009).}

\bibitem{footnote:MA3_zero_measure} This can be easily seen as follows. First, due to the symmetry of the state, it is easy to see that any state in the SLOCC class is LU equivalent to $\sqrt{G_1}\otimes\sqrt{D_2}\otimes\identity\ket{A_3}$ [see Eq.\;(\ref{eq:LUequivSLOCC_A3})] where $G_1>0$ and $D_2=diag(\alpha_2,\beta_2,1) >0$. Moreover, using the symmetry $U^{\otimes3}$ of $\ket{A_3}$, where $U=diag(e^{i\theta},e^{i\varphi},e^{-i(\theta+\varphi)})$ with $\theta=-\frac{\arg(\gamma_1)+\arg(\delta_1)}{3}$, $\varphi=\frac{2\arg(\gamma_1)-\arg(\delta_1)}{3}$, $\gamma_1=(G_1)_{12}$ and $\delta_1=(G_1)_{13}$, leads to a state of the same form as above, but with $G_1$ replaced by $U G_1 U^\dagger$, whose entries $(1,2)$ and $(1,3)$ are larger than or equal to zero. Hence, the states are (up to LU) parameterized by 8 parameters.

\bibitem{Decomp3qubits} J. I. de Vicente, T. Carle, C. Streitberger, and B. Kraus, \href{https://doi.org/10.1103/PhysRevLett.108.060501}{Phys. Rev. Lett. \textbf{108}, 060501 (2012)}.

\bibitem{SubradianEntAtoms} M. Hebenstreit, B. Kraus, L. Ostermann, and H. Ritsch, \href{https://doi.org/10.1103/PhysRevLett.118.143602}{Phys. Rev. Lett. \textbf{118}, 143602 (2017)}.

\bibitem{footnote:ReorderAlpha2Beta1} Note that we exchange the order of $\alpha_2$ and $\beta_1$ here as opposed to the notation that we use in Observation \ref{obs:DiagStatesReachAll} to denote the states in $M_{A_3}$.

\bibitem{A3vsAngeq4} F. Bernards and O. G\"{u}hne, \href{https://doi.org/10.1063/5.0159105}{J. Math. Phys. \textbf{65}, 012201 (2024)}.

\bibitem{footnote:BBinv} The argument we use here to show that $B\otimes B^{-1}\otimes\identity^{\otimes n-2}\in\mathcal{S}_{\ket{A_n}}$ is the same argument used in Ref.\;\cite{MigdalSymm} (Sec.\;II) to prove that permutation-symmetric states have symmetries of the form $B\otimes B^{-1}\otimes\identity^{\otimes n-2}$.

\bibitem{MigdalSymm} P. Migda\l{}, J. Rodriguez-Laguna, and M. Lewenstein, \href{https://doi.org/10.1103/PhysRevA.88.012335}{Phys. Rev. A \textbf{88}, 012335 (2013)}.

\bibitem{footnote:ZariskiEuclidean} See p.8 of Ref.\;\cite{ZariskiClosed} for the fact that Zariski closure on $\mathbb{C}^d$ implies Euclidean closure on $\mathbb{C}^d$.

\bibitem{ZariskiClosed} K. E. Smith, L. Kahanp\"{a}\"{a}, P. Kek\"{a}l\"{a}inen, and W. Traves, \href{https://doi.org/10.1007/978-1-4757-4497-2}{\textit{An invitation to algebraic geometry}, Springer New York, 2000}.

\bibitem{BolzanoWeierstrass} P. M. Fitzpatrick, \textit{Advanced Calculus (2nd ed.)}, Thomson Brooks/Cole, 2006.

\bibitem{footnote:BW} It is easy to see that the Bolzano-Weierstrass theorem also applies to bounded sequences in $\mathbb{C}^d$ by viewing them as sequences in $\mathbb{R}^{2d}$.

\bibitem{ClosedSubgroupSU2} J. Mickelsson, J. Niederle, \href{https://doi.org/10.1007/BF01646787}{Commun. Math. Phys. 16, 191–206 (1970)}.

\bibitem{footnote:LUequiv} The considered state might then be LU-equivalent to the initial state.

\bibitem{footnote:consistencyCond} Note that if there exists a consistency condition with $x_1^{(\lambda)}=0$ and $x_2^{(\lambda)}\neq0$ while $\theta$ is an irrational multiple of $\pi$, then the system of equations is inconsistent.

\bibitem{footnote:how_to_get_exp_equal} We obtain Eq.\;(\ref{eq:z&ExpPhases}) by first multiplying each equation in $\mathbf{B}\vec{\alpha'}=\vec{\varphi'}+\vec{\theta}$ by a factor $z\in\mathbb{C}$ on both sides, and then exponentiating both sides of each equation.

\bibitem{footnote:WI_n>=5} Although the existence of weak isolation was proven for $(n\geq5)$-qudit SLOCC classes of non-exceptionally symmetric (non-ES) states, which are permutation-symmetric states with only symmetries of the form $S^{\otimes n}$, in Lemma 4 of Ref.\;\cite{OurSymmPaper}, the proof also applies to any $n$-qudit SLOCC class that has a state stabilized only by $S^{\otimes n}$ as long as $n\geq5$.

\bibitem{Sakurai} J. J. Sakurai. \textit{Modern Quantum Mechanics (Revised Edition)}. Addison Wesley, 1993.

\bibitem{footnote:pert_converge} The perturbation series for $E_p$ and $\ket{e_p}$ are guaranteed to converge because the matrix $H_0 + \epsilon V(\epsilon)$ is Hermitian and analytic (i.e., every matrix entry is analytic) in the neighbourhood of $\epsilon=0$ where $\epsilon\in\mathbb{R}$ and by Rellich's Theorem \cite{Rellich,FriedlandBook}, all the eigenvalues and entries of the eigenvectors must also be analytic in the neighbourhood of $\epsilon=0$.

\bibitem{Rellich} F. Rellich, \textit{Perturbation Theory of Eigenvalue Problems}, Gordon \& Breach, New York, 1969.

\bibitem{FriedlandBook} S. Friedland, \textit{Matrices: Algebra, Analysis and Applications}, World Scientific, 2015.

\bibitem{footnote:pert} Since the perturbation series of eigenvalue $E_p$ converges in $\epsilon$, one can choose $\epsilon$ small enough such that the absolute value of the sum of the $\mathcal{O}(\epsilon^2)$ terms is strictly less than $\frac{1}{2}(\frac{1}{r}-1)$ for $E_0$ and $\frac{1}{2r^{p-1}}(\frac{1}{r}-1)$, which is half the distance between the $(p-1)$-th and the $p$-th unperturbed eigenvalues, for $E_p$ where $p\in\{1,\ldots,d-1\}$ and $0<r<1$.

\bibitem{footnote:Ep_non-deg} Since the perturbation series of eigenvector $\ket{e_p}$ converges in $\epsilon$, one can choose $\epsilon$ small enough such that the absolute value of the sum of the $\mathcal{O}(\epsilon^2)$ terms for $\langle0|e_p\rangle$ is strictly smaller than 1 for $\ket{e_0}$ and $|\frac{\epsilon\sqrt{r}^{p}}{(1-r^p)(1-\omega^{-p})}|$ for every $\ket{e_p}$ where $p\in\{1,\ldots,d-1\}$, while keeping $\{E_p\}$ non-degenerate \cite{footnote:pert}.

\bibitem{footnote:QCwithDiag} It is easy to see the following: If $S\in SL(d,\mathbb{C})$ quasi-commutes with two $d\times d$ positive definite diagonal matrices $\Lambda$ and $D$ such that $\Lambda\not\propto D$, $S$ must be a direct sum of block matrices that act on the (degenerate) eigenspaces of $\Lambda^{-1}D$. Moreover, for each block in $S$ of which the range lies within the (degenerate) eigenspace of a single eigenvalue of $\Lambda$ or $D$, the block is unitary.

\bibitem{footnote:SEP_A3_singular} When multiplying Eq.\;(\ref{eq:SEP}) by $\ket{A_3}$ (which is the seed state $\ket{\Psi_s}$ here) where $g=\sqrt{\Delta'}\otimes \sqrt{D'}\otimes\identity$ and $h=\sqrt{\Delta}\otimes \sqrt{D}\otimes\identity$, the term $g^\dagger\sum_q N_q^\dagger N_q g\ket{A_3}=0$ because all $N_q\in\mathcal{N}_{g\Psi_s}$ satisfy $N_q g\ket{A_3}=0$ by definition.

\bibitem{footnote:A3_in_MES} Alternatively, one can see this by showing that $\ket{A_3}$ is the only state among all the MES candidates in Observation \ref{obs:DiagStatesReachAll} that has a completely mixed single qutrit reduced density matrix for all 3 bipartite splittings. Applying Nielsen's theorem \cite{Nielsen} to all 3 bipartitions proves that $\ket{A_3}$ is indeed not LOCC-reachable.

\bibitem{footnote:DecompMA3_b1=b2} The preparation procedure above does not work for $\ket{\psi(\alpha_1,\alpha_2,\beta_1,\beta_2)}$ with $\beta_1=\beta_2$ because one of the columns in $U_2$ and $U_3$ becomes all zeros when $\beta_1=\beta_2$.

\end{thebibliography}
\end{document}